\acrodef{PC}{program counter}
\acrodef{MOP}{microarchitectural optimization}
\acrodef{CFL}{control-flow leakage}
\acrodef{CFG}{control-flow graph}
\acrodef{CF}{control flow}
\newcolumntype{L}[1]{>{\raggedright\let\newline\\\arraybackslash\hspace{0pt}}p{#1}}
\newcolumntype{C}[1]{>{\centering\let\newline\\\arraybackslash\hspace{0pt}}p{#1}}
\newcolumntype{R}[1]{>{\raggedleft\let\newline\\\arraybackslash\hspace{0pt}}p{#1}}
\crefname{lstlisting}{Listing}{Listings}
\Crefname{lstlisting}{Listing}{Listings}
\colorlet{instcolor}{Blue}
\lstdefinelanguage[RISC-V]{Assembler}
{
  alsoletter={.}, %
  alsodigit={0x}, %
  morekeywords=[1]{ %
    lb, lh, lw, lbu, lhu,
    sb, sh, sw,
    sll, slli, srl, srli, sra, srai,
    add, addi, sub, lui, auipc,
    xor, xori, or, ori, and, andi,
    slt, slti, sltu, sltiu,
    beq, bne, blt, bge, bltu, bgeu,
    j, jr, jal, jalr, ret,
    scall, break, nop, seqz, not,
  },
  morekeywords=[2]{ %
    .align, .ascii, .asciiz, .byte, .data, .double, .extern,
    .float, .globl, .half, .kdata, .ktext, .set, .space, .text, .word
  },
  morekeywords=[3]{ %
    zero, ra, sp, gp, tp, s0, fp,
    x0, x1, x2,
    t0, t1, t2, t3, t4, t5, t6,
    s1, s2, s3, s4, s5, s6, s7, s8, s9, s10, s11,
    a0, a1, a2, a3, a4, a5, a6, a7,
    ft0, ft1, ft2, ft3, ft4, ft5, ft6, ft7,
    fs0, fs1, fs2, fs3, fs4, fs5, fs6, fs7, fs8, fs9, fs10, fs11,
    fa0, fa1, fa2, fa3, fa4, fa5, fa6, fa7
  },
  morecomment=[l]{\#},  %
  morestring=[b]"      %
}
\lstdefinelanguage[myasm]{RISC-V}[RISC-V]{Assembler}
{
  alsoletter={.}, %
  alsodigit={0x}, %
  morekeywords=[1]{ %
    br, jmp, j, if, else, then, load, store
  },
  morekeywords=[3]{ %
    c, d, e, secret, a, x, y, r
  },
  keywords=[4]{
    s.br, s.call,
    lo.br, lo.call, lo.ret,
    lo.j
  },
}
\definecolor{mauve}{rgb}{0.58,0,0.82}
\itshape\color{green!50!black},  %
\lstdefinestyle{myasmstyle}{
  language={[myasm]RISC-V},
  keywordstyle=[4]\bfseries\color{instcolor},        %
  numbers=none,
  numberstyle=\tiny,
  frame=lines,                                      %
  xleftmargin=0em,
  framexleftmargin=0em,
  escapeinside={<@}{@>},
  mathescape
}
\crefname{sublisting}{Listing}{Listings}
\Crefname{sublisting}{Listing}{Listings}
\renewcommand\p@subfigure{}
\newif\iftodos
\newif\iftechreport
\newcommand\hans[1]{\textcolor{blue}{Hans: #1}}
\newcommand\marton[1]{\textcolor{magenta}{Marton: #1}}
\newcommand\lesly[1]{\textcolor{orange}{Lesly: #1}}
\newcommand\frank[1]{\textcolor{purple}{Frank: #1}}
\newcommand\hans[1]{}
\newcommand\marton[1]{}
\newcommand\lesly[1]{}
\newcommand\frank[1]{}
\newcommand{\ie}{i.e.,\xspace}
\newcommand{\eg}{e.g.,\xspace}
\newcommand{\cf}{cf.\xspace}
\newcommand{\wrt}{w.r.t.\xspace}
\newcommand{\aka}{a.k.a.\xspace}
\newcommand\angles[1]{\langle #1 \rangle}
\newcommand\func[1]{\textsl{#1}}
\newcommand\listinglabel[1]{\lstinline[style=myasmstyle]!#1!}
\newcommand\bb[1]{\ensuremath{B_{\text{\listinglabel{#1}}}}}
\newcommand\bbidx[1]{\func{idx}(#1)}
\newcommand\entry[1]{\func{entry}(#1)}
\newcommand\exit[1]{\func{exit}(#1)}
\newcommand\proofcase[1]{\medskip\noindent\textit{#1}}
\crefname{algocf}{Algorithm}{Algorithms}
\Crefname{algocf}{Algorithm}{Algorithms}
\crefname{hypothesis}{Hyp.}{Hyp.}
\Crefname{hypothesis}{Hypothesis}{Hypotheses}
\newcommand{\cmark}{\text{\ding{51}}}%
\newcommand{\xmark}{\text{\ding{55}}}%
\newcommand{\ccmark}{\textcolor{Green}{\cmark}}%
\newcommand{\cxmark}{\textcolor{Red}{\xmark}}%
\newcommand{\mi}[1]{\ensuremath{\mathit{#1}}}
\newcommand{\mr}[1]{\ensuremath{\mathrm{#1}}}
\newcommand{\mf}[1]{\ensuremath{\mathbf{#1}}}
\newcommand{\ms}[1]{\ensuremath{\mathsf{#1}}}
\newcommand{\neutcol}[0]{black}
\newcommand{\stlccol}[0]{RoyalBlue}
\newcommand{\ulccol}[0]{RedOrange}
\newcommand{\col}[2]{\ensuremath{{\color{#1}{#2}}}}
\newcommand{\neut}[1]{{\mi{\col{\neutcol}{\protect\mbox{\unboldmath{$#1$}}}}}}
\newcommand{\src}[1]{\ms{\col{\stlccol}{#1}}}
\newcommand{\srcu}[1]{{\color{\stlccol}\ensuremath{\mbox{\unboldmath{\(\mathsf{#1}\)}}}}}
\newcommand{\trgb}[1]{\mf{\bm{\mr{\col{\ulccol }{#1}}}}}
\newcommand{\trg}[1]{{\mf{\col{\ulccol }{#1}}}}
\newcommand\requirements[2]{\hyperlink{#1}{\textcolor{black}{\textbf{#2}}}}
\newcommand\srbranch{\requirements{sreq:branches}{SR1}}
\newcommand\srbalanced{\requirements{sreq:balanced}{SR2}}
\newcommand\srfold{\requirements{sreq:fold}{SR3}}
\newcommand\hrcontract{\requirements{hreq:contract}{HR1}}
\newcommand\hrslice{\requirements{hreq:slice_granularity}{HR2}}
\newcommand\hruniform{\requirements{hreq:uniform_access_pattern}{HR2a}}
\newcommand\hrsafe{\requirements{hreq:safe}{HR3}}
\newcommand\ocinterval[1]{[#1)}
\newcommand\mydef{\triangleq}
\newcommand\inst[2]{\def\param{#2}\textcolor{instcolor}{\texttt{\textnormal{#1}}}\ifx\param\empty{}\else{\textnormal{ #2}}\fi}
\newcommand\register[1]{\texttt{#1}}
\newcommand\qinst[3]{\def\param{#3}\textcolor{instcolor}{\texttt{#1}\textbf{\texttt{.#2}}}\ifx\param\empty{}\else{~\neut{\textnormal{#3}}}\fi}
\newcommand\lvl{L}
\newcommand\libra{Libra}
\newcommand\lob{\qinst{\qlo}{br}{}}
\newcommand\tlob{\qinst{\qtlo}{br}{}}
\newcommand\sasm[0]{\ensuremath{\srcu{asm}}}
\newcommand\tasm[0]{\ensuremath{\trgb{asm}}}
\newcommand\transform[0]{\ensuremath{\mathcal{F}}}
\newcommand\progmacro[2]{\def\param{#2}\ifx\param\empty{#1}\else{{#1[#2]}}\fi}
\newcommand\prog[1]{\progmacro{P}{#1}}
\newcommand\sprog[1]{\progmacro{\srcu{P}}{#1}}
\newcommand\tprog[1]{\progmacro{\trgb{P}}{#1}}
\newcommand\Regs{\mathbb{R}}
\newcommand\Vals{\mathbb{V}}
\newcommand\val{\ensuremath{v}}
\newcommand\uop{op\textsubscript{1}}
\newcommand\bop{op\textsubscript{2}}
\newcommand\Loc{\ensuremath \mathbb{L}}
\newcommand\loc{\ensuremath \ell}
\newcommand\sloc{\srcu{\loc}}
\newcommand\tloc{\trgb{\loc}}
\newcommand\qualfont[1]{\textcolor{instcolor}{\textbf{\texttt{#1}}}}
\newcommand\qlo{\qualfont{lo}}
\newcommand\qtlo{\qualfont{tlo}}
\newcommand\qs{\qualfont{s}}
\newcommand\qq{\neut{\mi{q}}} %
\newcommand\aconf[1]{\langle #1 \rangle}
\newcommand\mem{m}
\newcommand\reg{r}
\newcommand\pc{\texttt{pc}}
\newcommand\spc{\srcu{\pc}}
\newcommand\tpc{\trgb{\pc}}
\newcommand\tmem{\trgb{\mem}}
\newcommand\treg{\trgb{\reg}}
\newcommand\smem{\srcu{\mem}}
\newcommand\sreg{\srcu{\reg}}
\newcommand\retstack{\rho}
\newcommand\sretstack{\srcu{\retstack}}
\newcommand\tretstack{\trgb{\retstack}}
\newcommand\conf[1]{\ensuremath\def\param{#1}\ifx\param\empty{\sigma}\else{{\langle #1 \rangle}}\fi}
\newcommand\sconf[0]{\srcu{\sigma}}
\newcommand\tconf[0]{\trgb{\sigma}}
\newcommand\ctx{\texttt{ctx}}
\newcommand\ctxstack{\lambda}
\newcommand\tctxstack{\trgb{\ctxstack}}
\newcommand\bbc{\texttt{bbc}}
\newcommand\off{\texttt{off}}
\newcommand\curOff{\texttt{off}}
\newcommand\offt{\texttt{off}_{\texttt{t}}}
\newcommand\offf{\texttt{off}_{\texttt{f}}}
\newcommand\obs{\ensuremath{o}}
\newcommand\sobs{\ensuremath{\srcu{o}}}
\newcommand\tobs{\ensuremath{\trgb{o}}}
\newcommand\obsfunc{\textsl{obs}}
\newcommand\obsfuncweak{\textsl{obs}^{-}}
\newcommand\obsfuncstrong{\textsl{obs}^{+}}
\newcommand\id[1]{\textcolor{YellowOrange}{\textnormal{\textsl{#1}}}}
\newcommand\indist[0]{\ensuremath{\simeq}}
\newcommand\oni[1]{#1\textnormal{-ONI}}
\newcommand\class[1]{\mathscr{L}\def\param{#1}\ifx\param\empty{}\else(#1)\fi}
\newcommand\Instr[0]{\mathbb{I}}
\newcommand\IuSafe[0]{\Instr^{\ccmark}}
\newcommand\IbSafe[0]{\Instr^{\ccmark\text{-}\ccmark}}
\newcommand\IuUnsafe[0]{\Instr^{\cxmark}}
\newcommand\IlUnsafe[0]{\Instr^{\cxmark\text{-}\ccmark}}
\newcommand\IrUnsafe[0]{\Instr^{\ccmark\text{-}\cxmark}}
\newcommand\IlrUnsafe[0]{\Instr^{\cxmark\text{-}\cxmark}}
\newcommand\CorrRelation[2]{\overset{\text{\tiny{#1}}}{\sim}_{#2}}
\newcommand\libraeq[1]{\CorrRelation{#1}{}}
\newcommand\libraeqstrong[1]{\CorrRelation{#1}{+}}
\newcommand\corrloc[1]{\CorrRelation{#1}{\loc}}
\newcommand\CorrectStack[1]{\CorrRelation{#1}{\text{\tiny\ctx}}}
\newcommand\eval[1]{\xLongrightarrow{\neut{#1}}}
\newcommand\evalconf[4]{#1 {\hspace{3pt}}{\eval{#3}\hspace{-1pt}^{#4}} #2}
\newcommand\isaeval[3]{#2 {\hspace{3pt}}{\xrightarrow[]{#1} {\!}} #3}
\newcommand\seval[1]{\srcu{\eval{#1}}}
\newcommand\sevalconf[4]{#1 {\hspace{3pt}}{\seval{#3}\hspace{-1pt}^{#4}} #2}
\newcommand\teval[1]{\trgb{\eval{#1}}}
\newcommand\tevalconf[4]{#1 {\hspace{3pt}}{\teval{#3}\hspace{-1pt}^{#4}} #2}
\newcommand\eeval[2]{\llparenthesis #1 \rrparenthesis_{#2}}
\newtheorem{lemma}{Lemma}
\theoremstyle{definition}
\newtheorem{hypothesis}{Hypothesis}
\newtheorem{definition}{Definition}
\newtheorem{example}{Example}
\newcommand\pcupdatenct{
  \infer[pc-update]%
  {%
    \prog{\pc} = inst\\
    inst \not\in \{
\text{\qinst{\qq}{br}{}}, \text{\qinst{\qq}{call}{}},
\text{\inst{ret}{}}
\} \\
    \isaeval{inst}{\aconf{\mem, \reg}}{\aconf{\mem', \reg'}}\\
    \pc' = \pc + \bbc\\
  }{%
    \tevalconf
    {\conf{\mem, \reg, \pc, \retstack, \ctxstack \cdot (\bbc, \curOff)}}%
    {\conf{\mem', \reg', \pc', \retstack, \ctxstack \cdot (\bbc, \curOff)}}%
    {\obs{}}{}%
  }%
}
\newcommand\lobranchtrue{
  \infer[lob-true]%
  {%
    \prog{\pc} = \qinst{\qlo}{br}{{\(e\ \offt\ \offf\ \bbc'\)}}\\
    \eeval{\text{e}}{\reg} \neq 0\\
    \pc' = \func{next\_slice}(\pc, \bbc, \curOff) + \offt
  }{%
    \tevalconf
    {\conf{\mem, \reg, \pc, \retstack, \ctxstack \cdot (\bbc, \curOff)}}%
    {\conf{\mem, \reg, \pc', \retstack, \ctxstack \cdot (\bbc', \offt)}}%
    {\obs}{}%
  }%
}
\newcommand\locall{
  \infer[lo-call]%
  {%
    \prog{\pc} = \qinst{\qlo}{call}{{\(b\ \ell\)}}\\
    \off' = (\textsl{if } b = \top \textsl{ then } 0 \textsl{ else } 1)\\
    \pc' = \ell + \off'\\
    \retstack' = \retstack \cdot \pc + \bbc\\
    \ctxstack' = \ctxstack \cdot (\bbc, \curOff) \cdot (2,\off')
  }{%
    \tevalconf%
    {\conf{\mem, \reg, \pc, \retstack, \ctxstack \cdot (\bbc, \curOff)}}%
    {\conf{\mem, \reg, \pc', \retstack', \ctxstack'}}%
    {\obs{}}{}%
  }%
}
\newcommand\return{
  \infer[ret]%
  {%
    \prog{\pc} = \inst{ret}{}\\
    \pc' = \loc\\
  }{%
    \tevalconf%
    {\conf{\mem, \reg, \pc, \retstack \cdot \loc, \ctxstack \cdot (\bbc, \curOff)}}%
    {\conf{\mem, \reg, \pc', \retstack, \ctxstack}}%
    {\obs{}}{}%
  }%
}
\renewcommand\footnotetextcopyrightpermission[1]{}
\renewcommand\@formatdoi[1]{\ignorespaces}
\begin{document}

\iftechreport
\title[Libra: Principled, Secure and Efficient Balanced Execution]{Libra: Architectural Support For Principled, Secure And Efficient Balanced Execution On High-End Processors (Extended Version)}
\else
\title[Libra: Principled, Secure and Efficient Balanced Execution]{Libra: Architectural Support For Principled, Secure And Efficient Balanced Execution On High-End Processors}
\fi

\author{Hans Winderix}
\orcid{0000-0002-0165-7915}
\email{hans.winderix@kuleuven.be}
\affiliation{%
 \institution{DistriNet, KU Leuven}
 \city{Leuven}
 \country{Belgium}
 \postcode{3001}
}

\author{Marton Bognar}
\orcid{0000-0002-8641-7549}
\email{marton.bognar@kuleuven.be}
\affiliation{%
 \institution{DistriNet, KU Leuven}
 \city{Leuven}
 \country{Belgium}
 \postcode{3001}
}

\author{Lesly-Ann Daniel}
\orcid{0000-0002-2772-3722}
\email{lesly-ann.daniel@kuleuven.be}
\affiliation{%
 \institution{DistriNet, KU Leuven}
 \city{Leuven}
 \country{Belgium}
 \postcode{3001}
}

\author{Frank Piessens}
\orcid{0000-0001-5438-153X}
\email{frank.piessens@kuleuven.be}
\affiliation{%
 \institution{DistriNet, KU Leuven}
 \city{Leuven}
 \country{Belgium}
 \postcode{3001}
}

\begin{abstract}

  \Ac{CFL} attacks enable an attacker to expose control-flow decisions of a victim program via side-channel observations.
  \emph{Linearization} (\ie{} elimination) of secret-dependent control flow is the main countermeasure against these attacks, yet it comes at a non-negligible cost.
  Conversely, \emph{balancing} secret-dependent branches often incurs a smaller overhead, but is notoriously insecure on high-end processors.
  Hence, linearization has been widely believed to be \emph{the only} effective countermeasure against \ac{CFL} attacks.
  In this paper, we challenge this belief and investigate an unexplored alternative: how to securely balance secret-dependent branches on higher-end processors?

  We propose \libra{}, a generic and principled hardware-software codesign to efficiently address \ac{CFL} on high-end processors.
  We perform a systematic classification of hardware primitives leaking control flow from the literature, and provide guidelines
  to handle them with our design. Importantly, \libra{} enables secure control-flow balancing without the need to disable performance-critical hardware such as the instruction cache and the prefetcher.
  We formalize the semantics of \libra{} and propose a code transformation algorithm for securing programs, which we prove correct and secure.
  Finally, we implement and evaluate \libra{} on an out-of-order RISC-V processor, showing performance overhead on par with insecure balanced code, and outperforming state-of-the-art linearized code by 19.3\%.
\end{abstract}

\iftechreport
\else
\begin{CCSXML}
<ccs2012>
   <concept>
       <concept_id>10002978.10003001</concept_id>
       <concept_desc>Security and privacy~Security in hardware</concept_desc>
       <concept_significance>500</concept_significance>
       </concept>
   <concept>
       <concept_id>10002978.10002986.10002989</concept_id>
       <concept_desc>Security and privacy~Formal security models</concept_desc>
       <concept_significance>500</concept_significance>
       </concept>
   <concept>
       <concept_id>10002978.10003006.10011608</concept_id>
       <concept_desc>Security and privacy~Information flow control</concept_desc>
       <concept_significance>500</concept_significance>
       </concept>
 </ccs2012>
\end{CCSXML}

\ccsdesc[500]{Security and privacy~Security in hardware}
\ccsdesc[500]{Security and privacy~Formal security models}
\ccsdesc[500]{Security and privacy~Information flow control}

\keywords{Side Channels, Control-Flow Leakage, HW/SW Leakage Contracts, HW/SW Codesign, Secure Compilation, Control-Flow Balancing}
\fi

\maketitle

\section{Introduction}

In recent years, software-based microarchitectural attacks~\cite{ge:2018-survey, lou:2021-a} have emerged as a critical security threat.
When multiple stakeholders run code on the same computing device, this type of side-channel attack makes it possible for an attacker to infer program secrets just by monitoring from software how a victim uses shared hardware such as the cache, branch predictor, or prefetcher.

Of special interest to this work are so-called \emph{\acf{CFL} attacks}~\cite{bognar:2022-mind,chen:2023-afterimage,lee:2017-inferring,molnar:2005-program,puddu:2021-frontal,vanbulck:2018-nemesis,yu:2023-all} whereby an attacker tries to expose the \ac{PC} trace of a victim program via side-channel observations with the aim of revealing the outcome of conditional control-flow decisions.
The program's conditional control flow exposes the outcome of the condition that determines the control flow, which poses a security threat if that condition depends on secret information.
In the presence of a microarchitectural attacker, a program's control flow can, in general, be observed in the microarchitectural state of shared hardware or through contention.

A possible software countermeasure against \ac{CFL} attacks is \emph{control-flow balancing}~\cite{agat:2000-transforming,bognar:2023-microprofiler,dewald:2017-avr,kopf:2007-transformational,pouyanrad:2020-scf,winderix:2021-compiler}, a program transformation which aims to make the execution of all possible targets of a control-transfer instruction appear the same to an attacker.
So far, control-flow balancing has been shown to be secure only for a class of low-power embedded processors~\cite{bognar:2023-microprofiler,winderix:2021-compiler}.
This is because modern superscalar processors feature critical performance-enhancing hardware that maintains state as a function of the \ac{PC}, thus leaking the \ac{PC} in an unbalanceable way when this hardware is shared between different security domains.
For this reason, it is widely accepted that, to counter \ac{CFL} attacks on higher-end processors, programs must be \ac{PC}-secure~\cite{molnar:2005-program}, i.e., their \ac{PC}  should be independent from secret information. \ac{PC}-secure programs are created by avoiding secret-dependent control flow and the techniques for doing so are well-documented in the literature~\cite{borello:2021-constantine,molnar:2005-program,rane:2015-raccoon,vanoverloop:2024-compiler,wu:2018-eliminating}.

Unfortunately, this advice has not been questioned much.
Over the years, it has been evolving into a dogma and it has become an established practice to hardcode it in \emph{constant-time}~\cite{almeida:2016-verifying} source code, preventing the adoption of more relaxed policies (for simpler architectures or for weaker attacker models). Furthermore, this trend creates the fallacy that secret-dependent control flow is inherently insecure and, consequently, it discourages the search for novel mechanisms to securely execute \ac{PC}-insecure programs on higher-end processors.

On the other hand, there still exists a strong desire to keep the secret-dependent control flow for performance reasons, even on high-end processors.
Vendors of cryptographic libraries, for instance, sometimes take the risk and do balance secret-dependent branches~\cite{yu:2023-all} instead of eliminating them.
As another example, numerous offensive research papers have been published that develop new \ac{CFL} attacks, accompanied by ad-hoc defenses, which are later found to be vulnerable by other offensive research, a trend that has been recently described as \emph{the \ac{CFL} arms race}~\cite{yu:2023-all}.

\paragraph{Our Proposal}
In this work, we challenge the widely-held belief that secret-dependent control flow is inherently insecure on high-end processors and propose a well-founded hardware-software codesign for secure and efficient balanced execution.
In contrast to prior works that target a single vulnerability and propose ad-hoc, incremental defenses, we propose a principled solution that addresses the \ac{CFL} problem in a generic way with the goal of ending the \ac{CFL} arms race. Also in contrast to prior works, we do not assume a simple processor pipeline and scheduling but support modern out-of-order processor designs.

We conduct a rigorous analysis of how hardware optimizations leak a program's control flow.
A key finding is that hardware optimizations can be partitioned into two categories; those that yield \emph{balanceable observations} and those that yield \emph{unbalanceable observations}.
Balanceable observations can be securely balanced by software-only approaches.
Unbalanceable observations require hardware support.
Based on the findings of our analysis, we propose \libra{}, a hardware-software security contract that lays the principled foundation for secure balanced execution.
We introduce a novel memory layout, called \emph{folded layout}, and an algorithm for \emph{folding} balanced code regions, which makes it possible to keep enabled performance-critical hardware optimizations without compromising security.
Additionally, we propose an ISA extension for executing folded regions.

In a nutshell, we make the following contributions:
\begin{itemize}
  \item
    A novel hardware-software contract, called \libra{}, for secure and efficient balanced execution (Section~\ref{sec:libra-overview}).
  \item
    A formalization of the ISA-level semantics of \libra{} and security and correctness proofs of our folding algorithm (Section~\ref{sec:libra-formal-semantics}).
  \item
    A characterization of hardware optimizations regarding how they leak a program's control flow (Section~\ref{sec:characterization}).
  \item
    Recommendations for hardware designers wishing to adopt \libra{} to their designs (Section~\ref{sec:characterization}).
  \item
    An implementation of \libra{} on an out-of-order RISC-V core (Section~\ref{sec:implementation}).
  \item
    An experimental evaluation showing that balanced execution is secure and efficient at a low hardware cost (Section~\ref{sec:evaluation}).
\end{itemize}

\paragraph{Additional material}
Our RISC-V implementation and evaluation are archived on Zenodo~\cite{libra-zenodo} and available on GitHub: \url{https://github.com/proteus-core/libra}.
\iftechreport\else
The proofs of \cref{sec:libra-formal-semantics} are available in the companion technical report~\cite{techreport}.
\fi

\section{Terminology and Background}

\subsection{Terminology}
We first define relevant terminology from the fields of graph theory and compiler construction and then introduce some new vocabulary (marked with \(*\)).

\begin{definition}[Basic block]\label{def:basic-block}
A basic block is a straight-line instruction sequence always entered at the beginning and exited at the end.
\end{definition}

\begin{definition}[Control-flow graph]\label{def:control-flow-graph}
  A \ac{CFG} is a directed graph that represents all the paths that might be traversed through a program during its execution. The nodes of a \ac{CFG} represent basic blocks, the edges represent control-flow transfers.
\end{definition}

Without loss of generality, we assume that a \ac{CFG} has a unique \emph{entry} and a unique \emph{exit} block.
We also assume that the last instruction in a basic block is a control-transfer instruction, which designates the possible successor blocks.
We refer to this instruction as the \emph{terminating instruction} of the basic block.
\Cref{fig:level-structure} contains an illustration of a \ac{CFG} with \(\bb{En}\) the entry basic block and \(\bb{Ex}\) the exit basic block.

\begin{figure}
  \begin{minipage}{.45\linewidth}
\begin{lstlisting}[
  style=myasmstyle,
  ]
En: br a0,t,f         <@\tikzmark{bgn1}\tikzmark{end1}@>
 t:    br a1,tt,tf    <@\tikzmark{bgn2}\tikzmark{end2}@>
tt:     add s1,s2,s3  <@\tikzmark{bgn3}@>
        j Ex          <@\tikzmark{end3}@>
tf:     add s2,s3,s4  <@\tikzmark{bgn4}@>
        j Ex          <@\tikzmark{end4}@>
 f:    sub s1,s2,s3   <@\tikzmark{bgn5}@>
       j Ex           <@\tikzmark{end5}@>
Ex: [...]             <@\tikzmark{bgn6}\tikzmark{end6}@>
\end{lstlisting}
  \end{minipage}
  \hfill
  \begin{minipage}{.45\linewidth}
    \begin{tikzpicture}[
    node distance = 1.0em and .2em,
       box/.style = {rectangle, draw, fill=#1,
                     minimum width=7mm, minimum height=5mm}
                        ]
\node (A) [box=white] {\(\bb{En}\)};
\node (B) [box=white,below left=of A] {\(\bb{t}\)};
\node (C) [box=white,below right=of A] {\(\bb{f}\)};
\node (D) [box=white,below left=1.0em and -.8em of B] {\(\bb{tt}\)};
\node (E) [box=white,below right=1.0em and -.8em of B] {\(\bb{tf}\)};
\node (F) [box=white,below =of E] {\(\bb{Ex}\)};

\draw[->] (A) to (B);
\draw[->] (A) to (C);
\draw[->] (B) to (D);
\draw[->] (B) to (E);
\path [->] (C) edge[bend left=30] node {} (F);
\draw[->] (D) to (F);
\draw[->] (E) to (F);
\end{tikzpicture}
  \end{minipage}
  \caption{A program and its \ac{CFG}.}\label{fig:level-structure}
\end{figure}

\begin{definition}[Distance]\label{def:distance}
  The distance between two basic blocks in a \ac{CFG} is the number of edges in a shortest path connecting them.
\end{definition}

In \cref{fig:level-structure}, the distance between the basic blocks \(\bb{En}\) and \(\bb{Ex}\) is 2 (\(\bb{En}\rightarrow\bb{f}\rightarrow\bb{Ex}\)).
The distance between two instructions is defined similarly by considering individual instructions as basic blocks.

\begin{definition}[Postdominance]\label{def:postdominance}
 A basic block \(Y\) postdominates a basic block \(X\) (\ie{} \(Y\) is a postdominator of \(X\)) if all paths from \(X\) to the exit block go through \(Y\).
\end{definition}
The closest postdominator of a basic block is called its \emph{immediate postdominator}.
In \cref{fig:level-structure}, basic block \(\bb{Ex}\) postdominates basic block \(\bb{En}\). It is also the immediate postdominator of \(\bb{En}\).

\begin{definition}[Level structure]\label{def:level-structure}
  The level structure of a \ac{CFG} is a partition of the basic blocks into subsets (levels) that have the same distance from the entry basic block.
\end{definition}

The level structure of the \ac{CFG} in \cref{fig:level-structure} consists of three levels: $L_0 = \left\{\bb{En} \right\}, L_1 = \left\{\bb{t},\bb{f}\right\}, L_2 = \left\{\bb{tt}, \bb{tf}, \bb{Ex}\right\}$.

\begin{definition}[\(*\)Level slice]\label{def:level-slice}
  The set of equidistant instructions for a distance $\delta$ with respect to basic block \(B\) forms the level slice (or simply \emph{slice}) determined by the tuple (\(B\), $\delta$)
\end{definition}

In \cref{fig:level-structure}, the slice of distance 0 is
\(\{\text{\lstinline[style=myasmstyle]{br a0,t,f}}\}\) and the slice of distance 1 is
\(\{\text{\lstinline[style=myasmstyle]{br a1,tt,tf}};\  \text{\lstinline[style=myasmstyle]{sub s1,s2,s3}}\}\) (both relative to \(\bb{En}\)).

\begin{definition}[\(*\)Secret-dependent region]\label{def:secret-dependent-region1}
  The set of basic blocks between a secret-dependent control-transfer instruction $inst$ and its immediate postdominator form the secret-dependent region determined by $inst$.
\end{definition}
We refer to the basic block containing the secret-dependent control-transfer instruction as the \emph{entry block} of the region, and to its immediate postdominator as the \emph{exit block} of the region.
In \cref{fig:level-structure}, if \lstinline{a1} is secret (line \listinglabel{t}), then $\left\{\bb{tt}, \bb{tf}\right\}$ is the secret-dependent region determined by the instruction on line \listinglabel{t}.
The entry block of the region is \(\bb{t}\), the exit block \(\bb{Ex}\).
Similar to the level structure of a \ac{CFG}, we define the \emph{level structure of a secret-dependent region} as the partition of its basic blocks into subsets (levels) that have the same distance from the region's entry block.

\subsection{The Control-Flow Leakage Problem}

\subsubsection{Control-Flow Leakage Attacks}

\Ac{CFL} attacks are a type of microarchitectural attack whereby an attacker tries to learn the outcome of a secret-dependent branch by exposing the control flow via microarchitectural side channels.
Consider the program in \cref{running-vulnerable}.
When the branch on line~1 evaluates to true, the instructions on lines~2-3 are executed and the program exits.
When the branch evaluates to false, the instruction on line~4 is executed and the program exits.
An attacker that is able to observe the program's execution time will be able to distinguish the two executions, and hence learn if \lstinline[style=myasmstyle]{secret} evaluates to true or false.

\begin{listing}[ht]
  \centering
\addtocounter{lstlisting}{1} %
\captionof{listing}{Code vulnerable to CFL attacks (\cref{running-vulnerable}) and its balanced version (\cref{running-balanced}).}\label{lst:transient_execution_examples}
\vspace{-1em}
\begin{minipage}{.45\linewidth}
\begin{lstlisting}[
  style=myasmstyle,
  label={running-vulnerable},
  numbers={left},
  ]
     br secret,t,f
 t:   add s1,s2,s3
      j Ex
 f:   add s2,s3,s4

Ex: [...]
\end{lstlisting}
\subcaption[listing]{}\label{running-vulnerable}
\end{minipage}
\hfill
\begin{minipage}{.45\linewidth}
\begin{lstlisting}[
  style=myasmstyle,
  label=running-balanced,
  numbers={left},
  ]
     br secret,t,f
 t:   add s1,s2,s3
      j Ex
 f:   add s2,s3,s4
      j Ex
Ex: [...]
\end{lstlisting}
\subcaption[listing]{}\label{running-balanced}
\end{minipage}
\vspace{-1.6em}
\end{listing}

Besides this start-to-end timing difference, interrupt latency~\cite{vanbulck:2018-nemesis}, data cache contention~\cite{osvik:2006-cache}, structural dependencies~\cite{aldaya:2019-port} or data dependencies stalling the pipeline are other examples of microarchitectural events that can be monitored by an attacker to leak the control flow.
Consider \cref{running-vulnerable} again and assume that the addresses of the \lstinline[style=myasmstyle]{add} instructions (lines 2 and 4) map to different instruction cache lines.
Monitoring which cache line has been touched (for instance with the Flush+Reload attack~\cite{yarom:2014-flush}) will reveal the control flow.

Two common software countermeasures against CFL attacks are \emph{control-flow balancing} and \emph{control-flow linearization}.
The former technique keeps the secret-dependent control flow intact while the latter eliminates it completely.

\subsubsection{Control-Flow Balancing}\label{sec:control-flow-balancing}
Control-flow balancing is based on the idea that if the two sides of a secret-dependent branch induce exactly the same attacker-observable behavior, then executing the code does not reveal via side channels which side of the branch has been executed.
\Cref{running-balanced} gives the balanced form of \cref{running-vulnerable}. The \lstinline[style=myasmstyle]{add} instruction on line~2 is balanced with the \lstinline[style=myasmstyle]{add} instruction on line~4 and a jump instruction is added to the \listinglabel{f} path on line~5 to balance it with the jump on line~3 in the \listinglabel{t} path.

Recent work~\cite{bognar:2023-microprofiler, winderix:2021-compiler} has demonstrated the security (and efficiency) of control-flow balancing for small, embedded processors with deterministic timing behavior.
The authors propose a methodology consisting of three steps.
First, by profiling the microarchitecture, the instruction set is classified into a number of \emph{leakage classes} such that executing instructions from the same leakage class induces the same side-channel observations.
Second, a dummy (no-op) instruction is composed for every leakage class.
Lastly, the secret-dependent branches are algorithmically balanced~\cite{winderix:2021-compiler} with respect to the leakage classification, and by inserting dummy instructions when necessary.
This approach ensures that the dynamic instruction trace of balanced code always produces the same sequence of leakage classes.

Although control-flow balancing counters attacks exploiting microarchitectural optimizations on low-end devices~\cite{moghimi:2020-copycat, vanbulck:2018-nemesis}, higher-end devices (the target of our work) typically feature optimizations yielding observations that are unbalanceable in software alone.
Yet, for performance reasons, balanced control flow is sometimes found in security-critical libraries targeting these devices~\cite{moghimi:2020-copycat,yu:2023-all}.
Thus, how to make balanced execution secure on these higher-end devices remains an important research question.

\subsubsection{Control-Flow Linearization}\label{sec:control-flow-linearization}

{}

Control-flow linearization is a key principle of the widely-established constant-time programming discipline~\cite{almeida:2016-verifying}.
By eliminating secret-dependent branches, control-flow linearization ensures that the \ac{PC} does not get tainted (i.e., that the \ac{PC} trace is independent of secrets).
Several linearization techniques have been proposed in the literature~\cite{borello:2021-constantine,molnar:2005-program,rane:2015-raccoon,soares:2023-side,vanoverloop:2024-compiler,wu:2018-eliminating}.
\Cref{running-linearized} contains the linearized form of the running example from \cref{running-vulnerable}, based on a state-of-the-art method that was first proposed by Molnar et al.~\cite{molnar:2005-program}.
Compared with the balanced form from \cref{running-balanced}, the linearized form comes with a higher cost due to the use of additional instructions and registers.

{}

\begin{lstlisting}[
  caption={Linearized form of the vulnerable code in \cref{running-vulnerable}.},
  style=myasmstyle,
  label=running-linearized,
  numbers={left},
]
seqz t1,secret

addi t1,t1,-1  # t1 = true mask  (in {0xffff, 0x0000})
not  t2,t1     # t2 = false mask (in {0xffff, 0x0000})
and  t3,s1,t1  # start of else
add  s1,s2,s3
and  s1,s1,t2
or   s1,s1,t3  # start of then
and  t3,s2,t2
add  s2,s3,s4
and  s2,s3,t1
or   s2,s2,t3
\end{lstlisting}

\subsubsection{This paper}

The goal of this work is to make sure that executing balanced code (which contains secret-dependent control flow) on high-end processors does not leak more information than executing the equivalent linearized code (which does \emph{not} contain secret-dependent control flow).
We demonstrate that, with minimal hardware support, it is possible to securely balance secret-dependent control flow on higher-end platforms, %
without disabling performance-critical hardware resources that are shared between different stakeholders.

\section{Threat Model}\label{sec:assumptions}

We consider an adversary with the goal to infer secrets (e.g., cryptographic keys) by learning the secret-dependent control flow of a victim application.
We consider an adversary with the same capabilities as an adversary under the \emph{classic} constant-time threat model, and thus assume that applications are hardened against transient execution attacks~\cite{canella:2019-systematic}.
More specifically, an adversary with the capabilities of this threat model is able to run arbitrary code alongside an architecturally isolated victim (e.g., via process isolation) on the same machine and it shares hardware resources, such as the branch predictor, cache hierarchy and execution units with the victim.
This setting enables the adversary to precisely observe the execution time of the victim, and how it uses the shared resources.
If these observations depend on the secret control flow, the adversary is able to learn something about the secret.

We consider software-based timing channels, i.e., the adversary monitors the microarchitectural resource usage via timers from software~\cite{ge:2018-survey, lou:2021-a}.
Side channels that require physical access and physical equipment to measure quantities such as power consumption~\cite{kocher:1999-differential} or EM emissions~\cite{quisquater:2001-smart} are out of scope for this paper.
Similarly, other types of software-based side-channel attacks, such as software-based fault attacks~\cite{murdock:2020-plundervolt} and software-based power attacks~\cite{lipp:2021-platypus} are out of scope and subject of orthogonal mitigations.

We make no further assumptions on the type of (software-based) microarchitectural side-channels attacks that can be mounted by the adversary, ranging from classic cache attacks~\cite{osvik:2006-cache} to more recent contention-based attacks~\cite{aldaya:2019-port}.

\section{Overview of \libra}\label{sec:libra-overview}

A program's control flow can leak through observations induced by various microarchitectural optimizations.
Some of these observations, such as instruction latency, are independent of the value of the \ac{PC}.
We refer to optimizations yielding this type of observation as \emph{sources of balanceable leakage} as their observations can be balanced by software.
However, some performance-critical optimizations commonly found in modern hardware (e.g., the instruction cache and the instruction prefetcher) yield observations that are dependent on the value of the \ac{PC}. They inevitably leak the control flow.
We refer to these optimizations as \emph{sources of unbalanceable} leakage as they cannot be dealt with by software alone.
In \cref{sec:characterization}, we study this distinction further and provide a comprehensive characterization of hardware optimizations regarding how they leak the control flow.

Existing control-flow balancing solutions are ineffective against  unbalanceable leakage.
It is the goal of \libra{} to address this gap via a novel hardware-software security contract for secure and efficient balanced execution.
On the one hand, the software is responsible for balancing secret-dependent control flow under a \emph{weak observer mode} (accounting for the balanceable leakage) in which the \ac{PC} does not leak.
On the other hand, the hardware provides support to deal with the sources of unbalanceable leakage to ensure that the program remains secure in a \emph{strong observer mode}, representative of our threat model (\cref{{sec:assumptions}}) for high-end processors.

\subsection{Leakage Contract}\label{sec:libra-contract}

\libra{} requires the hardware to augment the ISA with a \emph{leakage contract}
that provides sufficient information on how to balance the control flow.
Software, such as a compiler, can then rely on this contract 1) to securely balance secret-dependent control flow (making control-flow balancing a \emph{principled} code transformation) or 2) to verify that secret-dependent control flow is securely balanced.
This stands in contrast to prior works~\cite{agat:2000-transforming,kopf:2007-transformational,bognar:2023-microprofiler,dewald:2017-avr,winderix:2021-compiler,pouyanrad:2020-scf}, where it is the responsibility of the software to empirically figure out \emph{how} to balance corresponding instructions.

The \libra{} leakage contract classifies an instruction set into two dimensions.
First, it partitions instructions into \emph{leakage classes}~\cite{bognar:2023-microprofiler,winderix:2021-compiler} such that instructions from the same leakage class yield identical side-channel observations.
Importantly, any instruction can be used to balance any other instruction from the same leakage class.
For every leakage class, the contract additionally designates a canonical \emph{dummy instruction}, which does not produce architectural effects (\eg{} \lstinline[style=myasmstyle]{mv x1, x1}).
Finally, the hardware provides a blocklist of instructions that are not supported in balanced regions. Blocklisted instructions have to be rewritten in terms of non-blocklisted instructions before performing control-flow balancing.

Second, the leakage contract partitions the instruction set into safe and unsafe instructions~\cite{yu:2019-data}.
\emph{Safe instructions} are instructions whose timing and shared microarchitectural resource usage are independent of the values of their operands.
For instance, an \lstinline[style=myasmstyle]{add} instruction is typically implemented in a safe way, while
a \lstinline[style=myasmstyle]{load} typically exposes the value of the address operand on systems with a data cache (making it an \emph{unsafe instruction}).
It is insecure to pass secrets to unsafe instructions
but it is secure to use unsafe instructions in balanced regions if it can be proven that the operands of any two equidistant unsafe instructions are the same for all possible executions.
For instance, the code \lstinline[style=myasmstyle]{if (secret)  load x0 a else load x1 a} is secure as the resulting observation is independent of \lstinline[style=myasmstyle]{secret} (under the assumption that the \lstinline[style=myasmstyle]{load} is only unsafe in its address operand).

\subsection{ISA Extension}\label{sec:hardware-extension}
The goal of \libra{} is to securely execute balanced code regions on high-end CPUs without disabling performance-critical optimizations. %
In particular, \libra{} aims at keeping \emph{all} modern hardware optimizations fully enabled when executing security-insensitive code (\ie the common case), and keeping \emph{as many optimizations as possible} in secret-dependent regions.

To this end, \libra{} proposes an ISA extension introducing two main novel features:
\begin{itemize}
  \item A novel memory layout for balanced code, termed \emph{folded layout}, which interleaves the instructions from balanced regions by placing the level slices sequentially in memory.
  \item A new instruction, the \emph{level-offset branch} (\lob), which informs the CPU how to navigate a folded region. Additionally, it signals to the CPU that it is about to execute a secret-dependent region such that it can adapt the behavior of some optimizations.
\end{itemize}

Importantly, even though folding
sequentially lays out instructions of balanced regions in memory (reminiscent of linearization), the \emph{original control flow of the program is preserved}, \ie only one side of a folded conditional branch is executed, as prescribed by the original \ac{CFG} (just like with standard code balancing).

The level-offset branch \qinst{\qlo}{br}{\(c,\offt:\offf:\bbc\)} specifies how to navigate a folded region:
\begin{enumerate}
  \item The level offsets \(\offt\) and \(\offf\) indicate what instructions of the next level to execute, depending on whether the condition \(c\) is true or false;
  \item The basic block count \(\bbc\) indicates the number of basic blocks of the next level (the slice size of the next level) and is used to increment the \ac{PC} by the correct value.
\end{enumerate}

\Cref{running-folded-1} illustrates how to fold the balanced code from \cref{running-balanced}. First, the two \inst{add}{} and the two \inst{j} instructions are sequentially placed in memory. Second, the conditional branch is rewritten using a \(\lob\) with \(\offt = 0\), \(\offf = 1\) and \(\bbc = 2\).
After the \(\lob\), the CPU will execute the folded region slice by slice, incrementing the \ac{PC} by \(2\).
If the condition is true, the first (offset \(\offt\)) instruction of each slice is executed, otherwise the second (offset \(\offf\)) instruction is executed.
Finally, the terminating \inst{j} instructions are replaced by \(\lob\) instructions to reset the level offset and \(\bbc\) and resume ``normal'' execution at the \listinglabel{Ex} label.

\noindent
\begin{minipage}{\linewidth}
\begin{lstlisting}[
  caption={Folded form of the balanced code in \cref{running-balanced}.},
  style=myasmstyle,
  label=running-folded-1
]
       lo.br secret,0:1:2   # offT:offF:bbc
   L1:   add s1,s2,s3
         add s2,s3,s4
         lo.br zero,0:0:1   # offT:offF:bbc
         lo.br zero,0:0:1   # offT:offF:bbc
   Ex: [...]
\end{lstlisting}
\end{minipage}

\paragraph{How does \libra{} address unbalanceable leakage?}
The design of Libra is tailored to address unbalanceable leakage in hardware efficiently, \ie by keeping essential hardware optimizations enabled.
Yet, to establish the security guarantees, \libra{} requires that the \ac{PC} does not leak at a finer granularity than a slice, possibly requiring adaptations to the behavior of some optimizations.

Importantly, the folded memory layout is crucial to keep enabled performance-critical optimizations of modern hardware (e.g., the instruction cache) without, or with only minimal, adaptations.
By virtue of folding (which creates a linear memory layout), the hardware can efficiently implement a data-oblivious instruction memory access pattern by always prefetching all the slices in the same order, effectively making it independent of the outcomes of conditional branch(es).

While some sources of unbalanceable leakage do not require hardware modifications, some will, possibly degrading performance.
However, because the hardware is informed when it is executing a folded region, these modifications can be limited to folded regions only.
For instance, some hardware structures, such as the branch predictor, must be disabled for the \lob{} instruction to prevent control-flow exposure to an attacker sharing the branch predictor.
However, the linear layout of a folded region makes the branch predictor unnecessary for \lob{} instructions, because there is no uncertainty (at slice granularity) what address the sequential prefetcher should fetch from, so it can fill the cache with the instructions that are about to be fetched by the CPU.

In \cref{sec:characterization}, we present, based on a rigorous study of the attack literature, a characterization of the sources of unbalanceable leakage (with folding in mind), and we provide guidelines about how to handle them.

\subsection{Advanced Features}\label{sec:advanced features}

\subsubsection{Nested branches}\label{sec:libra-secret-indepent-branches}
When folding a region with a nested branch (as in \cref{nested-branches}), the software must fold the level structure of the entire outer region,
as shown in \cref{nested-branches-folded}. %
The slice size grows with the level of nesting.
In the example from \cref{nested-branches-folded}, each slice of the second level consists of four instructions.
Recall that the hardware has to make sure to fetch instructions without exposing their offset within the current level. For instance, if a slice occupies multiple cache lines, the hardware must ensure to always touch all the cache lines in the same order, irrespective of the current instruction's offset.

\begin{listing}[ht]
  \centering
\setcounter{listing}{\value{lstlisting}} %
\addtocounter{lstlisting}{1} %
\captionof{listing}{Region with nested branches (\cref{nested-branches}) and its folded version (\cref{nested-branches-folded}).}
\vspace{-1em}
\begin{minipage}{.45\linewidth}
\begin{lstlisting}[
  style=myasmstyle,
  label=nested-branches
]
    br secret,t,f
 t:  br c,tt,tf
tt:    add r,r,4
       j Ex
tf:    add r,r,8
       j Ex
 f:  br c,ft,ff
ft:    sub r,r,4
       j Ex
ff:    sub r,r,8
       j Ex
Ex: [...]
\end{lstlisting}
\subcaption[listing]{}\label{nested-branches}
\end{minipage}
\hfill
\begin{minipage}{.45\linewidth}
\begin{lstlisting}[
  style=myasmstyle,
  label=nested-branches-folded
]
    lo.br secret,0:1:2
L1:   lo.br c,0:1:4
      lo.br c,2:3:4
L2:     add r,r,4
        add r,r,8
        sub r,r,4
        sub r,r,8
        lo.br zero,0:0:1
        lo.br zero,0:0:1
        lo.br zero,0:0:1
        lo.br zero,0:0:1
Ex: [...]
\end{lstlisting}
\subcaption[listing]{}\label{nested-branches-folded}
\end{minipage}
\label{lst:transform-br}
\vspace{-1.6em}
\end{listing}

Note that when a nested branch does not depend on secret information (\eg{} a loop with a constant trip count), it can be more efficient to keep the branch instead of folding it.
In that case, for correctness, the software must ensure that the level offsets of the target instructions are consistent regarding the offsets of the branch instructions.
Moreover, for security, the software must ensure that the branch targets of the branches in the source slice all point to targets in the same target slice.

\subsubsection{Function calls}\label{sec:libra-calls}
To support function calls in balanced code, prior work on control-flow balancing~\cite{bognar:2023-microprofiler, winderix:2021-compiler} proposed to create a dummy function for each function called from a secret-dependent region.
A dummy function is mostly made up of dummy (no-op) instructions designed to mirror the behavior of the real function.
These dummy instructions ensure that both the dummy and real functions cause identical changes in the microarchitectural state.
As a result, an attacker cannot distinguish between the execution of the dummy function and that of the real function.
A call to a function in a secret-dependent region can then be balanced with a call to its dummy version.
\libra{} supports this scheme, yet in order not to expose the control flow on higher-end CPUs (e.g., via the instruction cache), functions must be folded with their dummy counterpart.
\libra{} provides hardware support to efficiently invoke a folded function and extends the ISA with a new instruction, the \emph{level-offset call}: \qinst{\qlo}{call}{{\(b\ \ell\)}}.
The instruction jumps to the folded function and, according to the boolean immediate \(b\), either executes the real part or the dummy part of the folded function.
Additionally, the CPU must save/restore the \libra{} state (\ie{} current offset and \(\bbc\)) of the caller upon calls/returns.
\libra{} proposes a two-level hardware stack, used for storing and restoring the \libra{} state of the caller.
For non-leaf functions (i.e., to support more than one level of nesting, including recursion), the software is responsible to save and restore the \libra{} state on a software-based stack.

\subsubsection{Exceptions}

Instructions that may throw exceptions are inherently unsafe because whether an exception is thrown depends on the value of their operands and handling an exception impacts both the timing and resource usage of an application.
Therefore, such instructions should be treated similarly to other unsafe, balanceable instructions, by balancing the unsafe operands and their dependencies.

\subsection{Hardware-Software Security Contract}\label{sec:design-choices}

In summary, with \libra{} we propose a hardware-software security contract for balanced execution.
If both parties fulfill their part of the contract, then executing a balanced code region will not leak more information than the equivalent linearized region.

\subsubsection*{On the hardware side, \libra{} imposes the following requirements:}

\begin{description}
  \item[\hypertarget{hreq:contract}{\hrcontract}] A leakage contract for control-flow balancing is provided.
  \item[\hypertarget{hreq:slice_granularity}{\hrslice}] The \ac{PC} does not leak at a finer granularity than a slice.
  \item[\hypertarget{hreq:uniform_access_pattern}{\hruniform}]
The instruction memory access pattern does not depend on the outcome of the level-offset branch (implied by \hrslice{}).
  \item[\hypertarget{hreq:safe}{\hrsafe}] The level-offset branch and the level-offset call are safe instructions.
\end{description}

\subsubsection*{On the software-side, \libra{} relies on:}

\begin{description}
  \item[\hypertarget{sreq:branches}{\textbf{SR1}}] A correct identification of secret-dependent regions and functions called from secret-dependent regions.
  \item[\hypertarget{sreq:balanced}{\textbf{SR2}}] A secure balancing according to a weak observer mode as prescribed by the leakage contract.
    In practice, this entails making sure that secrets do not directly flow to unsafe instructions, applying a balancing algorithm (such as the one from~\cite{winderix:2021-compiler}), and providing dummy versions for functions called from secret-dependent regions.
  \item[\hypertarget{sreq:fold}{\textbf{SR3}}] A correct folding of the balanced regions and functions. In \cref{sec:libra-transformation}, we give a folding algorithm.
\end{description}

\section{Formal Semantics}\label{sec:libra-formal-semantics}

\subsection{Language and Semantics}\label{sec:semantics}
\subsubsection{Language} %
The \libra{} folding transformation transforms programs written in a source assembly
language \(\sasm\)\footnote{Following common
  practice~\cite{DBLP:journals/corr/abs-2001-11334}, we denote source objects
  with a \src{blue, sans-serif} font and target objects with a \trg{red, bold}
  font. Objects common to source and target are written with black normal font.}
to a target language \(\tasm\) (\cref{fig:syntax}).

\begin{figure}[h]
  \begin{tabularx}{\linewidth}{ r@{~}r@{~}X }
  \multicolumn{3}{l}{
    (Values) $\val{} \in \Vals \quad$%
    (Registers) $\register{x} \in \Regs \quad$%
    (Labels) $\loc, \loc_{t}, \dots \in \Loc \qquad$%
  }\\
  \(\angles{exp}\)  & \(::=\) & \(\val{}\) \(\vert\) \(\register{x}\) \\
  \(\angles{inst}\) & \(::=\) &
  \inst{\uop}{\(\register{x}\ \angles{exp}\)}  \(\vert\)
  \inst{\bop}{\(\register{x}\ \angles{exp}\ \angles{exp}\)} \(\vert\)
  \inst{store}{\(\angles{exp}\ \angles{exp}\)}\\
  & \(\vert\) &
  \inst{br}{\(\angles{exp}\ \loc_{t}\ \loc_{f}\)} \(\vert\)
  \inst{call}{\(\loc\)} \(\vert\)
  \inst{ret}{} \\
  \(\angles{\srcu{inst}}\) &   \(::=\) & \qinst{\qs}{br}{{\(\angles{exp}\ \ell_{t}\ \ell_{f}\)}} %
                            \(\vert\)   \qinst{\qs}{call}{{\(b\ \ell\ \ell'\)}}
                            \(\vert\)   \(\angles{inst}\) \\
    \(\angles{\trgb{inst}}\) & \(::=\) & \qinst{\qlo}{br}{{\(\angles{exp}\ \val\ \val\ \val\)}} \(\vert\)
    \qinst{\qlo}{call}{{\(b\ \ell\)}} \(\vert\)
                                        \(\angles{inst}\)
\end{tabularx}

  \caption{Syntax of \(\sasm\) and \(\tasm\) instructions where \(\protect\inst{\uop}{} \in \{\protect\inst{neg}{}, \protect\inst{load}{} \dots\}\) and \(\protect\inst{\bop}{} \in \{\protect\inst{add}{}, \protect\inst{mul}{}, \dots\}\) are non-control-flow-altering unary and binary instructions and \(b \in \{\bot, \top\}\) is an immediate boolean operand. A program $\prog{}$ is a partial mapping from locations to instructions and $\prog{\loc}$ denotes the instruction at location \(\loc\).}
  \label{fig:syntax}%
\end{figure}

\paragraph{Source language}
In addition to standard ISA instructions, the source language \(\sasm\) is equipped with additional syntactic constructs to:
\begin{enumerate*}[label={\textit{(\arabic*)}}]
  \item identify secret-dependent branches (\srbranch{}), and
  \item associate functions that can be called in secret-dependent regions with a dummy version (\srbalanced{}).
\end{enumerate*}
These constructs should be seen as information derived from source-level annotations.
Secret-dependent branches, \qinst{\qs}{br}{{\(c\ \ell_{t}\ \ell_{f}\)}}, indicate that the condition \(c\) is secret and inform the \libra{} transformation about secret-dependent regions to fold.
Their semantics is similar to regular conditional branches.
Secret-dependent calls, \qinst{\qs}{call}{{\(b\ \ell\ \ell'\)}}, indicate that the function at address \(\ell'\) is the dummy version of the function at address \(\ell\). If \(b = \top\), the original function \(\ell\) is called, whereas if \(b = \bot\), the dummy function \(\ell'\) is called.
Secret-dependent calls inform the \libra{} transformation of functions to fold with their dummy version.

\paragraph{Target language}
The target language is equipped with a level-offset branch and level-offset call, which
are used to navigate folded regions and whose semantics will be detailed later.

\subsubsection{Configurations}
\emph{Source configurations} are of the form \(\aconf{\mem, \reg, \pc, \retstack}\) where \(\mem: \Vals \to \Vals\) is a memory, mapping addresses to values, \(\reg: \Regs \to \Vals\) is a register map, \(\pc\) is the program counter, %
and \(\retstack\) is a stack of return addresses.
\footnote{For simplicity, our formalization features a stack of return addresses. However, a standard setting with a simple return address register that is correctly saved/restored on the stack would be equivalent, under the assumption that return addresses do not interfere with the rest of the program (\ie{} no return address overwrite, no pointer arithmetic on return address, etc).}
To execute a folded region slice-by-slice, \libra{} keeps track of the number of basic blocks in the currently active level (\(\bbc\)) and the offset of the currently active basic block
(\(\curOff\)) in a \emph{\libra{} context}, denoted \(\ctx = (\bbc, \curOff)\).
The initial \libra{} context is \((1, 0)\).
A \emph{\libra{} configuration} \(\conf{}\) is a tuple \(\conf{\mem, \reg, \pc, \retstack, \ctxstack}\) where \(\aconf{\mem, \reg, \pc, \retstack}\) is a source configuration, and \(\ctxstack\) is a stack of \libra{} contexts.
In the following, we refer to \libra{} configurations simply as configurations.

Note that handling function calls and exceptions in folded regions requires a stack of (at least) two \libra{} contexts. In that setting, \libra{} contexts must be saved and restored by the callee in non-leaf functions.
For simplicity, our formalization allows for a stack of unlimited size.

\subsubsection{Semantics}
The semantics of \libra{}, given by the relation \(\tevalconf{\conf{}}{\conf{}'}{\obs}{}\), defines that the evaluation of an instruction in a configuration \(\conf{}\) produces a configuration \(\conf{}'\) and an observation \(\obs\). The semantics is parameterized by a function \(\obsfunc(\conf{})\), which defines the observation produced in a configuration \(\conf{}\) (and will be instantiated in \cref{sec:policy}). %
We give in \cref{fig:libra-semantics} an excerpt of semantics rules, focusing on the important aspects of \libra{}
\ie{} the update of the program counter and the \libra{} context.
The evaluation of an expression $e$ using a register file~$\reg$ is given by $\eeval{e}{\reg}$ and the evaluation of a non--control-transfer instruction \(inst\) (\eg{} arithmetic, logic, or memory instruction), is given by a relation \(\isaeval{inst}{\aconf{\mem, \reg}}{\aconf{\mem', \reg'}}\). %

The program counter always points to the instruction to be executed (\(\prog{\pc}\)). %
To navigate the folded memory layout, we define a function \(\func{slice\_addr}\), returning the address of the current slice, and a function \(\func{next\_slice}\), returning the address directly following the current slice:
\begin{align*}
\func{slice\_addr}(\pc, \curOff) \mydef~& \pc - \curOff\\
\func{next\_slice}(\pc, \bbc, \curOff) \mydef~& \func{slice\_addr}(\pc, \curOff) + \bbc
\end{align*}

The rule \textsc{pc-update} defines the evaluation of a non-control-transfer instruction. It increments the program counter with the basic block count, effectively jumping to current offset in the next slice.

The rule \textsc{lob-true} defines the evaluation of a level-offset branch \qinst{\qlo}{br}{{\(e\ \offf\ \offt\ \bbc'\)}} when the condition \(e\) evaluates to true. It jumps to the next slice at offset \(\offt\) and sets the new basic block count to \(\bbc'\).
The rule \textsc{lob-false} is analogous and omitted for brevity.

The rule \textsc{lo-call} defines the evaluation of a level-offset call, which calls a function folded with its dummy version, at location \(\ell\). The rule jumps to the first slice of the function and, according to the boolean \(b\), sets the offset to 0 or 1, to execute the original or the dummy function, respectively. It also sets the basic block count to 2, to account for the folding of the original and dummy functions. Finally, it pushes the return address on the return stack. Normal function calls are similar, but push the initial \libra{} context \((1,0)\) to the \libra{} stack.

The rule \textsc{ret} defines the evaluation of a return instruction. It simply jumps to the return address on the top of the return stack and restores the previous \libra{} context. %

\begin{figure}[h]
  \begin{mathpar}
    \centering
    \pcupdatenct{}
    \and
    \lobranchtrue{}
    \and
    \locall{}
    \and
    \return{}
  \end{mathpar}
  \caption{Excerpt of the \libra{} semantics, where
\(\qq \in \{\qlo, \qs, \varepsilon\}\) and
    \mbox{\textnormal{\(\obs{} = \obsfunc(\conf{\mem, \reg, \ctxstack \cdot (\bbc, \curOff)})\)}}. %
  }\label{fig:libra-semantics}
\end{figure}

We additionally equip our source language \(\sasm\) with a source semantics
\(\seval{\obs}\), defined in a standard way and omitted here for brevity.
Finally, we let \(\evalconf{\conf{}}{\conf{}'}{\obs}{n}\) be the \(n\)-step evaluation from a configuration \(\conf{}\) to a configuration \(\conf{}'\), where \(\obs\) is the concatenation of observations produced by individual instructions~\cite{barthe:2018-secure}.

\subsection{Security Policy}\label{sec:policy}
\subsubsection{\libra{} Leakage Model}
Side-channel observations are captured in a leakage contract (\hrcontract{}), which partitions the instruction set into leakage classes and safe/unsafe instructions (\cf{} \cref{sec:libra-contract}).

In order to leverage leakage classes in a standard security criterion~\cite{barthe:2018-secure}, we associate a unique leakage identifier (\id{add}, \id{load}, \id{br}, etc.) to each leakage class. The leakage identifier of an instruction \(inst\) is given by \(\class{inst}\). %
For instance, if additions and subtractions are indistinguishable to an attacker, a possible instantiation of \(\class{}\) is \(\class{\inst{add}{\texttt{x\ x\ x}}} = \class{\inst{sub}{\texttt{x\ x\ x}}} = \id{add}\).

Additionally, the instruction set is partitioned into disjoint sets.
Safe unary instructions (\(\IuSafe{}\)) and
safe binary instructions (\(\IbSafe{}\)), do not expose information about the value of their operands.
Conversely, unsafe unary instructions (\(\IuUnsafe\)), left-unsafe (\(\IlUnsafe{}\)), right-unsafe (\(\IrUnsafe{}\)), and
left-right-unsafe (\(\IlrUnsafe{}\)) instructions expose information about the values of their only, left, right, or both source operands, respectively.

\libra{} leaves freedom to hardware developers regarding the concrete instantiation of leakage classes and safe/unsafe partitioning. It only imposes (\hrsafe{}) that secure branches and level-offset branches do not leak their outcome---\ie{}  \qinst{\(\{\qlo,\qs\}\)}{br}{\(c\ \_\)} \(\in \IuSafe{}\)---and secure calls and level-offset calls do not reveal whether the original function or the dummy function is actually executed---\ie{} \qinst{\(\{\qlo,\qs\}\)}{call}{\(\_{}\)} \(\in \IuSafe{}\).
For our security policy, we additionally require that normal branches and calls leak their outcome, and that control-flow-altering instructions belong in a distinct leakage class from each other and from non-control-flow-altering instructions. Intuitively, this ensures that low-equivalent source executions are slice-synchronized: at each step, their program counters belong to the same slice.

\subsubsection{Weak/Strong Observer Mode}
The \libra{} leakage model is used to instantiate the function \(\obsfunc\), which, as mentioned earlier, is a parameter of the semantics specifying the observation produced when evaluating an instruction.
We define two distinct observer modes (\ie{} instantiations of \(\obsfunc\)) that we will apply to \(\sasm\) and \(\tasm\) programs.

The \emph{weak observer mode} (\(\obsfuncweak\)) exposes all timing and
microarchitectural effects that are independent of the program counter (\ie{}
the \emph{balanceable} leakage). The leakage classes and safe/unsafe
partitioning determine the instantiation of \(\obsfuncweak\), as defined in \cref{fig:leakage-model}.

\begin{figure}[ht]
  \input{semantics/leakage}
  \caption{Definition of \(\obsfuncweak\) according to the \libra{} leakage contract (excerpt). Other rules (\textsc{r-unsafe}, \textsc{lr-unsafe}, etc.) are analogous.}\label{fig:leakage-model}
\end{figure}

The \emph{strong observer mode} (\(\obsfuncstrong\)) includes observations of
the weak mode, plus the observable part of the program counter (\ie{} the
\emph{unbalanceable} leakage), which, from \hrslice{}, does not expose more than the address of the current slice:
\begin{align*}
  \obsfuncstrong(\conf{\mem, \reg, \pc, \retstack, \ctxstack \cdot (\bbc, \curOff)}) =\ & %
                                                                                        \func{slice\_addr}(\pc, \curOff) \cdot\\
  & \obsfuncweak(\conf{\mem, \reg, \pc})
\end{align*}

\subsubsection{Security}
Security is defined with respect to a partition of the initial state (memory and registers) into public and secret regions.
\begin{definition}[Indistinguishability]\label{def:indistinguishability} Two
  states \(\conf{}\), \(\conf{}'\) are indistinguishable, written
  \(\conf{} \indist \conf{}'\), if they agree on the value of their public
  registers and public memory locations.
\end{definition}
We define security as (termination-insensitive) \emph{Observational Non-Interference} (ONI)~\cite{DBLP:conf/sp/GoguenM82a} \wrt{} an observation function \(\obsfunc\):
\begin{definition}[\oni{\obsfunc}]\label{def:oni}
  A program \(\prog{}\), interpreted in a semantics \(\eval{}{}\), is secure under observer mode \(\obsfunc\), written \(\oni{\obsfunc}(\prog{})\) if and only if for any pair of initial configurations
  \(\conf{}_{0}\), \(\conf{}_{0}'\), %
  if \(\conf{}_{0} \indist \conf{}_{0}'\), and %
  \(\evalconf{\conf{}_{0}}{\conf{}_{n}}{\obs}{n}\), then
  \(\evalconf{\conf{}_{0}'}{\conf{}_{n}'}{\text{\obs'}}{n}\) and
  \(\obs = \obs'\).
\end{definition}

Intuitively, the goal of our \libra{} transformation is to transform \(\sasm\) programs that are \(\oni{\obsfuncweak}\), to \(\tasm\) programs that are \(\oni{\obsfuncstrong}\).
In other words, developers should make sure that secrets do not directly flow to insecure instructions and balance secret-dependent branches (\srbalanced{}), while \libra{}---with compiler (\srfold{}) and hardware (\hrslice{}) support---guarantees that the target program is secure with respect to a strong observer that can observe (parts of) the program counter through microarchitectural side-channels.

\subsection{Libra Transformation}\label{sec:libra-transformation}
To automatically support \libra{} (\srfold{}), we define a folding transformation \(\transform\) from \(\sasm\) programs---with annotated secret-dependent branches (\srbranch{}) and dummy functions for functions that can be called in secret-dependent regions (\srbalanced{})---to \(\tasm\) programs.
For clarity, we present the transformation informally, with illustrative examples, and leave the formalization to \cref{app:transformation}.

\subsubsection{Folding secret-dependent regions} %
For each \emph{balanced} secret-dependent region \(S\)---annotated in \(\sasm{}\) programs by a secret-dependent branch \qinst{\qs}{br}{{\(e\ \ell_{t}\ \ell_{f}\)}}---the transformation first computes the level structure \(L_{0}\dots L_{n}\) of the region.
Next, for all levels \(L_{i}\), the transformation rewrites each terminating instruction
\qinst{\(\{\varepsilon,\qs\}\)}{br}{\(e\ \ell_{t}\ \ell_{f}\)} in the level with %
a level-offset branch
\qinst{\qlo}{br}{{\(e\ \offt\ \offf\ \bbc\)}} where \(\bbc\) is the basic block count of the next level (\ie{} \(|L_{i+1}|\)), and \(\offt\) and \(\offf\) are the level offsets corresponding to \(\ell_{t}\) and \(\ell_{f}\), respectively, in the level \(L_{i+1}\).
Finally, for each level of the level structure, the transformation folds the corresponding basic blocks by interleaving their instructions.

\begin{example}[Folding branches]
Consider the balanced secret-dependent region in \cref{lst:transform-br:source} and let \(\bb{En}, \bb{t}, \bb{tt} \dots \bb{Ex}\) be the basic blocks corresponding to labels %
\listinglabel{En}, %
\listinglabel{t}, %
\listinglabel{tt}, %
\(\dots\),
\listinglabel{Ex}.
The compiler first computes the level structure of the region: %
\(L_{0} = \{ \bb{En} \}, L_{1} = \{ \bb{t}, \bb{f} \}, %
L_{2} = \{ \bb{tt}, \bb{tf}, \bb{ft}, \bb{ff} \}, %
L_{3} = \{ \bb{Ex} \}\).
Next, the transformation rewrites the terminating instruction in each level (except \(L_{3}\)) with level-offset branches.
For instance, terminating instructions of \(L_{1}\) are replaced with \qinst{\qlo}{br}{{\(c\ \offt\ \offf\ |L_{2}|\)}} where \(\offt, \offf\) are computed according to the mapping \(\{\text{\listinglabel{tt}} \mapsto 0, \text{\listinglabel{tf}} \mapsto 1, \text{\listinglabel{ft}} \mapsto 2, \text{\listinglabel{ff}} \mapsto 3 \}\).
Finally, the transformation interleaves the basic blocks in each level, giving the program in \cref{lst:transform-br:target}.
\end{example}

\begin{listing}[ht]
  \centering
\setcounter{listing}{\value{lstlisting}} %
\addtocounter{lstlisting}{1} %
\protect\captionof{listing}{Libra transformation (\cref{lst:transform-br:target}) of a balanced secret-dependent branch (\cref{lst:transform-br:source}) where \lstinline[style=myasmstyle]|j Ex| is syntactic sugar for \lstinline[style=myasmstyle]|br 0,Ex,Ex|;
\lstinline[style=myasmstyle]|lo.j| is syntactic sugar for \lstinline[style=myasmstyle]|lo.br 0,0:0:1|; and $i_{\text{\texttt{\normalfont{t}}}}, i_{\text{\texttt{\normalfont{t}}}}', \dots$ are arbitrary non-terminating instructions.}\label{lst:transform-br}
\vspace{-1em}
\begin{minipage}{.40\linewidth}
\begin{lstlisting}[
  style=myasmstyle,
  label={lst:transform-br:source},
]
En: s.br c,t,f
 t:  $i_{\text{t}}$; $i_{\text{t}}'$
     br d,tt,tf
tt:    $i_{\text{tt}}$; j Ex
tf:    $i_{\text{tf}}$; j Ex
 f:  $i_{\text{f}}$; $i_{\text{f}}'$
     br e,ft,ff
ft:    $i_{\text{ft}}$; j Ex
ff:    $i_{\text{ff}}$; j Ex
Ex: [...]
\end{lstlisting}
\subcaption[listing]{}\label{lst:transform-br:source}
\end{minipage}
\hfill
\begin{minipage}{.55\linewidth}
\begin{lstlisting}[
  style=myasmstyle,
  label={lst:transform-br:target},
]
En: lo.br c,0:1:2
L1:  $i_{\text{t}}$; $i_{\text{f}}$
     $i_{\text{t}}'$; $i_{\text{f}}'$
     lo.br d,0:1:4
     lo.br e,2:3:4
L2:    $i_{\text{tt}}$; $i_{\text{tf}}$; $i_{\text{ft}}$; $i_{\text{ff}}$;
       lo.j; lo.j; lo.j; lo.j


Ex: [...]
\end{lstlisting}
\subcaption[listing]{}\label{lst:transform-br:target}
\end{minipage}
\vspace{-1.6em}
\end{listing}

\subsubsection{Folding functions} %

First, the algorithm computes the union of the level structures of the functions (the original and the dummy function) to fold. Then, similarly as for secret-dependent branches, it replaces branches with level-offset branches, and interleaves instructions according to the level structure. %
Finally, it replaces the call with a level-offset call \qinst{\qlo}{call}{{\(b\ \ell\)}}, where \(\ell\) is the (fresh) label of the folded function.

\begin{example}[Folding functions]\label{ex:libra-transformation-functions}
Consider the program in \cref{ex:sasm_transform_func} and let \(\bb{foo}, \bb{t}, \bb{f} \dots \bb{Ex'}\) be the basic blocks corresponding to labels %
\listinglabel{foo}, %
\listinglabel{t}, %
\listinglabel{f}, %
\(\dots\),
\listinglabel{Ex'}.
 The compiler first computes the union of the level structure of the functions: %
\(L_{0} = \{ \bb{foo}, \bb{foo'} \}, L_{1} = \{ \bb{t}, \bb{f}, \bb{t'}, \bb{f'} \},
L_{2} = \{ \bb{Ex}, \bb{Ex'} \}\).
The transformation then rewrites the terminating instructions and interleaves the basic blocks in each level, giving the program in \cref{ex:sasm_transform_func_final}.

\begin{listing}[ht]
  \centering
  \centering
\setcounter{listing}{\value{lstlisting}} %
\addtocounter{lstlisting}{1} %
\protect\captionof{listing}{\libra{} transformation (\cref{ex:sasm_transform_func_final}) of a call inside a balanced secret dependent region (\cref{ex:sasm_transform_func}) where \lstinline[style=myasmstyle]|j Ex| is syntactic sugar for \lstinline[style=myasmstyle]|br 0,Ex,Ex|;
\lstinline[style=myasmstyle]|lo.j n| is syntactic sugar for \lstinline[style=myasmstyle]|lo.br n,0:1:2|; and $i_{0}, i_{0}', \dots$ are arbitrary non-terminating instructions.}
\vspace{-1em}
\begin{minipage}{.48\linewidth}
\begin{lstlisting}[
  style=myasmstyle,
  label={ex:sasm_transform_func}
]
[...]
s.call $\top$,foo,foo'
[...]
foo: $i_{0}$
     br c,t,f
  t:   $i_{1}$; $i_{2}$; j Ex
  f:   $i_{3}$; $i_{4}$; j Ex
 Ex: ret
foo': $i_{0}'$
      br c',t',f'
 t':    $i_{1}'$; $i_{2}'$; j Ex'
 f':    $i_{3}'$; $i_{4}'$; j Ex'
 Ex': ret
\end{lstlisting}
\subcaption[listing]{}\label{ex:sasm_transform_func}
\end{minipage}
\hfill
\begin{minipage}{.5\linewidth}
\begin{lstlisting}[
  style=myasmstyle,
  label={ex:sasm_transform_func_final},
]
[...]
lo.call $\top$,ffoo
[...]
ffoo: $i_{0}$; $i_{0}'$
      lo.br c 0:1:4;
      lo.br c' 2:3:4
  L2:   $i_{1}$; $i_{3}$; $i_{1}'$; $i_{3}'$
        $i_{2}$; $i_{4}$; $i_{2}'$; $i_{4}'$;
        lo.j 0; lo.j 0;
        lo.j 1; lo.j 1


Ex'': ret; ret
\end{lstlisting}
\subcaption[listing]{}\label{ex:sasm_transform_func_final}
\end{minipage}
\vspace{-1.6em}
\end{listing}
\end{example}

\subsection{Correctness and Security}\label{sec:correctness-security} %
This section states the correctness and security of our Libra transformation \(\transform\).
First, we establish a correspondence relation between source and target configurations. Intuitively, this relates source and program configurations that are at the same point of execution and have the same memory and register states.

\begin{definition}[\(\sconf{} \protect\libraeq{\sprog{}} \tconf{}\)]\label{def:libraeq}
  A source configuration \(\sconf{} = \conf{\srcu{\mem},\srcu{\reg}, \srcu{\pc}, \srcu{\retstack}}\) for a program \(\sprog{}\) is related to a target configuration \(\tconf{} = \conf{\trgb{\mem}, \trgb{\reg}, \trgb{\pc}, \trgb{\retstack}, \trgb{\ctxstack}}\) for a program \(\tprog{}\), denoted \(\sconf{} \libraeq{\sprog{}} \tconf{}\), if and only if the following holds:
  \begin{enumerate*}
    \item \(\srcu{\mem} = \trgb{\mem}\),
    \item \(\srcu{\reg} = \trgb{\reg}\), and
    \item \(\srcu{\pc} \corrloc{\sprog{}} \trgb{\pc}\),
  \end{enumerate*}
  where \(\corrloc{\sprog{}}\) relates program locations in the source program to their corresponding location in the target program.
\end{definition}

Libra is a correct program transformation, preserving program semantics, as established by the following proposition:
\begin{restatable}[Correctness]{proposition}{correctness}\label{thm:correctness}
  For any \(\sasm\) program \(\sprog{}\), number of steps \(n\), and initial source and target
  configurations \(\sconf{}\) and \(\tconf{}\) such that
  \(\tconf{} \libraeq{\sprog{}} \sconf{}\),
if    \(\sevalconf{\sconf{}}{\sconf{}'}{}{n}\)
then
\(\tevalconf{\tconf{}}{\tconf{}'}{}{n}\)
and
\(\sconf{}' \libraeq{\sprog{}} \tconf{}'\),
  where \(\seval{}\) is parameterized by \(\sprog{}\) and \(\teval{}\) is parameterized by \(\transform(\sprog{})\).
\end{restatable}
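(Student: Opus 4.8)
The plan is to prove this by induction on the number of steps \(n\). The base case \(n = 0\) is immediate: if \(\sconf{} \libraeq{\sprog{}} \tconf{}\) then taking \(\sconf{}' = \sconf{}\) and \(\tconf{}' = \tconf{}\) works trivially. For the inductive step, it suffices to prove a single-step simulation lemma: whenever \(\sconf{} \libraeq{\sprog{}} \tconf{}\) and \(\sevalconf{\sconf{}}{\sconf{}'}{}{1}\), there exists \(\tconf{}'\) with \(\tevalconf{\tconf{}}{\tconf{}'}{}{1}\) and \(\sconf{}' \libraeq{\sprog{}} \tconf{}'\); the full statement then follows by composing \(n\) such steps, threading the correspondence relation through. (The observations are irrelevant here since the proposition only concerns the derivability of the evaluation and the preservation of \(\libraeq{}\), not observational equivalence — that is the subject of the separate security theorem.)

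The core of the argument is the single-step lemma, which I would prove by case analysis on the source instruction \(\sprog{\srcu{\pc}}\), using the definition of \(\corrloc{\sprog{}}\) and the structure of the transformation \(\transform\). The cases are: (i) a non-control-transfer instruction \(inst\) — here the source takes a \textsc{pc-update}-style step, and in the target the same instruction sits at \(\trgb{\pc} = \corrloc{\sprog{}}(\srcu{\pc})\) within some slice, so the target fires \textsc{pc-update} and advances the \ac{PC} by \(\bbc\); I must check that landing at "same offset in the next slice" coincides with \(\corrloc{\sprog{}}\) applied to the source successor, which is exactly how folding lays out consecutive instructions of a basic block across consecutive slices; (ii) a plain conditional branch \(\inst{br}{e\ \ell_t\ \ell_f}\) inside a folded region — it has been rewritten to \(\qinst{\qlo}{br}{e\ \offt\ \offf\ \bbc}\), so I invoke \textsc{lob-true}/\textsc{lob-false}, evaluate \(e\) identically (registers agree), and check that \(\func{next\_slice}(\trgb{\pc},\bbc,\curOff) + \offt\) equals \(\corrloc{\sprog{}}(\ell_t)\) — this is where the bookkeeping relating level offsets, \(\bbc = |L_{i+1}|\), and the interleaved layout is discharged; a branch outside any folded region maps to itself; (iii) a secret-dependent branch \(\qinst{\qs}{br}{e\ \ell_t\ \ell_f}\), handled exactly as (ii) since it marks the entry of a region and is rewritten to a \(\lob\); (iv) \(\inst{call}{\ell}\) / \(\qinst{\qs}{call}{b\ \ell\ \ell'}\) — mapped to a normal call (pushing \((1,0)\)) resp. a \(\qinst{\qlo}{call}{b\ \ell''}\) where \(\ell''\) is the folded function; I use \textsc{lo-call}, check the return address pushed matches (\(\srcu{\pc}+\bbc\) vs. the source return address — noting \(\bbc = 1\) outside folded regions so these agree), and that the entry offset \(0\) or \(1\) selects the original resp. dummy function as laid out by function folding; (v) \(\inst{ret}{}\) — fires \textsc{ret} on both sides, popping the return stack and restoring the \libra{} context; correspondence of return addresses is maintained because it was established at the matching call.

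To make the induction go through smoothly I would actually strengthen the invariant carried along the execution beyond bare \(\libraeq{}\): I need that the \libra{} context stack \(\trgb{\ctxstack}\) is always consistent with the position of \(\trgb{\pc}\) — concretely, \((\bbc,\curOff)\) on top of the stack records the correct slice size and offset for the slice containing \(\trgb{\pc}\), so that \(\func{slice\_addr}\) and \(\func{next\_slice}\) compute the right addresses — and that the target return stack is in sync with the source return stack and that the context stack has one frame pushed per pending call. This strengthened invariant is implied by \(\libraeq{}\) together with well-formedness of \(\transform(\sprog{})\), but stating it explicitly avoids circularity. The main obstacle is case (i)/(ii): rigorously matching the arithmetic of \(\func{next\_slice}\), \(\bbc\), and \(\curOff\) against the concrete interleaved addresses produced by \(\transform\), especially with nested branches where slice sizes grow multiplicatively — this is where the formal definition of \(\transform\) and \(\corrloc{}\) from \cref{app:transformation} must be used in full, and where an off-by-one in the layout would break the proof. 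Everything else is routine rule-matching once the layout invariant is pinned down.
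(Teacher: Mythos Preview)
Your proposal is correct and follows essentially the same approach as the paper: a lockstep single-step simulation proved by case analysis on the source instruction, with the invariant strengthened beyond bare \(\libraeq{}\) to additionally track (a) that the source and target return stacks correspond via \(\corrloc{\sprog{}}\) and (b) that the \libra{} context stack is consistent with the current \(\trgb{\pc}\) and the target return stack. The paper makes exactly this strengthening explicit as a relation \(\libraeqstrong{\sprog{}}\) together with an auxiliary relation \(\CorrectStack{\tprog{}}\) capturing ``\((\bbc,\off)\) is a valid context for location \(\loc\)'', and proves the inductive lemma for the strong relation (from which the proposition follows since the strong relation refines \(\libraeq{}\) and coincides with it on initial configurations); your final paragraph anticipates precisely this move.
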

\noindent{}

Libra is a program transformation that hardens programs secure against a weak attacker, to programs secure against a strong attacker, as established by the following proposition:
\begin{restatable}[Security]{proposition}{security}\label{thm:security}
  For any \(\sasm\) program \(\sprog{}\),
  \begin{equation*}
    \oni{\obsfuncweak}(\sprog{}) \implies
    \oni{\obsfuncstrong}(\transform(\sprog{}))
  \end{equation*}
\end{restatable}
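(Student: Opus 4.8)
The plan is to prove \Cref{thm:security} by a simulation argument that lifts the hypothesized weak-observer non-interference of the source program to strong-observer non-interference of the folded target program. Concretely, I would take two initial target configurations \(\tconf{}_0 \trel \tconf{}_0'\) and an execution \(\tevalconf{\tconf{}_0}{\tconf{}_n}{\tobs}{n}\), and I need to produce a matching execution from \(\tconf{}_0'\) with the same strong-observer trace. The bridge is \Cref{thm:correctness}: from each target execution I extract, via the correspondence \(\libraeq{}\), a source execution with the \emph{same number of steps}, and I would show that the weak source observations are recoverable from the target execution (and symmetrically that low-equivalence of target states implies low-equivalence of the corresponding source states, since \(\libraeq{}\) forces equality of memory and registers). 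Applying \(\oni{\obsfuncweak}(\sprog{})\) to the two source executions gives equality of the weak source traces; I then have to argue this upgrades to equality of the strong \emph{target} traces, i.e.\ that I additionally control the \(\func{slice\_addr}\) component that \(\obsfuncstrong\) exposes.

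The heart of the argument is therefore a \emph{slice-synchronization invariant}: along two low-equivalent target executions, at every step \(k \le n\) the two configurations have program counters lying in corresponding slices, so \(\func{slice\_addr}(\tpc_k,\curOff_k) = \func{slice\_addr}(\tpc_k',\curOff_k')\). I would prove this by induction on \(k\), in lockstep with the source-side reasoning. The key observations feeding the induction are the ones the paper flags just before the theorem: control-flow-altering instructions live in their own leakage classes, distinct from each other and from straight-line instructions, and secure/level-offset branches are in \(\IuSafe{}\) (so they leak \(\id{br}\)-like tokens \emph{without} the outcome). Hence if the two source executions produce the same weak trace, then at each step both are executing instructions of the same ``kind'' — both a \lob{}, both a \(\qs\)-call, both a plain arithmetic op, etc. Because the folding transformation places all basic blocks of one level contiguously and all terminating instructions of a level are rewritten to \lob{}s with the \emph{same} \(\bbc\) (namely \(|L_{i+1}|\)), the \textsc{pc-update}, \textsc{lob-true}/\textsc{lob-false}, \textsc{lo-call}, and \textsc{ret} rules all advance \(\func{slice\_addr}\) in a way that depends only on which slice you were in and which rule fired — never on the branch outcome or on the offset within the slice. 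So matching instruction kinds at step \(k\) plus matching slices at step \(k\) yields matching slices at step \(k+1\). I would package this as a lemma: \emph{if \(\sconf{}_k \srel \sconf{}_k'\), \(\sconf{}_k \libraeq{\sprog{}} \tconf{}_k\), \(\sconf{}_k' \libraeq{\sprog{}} \tconf{}_k'\), and the weak source observations at step \(k\) coincide, then \(\func{slice\_addr}\) of \(\tconf{}_k\) and \(\tconf{}_k'\) coincide}, and then \(\obsfuncstrong\)'s definition as \(\func{slice\_addr} \cdot \obsfuncweak\) closes the gap.

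The main obstacle I anticipate is making the ``same kind of instruction at every step'' claim airtight, because it is really a statement about the \emph{structure of \(\transform\)}, not just about leakage classes. One has to know that the correspondence \(\corrloc{\sprog{}}\) maps a source location to a target location whose instruction is the folded image of the source instruction (a \(\qs\)-branch to a \lob{}, an ordinary branch inside a folded region also to a \lob{}, a \(\qs\)-call to a \lo.call{}, straight-line instructions to straight-line instructions), and that within a level the interleaving preserves the ``row'' structure so that two source configurations in the same source slice land in target configurations in the same target slice with consistent \(\bbc\) and offsets. This is exactly the content that \Cref{thm:correctness} and the (deferred) formalization of \(\transform\) in \cref{app:transformation} must supply; I would isolate the needed facts as auxiliary invariants of \(\transform\) — e.g.\ ``corresponding source basic blocks at level \(L_i\) fold to a single target slice of size \(|L_i|\)'', ``the \(\bbc\) immediate of every \lob{} in that slice equals \(|L_{i+1}|\)'', and ``\(\curOff\) equals the block's index within its level'' — and then the slice-synchronization induction goes through mechanically. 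A secondary, minor subtlety is the treatment of unsafe instructions inside folded regions: \(\obsfuncweak\) leaks an operand value for, say, a \lstinline[style=myasmstyle]{load}, so equality of weak traces already pins down those addresses; since \(\libraeq{}\) also gives equal register files along the corresponding source run, no extra work is needed there beyond noting that \srbalanced{} (the hypothesis \(\oni{\obsfuncweak}(\sprog{})\) bakes this in) guarantees equidistant unsafe instructions see equal operands. Finally, termination-insensitivity is inherited directly: the \(n\)-step quantification in \Cref{def:oni} matches the step-preservation in \Cref{thm:correctness}, so no separate termination argument is required.
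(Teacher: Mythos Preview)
Your proposal is correct and follows essentially the same route as the paper: pull the two target executions back to source executions via the simulation, apply \(\oni{\obsfuncweak}(\sprog{})\), then split \(\obsfuncstrong\) into its weak component (which is preserved by \(\transform\), hence equal) and the \(\func{slice\_addr}\) component (handled by the slice-synchronization invariant you describe, driven by the leakage-class separation of control-flow instructions and the level structure of folding). One small fix: \Cref{thm:correctness} as stated goes source-to-target, but you need the converse (a target step yields a related source step); the paper obtains this as a separate lemma from determinism of both semantics together with safety of the source (\Cref{hyp:safe}), so you should appeal to that rather than to \Cref{thm:correctness} directly.
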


\noindent{}
\iftechreport
Proof sketches for \cref{thm:correctness,thm:security} are given in \cref{app:proofs}.
\else
Proof sketches are given in the companion technical report~\cite{techreport}.
\fi

\section{CFL Characterization}\label{sec:characterization}

Based on a rigorous analysis of the microarchitectural attack literature (cf. Table~\ref{tab:cfl} for references), we now present a characterization of hardware optimizations regarding how they have been exploited to leak the control flow of applications.
The importance of this characterization is twofold.
First, it provides a mental framework for improving the understanding of \ac{CFL}, which also guided the design of \libra{}.
Second, it provides the basis to establish recommendations for hardware designers wishing to adopt \libra{}.
The results of our \ac{CFL} attack analysis, \ie{} the raw data for our characterization, are presented in Table~\ref{tab:cfl}.
Each row in this table corresponds to a microarchitectural optimization.
The first column names the optimization
and points to representative papers exploiting it for \ac{CFL} attacks.
The second column indicates if the hardware optimization yields balanceable observations (\ie{} if they can be balanced without \libra{} support).
The third column lists our recommendation on how to handle the leakage using \libra{}.
The last column contains additional notes.

We start by dividing the optimizations into two top-level classes: those that yield balanceable observations (class C1), and those that yield unbalanceable observations (class C2).

\iftrue{}

\newcommand\guidelinehref[1]{\hyperref[guide:c#1]{\textcolor{black}{\textbf{C#1}}}}

\begin{table*}
\caption{Control-flow leakage attack landscape.}
\vspace{-1.0em}
\resizebox{0.9\textwidth}{!}{
\small
\begin{tabular}{ L{4.80cm} C{1.60cm} C{1.60cm} L{7.80cm} }
\toprule
\textbf{Exploited Optimization} & \textbf{Balanceable} & \textbf{Guideline} & \textbf{Notes}\\
\midrule
Computation simplification \cite{andrysco:2015-on} & \cmark & \guidelinehref{1} &
Alternatives: reject program, DIT~\cite{arm:2021-dit, intel:2022-doit} \\
Data TLB \cite{gras:2018-translation, wang:2017-leaky} & \cmark & \guidelinehref{1} & Balance address operands (page granular) \\
Data cache \cite{gruss:2016-flush, liu:2015-last, osvik:2006-cache, percival:2005-cache, yarom:2014-flush} & \cmark & \guidelinehref{1} & Balance address operands (cache-line granular) \\
Data cache bank \cite{yarom:2017-cachebleed} & \cmark & \guidelinehref{1} & Balance address operands (byte granular) \\
DRAM row buffer (data) \cite{pessl:2016-drama} & \cmark & \guidelinehref{1} & Balance address operands \\
Data-dependent data prefetcher \cite{sanchez:2021-opening,chen:2024-gofetch} & \cmark & \guidelinehref{1} &
Balance address operands (loads/stores) and value operands (stores) \\
Load/store buffers & \cmark & \guidelinehref{1} & \\
Pipeline interlock
\cite{sullivan:2018-microarchitectural,yarom:2017-cachebleed,moghimi:2019-memjam} & \cmark & \guidelinehref{1} & Balance stalling data dependencies \\
µop fusion
\cite{ren:2021-i} & \cmark & \guidelinehref{1} & \\
Execution engine  \cite{aldaya:2019-port, bhattacharyya:2019-smotherspectre, gast:2023-squip, gast:2023-remote, rokicki:2022-port, wang:2006-covert} & \cmark & \guidelinehref{1} & Balance structural dependencies \\ %
Interrupt controller \cite{vanbulck:2018-nemesis, moghimi:2020-copycat} & \cmark & \guidelinehref{1} & Balance interrupt latencies \\ %
Reorder buffer (ROB) \cite{aimoniotis:2021-reorder} & \cmark & \guidelinehref{1} & Balance instruction types \\ %
Memory bus / controller \cite{bognar:2022-mind, bognar:2023-microprofiler, wang:2014-timing, wu:2014-whispers} & \cmark & \guidelinehref{1} & Balance memory bus(es) usage \\
Computation reuse \cite{sanchez:2021-opening} & \cmark & \guidelinehref{1} & Balance operands \\
Branch order buffer (BOB) \cite{intel:2017-performance} & \cmark & \guidelinehref{1} & \\
Interconnect \cite{paccagnella:2021-lord} & \cmark & \guidelinehref{1} &  \\
Frontend \cite{puddu:2021-frontal} & \cmark & \hrslice{} & Slice-granular fetch/decode \\ %
Instruction cache \cite{aciicmez:2007-yet, chen:2024-prefetchx, hahnel:2017-high, liu:2015-last, yarom:2014-flush} & \cmark & \guidelinehref{2.2.2} & Balancing (confining region inside a single cache line) is more limited \\
MMU / Page tables \cite{moghimi:2020-copycat, vanbulck:2017-telling, wang:2017-leaky, xu:2015-controlled} & \cmark & \guidelinehref{2.2.2} & Balancing (confining region inside a single page) is more limited \\
DRAM row buffer (instructions) \cite{pessl:2016-drama} &  & \guidelinehref{2.2.2} & \\
Instruction prefetcher \cite{lipp:2022-amd, zhang:2023-bunnyhop} &  & \guidelinehref{2.2.2} & \\
Instruction TLB \cite{gras:2018-translation, wang:2017-leaky} &  & \guidelinehref{2.2.2} & \\
\ac{PC}-dep data prefetcher \cite{chen:2023-afterimage, bhattacharya:2012-hardware, chen:2024-prefetchx, gruss:2016-prefetch,shin:2018-unveiling} &  & \guidelinehref{2.2.3} & \\
Directional predictor \cite{aciicmez:2007-on, aciicmez:2006-predicting, evtyushkin:2018-branchscope, huo:2019-bluethunder} &  & \guidelinehref{2.2.3} & Only for public branches, disable for \lob{} \\
BTB \cite{evtyushkin:2016-jump, lee:2017-inferring, yu:2023-all} &  & \guidelinehref{2.2.3} & Care must be taken not to leak the target transiently \\
Value prediction \cite{deng:2021-new, sanchez:2021-opening} &  & \guidelinehref{2.2.3} & \\
µop cache (DSB) \cite{deng:2022-leaky, kim:2021-uc-check, ren:2021-i} &  & \guidelinehref{2.2.4} &
Alternative: disable in folded regions \\
Silent stores \cite{sanchez:2021-opening} &  & \guidelinehref{2.1} & Disable in folded regions \\
Instruction cache bank \cite{yarom:2017-cachebleed} &  & Disable & Disable in folded regions (violates \hrslice{}) \\
\bottomrule
\end{tabular}}
\label{tab:cfl}
\end{table*}
\fi

\subsection*{C1 - Balanceable observations}
\label{guide:c1}

For optimizations yielding balanceable observations, the hardware can rely on the software to balance these observations according to the \libra{} leakage contract (\srbalanced).

\subsection*{C2 - Unbalanceable observations}
\label{guide:c2}

Optimizations yielding unbalanceable observations inevitably leak the control flow when the processor executes weakly balanced code.
One of objectives of \libra{} is to keep the optimizations in this category enabled as much as possible.
We further break down this category into two subcategories.

\subsubsection*{C2.1 - Inhibiting dummy composition}
\label{guide:c2.1}

The first subcategory groups optimizations that inhibit the composition of a dummy instruction.
Consider for example the silent-store optimization~\cite{lepak:2000-value,sanchez:2021-opening}.
A silent store writes a value to memory that is already present at the specified address.
A silent-store optimization skips writes to memory for silent stores.
This behavior turns a store instruction from a right-unsafe into a full-unsafe instruction since its timing and resource usage will depend not only on the memory address operand, as before, but also on the value to store.
To securely balance an unsafe instruction, both of its operands must be balanced as well.
Yet, since a store affects architectural state, a silent store is the only possible dummy instruction to balance a store, which would leak the control flow in the presence of a silent-store optimization.

\textbf{Guideline:} Disable instances from this optimization class in folded regions.
In case that the composition of a dummy instruction is inhibited by the combination of multiple optimizations, it sometimes suffices to disable only one of them.
An alternative solution to disabling the optimization is to blocklist the affected instruction(s) in the hardware-software security contract (\hrcontract).

\subsubsection*{C2.2 - Observations as a function of the instruction address}
\label{guide:c2.2}

The second subcategory concerns optimizations yielding observations that are a function of the instruction address.
We further divide this subcategory into four optimization classes.

\paragraph{C2.2.1 - Observations that reveal the level offset}
\label{guide:c2.2.1}

Some optimizations yield observations that are inherently different for each instruction within a slice. Hence, they \emph{inevitably} reveal the level offset of an executed instruction.
Take the branch predictor for instance.
The possible targets of a \lob{} instruction are different for each \lob{} of the same slice. Hence, if \lob{} targets are encoded in the branch predictor, an attacker sharing the predictor state could distinguish \lob{} instructions within a slice and learn the level offset.

\textbf{Guideline:} Disable optimizations of this type in folded regions.
For some optimizations, it is necessary to completely disable them (e.g., cache banking~\cite{yarom:2017-cachebleed}), for others this might be unnecessary, such as in the example of the directional predictor we gave, which must only be disabled for a \lob{} instruction.

\paragraph{C2.2.2 - Libra-safe optimizations}
\label{guide:c2.2.2}
Some optimizations, such as the instruction cache, directly benefit from \hruniform{}, which imposes a data-oblivious access pattern to the instruction memory. If the processor frontend follows \hrslice{}, by implementing slice-granular fetch/decode, these optimizations do not leak at a finer granularity than a slice.

\textbf{Guideline:} No hardware modifications are required.

\paragraph{C2.2.3 - PC-dependent mappings}
\label{guide:c2.2.3}

Some optimizations map instruction addresses to instruction-specific information.
The BTB and the PC-based strided data prefetcher, for instance, are typically implemented using table-based structures indexed by instruction address.

\textbf{Guideline:}
Thanks to folding, it becomes possible to represent instruction-specific information as slice-specific information.
Mappings from instruction address to instruction-granular information can be changed into mappings from slice address to slice-granular information (per \hrslice).
This usually requires minimal hardware modifications such as indexing hardware structures by slice address instead of by instruction address.

\paragraph{C2.2.4 - Instruction-specific optimizations}
\label{guide:c2.2.4}

Some optimizations perform different operations depending on the instruction and are not generalizable to the slice, contrary to optimization class \guidelinehref{2.2.3}.
An example is the µop cache, where the
operations, \emph{decode}, \emph{insert} and \emph{evict}, depend on the specific instruction.

\textbf{Guideline:} Instead of disabling these optimizations, it might be more beneficial to keep them enabled and always perform the operation on every instruction of the slice.
Keeping optimizations enabled for instances for this optimization class will typically be more expensive compared to optimization class \guidelinehref{2.2.3}.

\section{Implementation and Evaluation}

\subsection{Implementation}\label{sec:implementation}

Following the requirements from \cref{sec:design-choices} and the guidelines from \cref{sec:characterization}, we implemented \libra{} on Proteus~\cite{bognar:2023-proteus} (version 2024.01-O), a RISC-V out-of-order core designed to experiment with hardware security extensions.

\paragraph{\protect\hrcontract{}: Leakage contract}

We partitioned the RISC-V instruction set into leakage classes and validated the correctness of this classification via our automated security evaluation (cf. Subsection~\ref{sec:evaluation}).
In particular, load instructions leak their address via the data cache and are balanced in software.

\paragraph{\protect\hrslice{}: Slice-granular PC leakage}

Based on our analysis, the sources of precise PC leakage on Proteus were the branch target predictor, the instruction cache, and the instruction prefetcher.
For security reasons, we completely disable the branch predictor in folded regions.
Yet, thanks to the linear layout of folded regions, the performance impact of this is limited:
the next slice---where the execution will continue---will be prefetched by the time the branch condition is resolved.
The other hardware structures did not have to be altered, as explained next.

\paragraph{\protect\hruniform{}: Data-oblivious instruction memory access pattern}

The instruction fetch unit has been made \libra{}-aware.
In a secret-dependent folded region, the level offset of the currently executing instruction needs to be invisible to the memory subsystem.
In our implementation, this is achieved by fetching in a fixed order all cache lines including instructions from the current slice.
This also results in the state of the instruction cache being independent from the level offset.
As the prefetcher's behavior in Proteus only depends on the instruction cache state, it also observes the same access patterns and does not require any additional changes.
Thanks to the folded layout in memory, the prefetching remains very effective during the execution of folded regions.

\paragraph{\protect\hrsafe{}: Level-offset branch}

For our implementation, we introduced a variant of the \lob{} instruction, the \emph{terminating level-offset branch} \tlob{}.
This instruction behaves similarly as a regular \lob{}, but additionally encodes the number of slices in the next level, an optimization that makes the \lob{} instructions of the last level of a folded region unnecessary.
Our prototype encodes the \lob{} and \tlob{} variants for each RISC-V branch instruction by repurposing the two prefix bits in the fixed-width 32-bit RISC-V encoding, but other implementations could use the free opcode slots as defined by the RISC-V specification.
The current and previous \libra{} contexts are stored in a two-level hardware stack. We support folded regions with up to 16 basic blocks per level (8 for a terminating level).

\subsection{Evaluation}\label{sec:evaluation}

We evaluated our implementation by measuring the binary size and execution time overheads using a benchmark suite from related work~\cite{bognar:2023-microprofiler,pouyanrad:2020-scf,tsoupidi:2023-thwarting,winderix:2021-compiler,winderix:2024-architectural}; measuring the hardware overhead; and performing RTL-level noninterference testing to validate security.

\paragraph{Binary size.}

The results on binary size can be found in Table~\ref{tab:size-overhead}, which shows the binary size of the original benchmark, the overhead of balancing the secret-dependent branches (which still leaks information through unbalanceable observations), the overhead of linearizing the secret-dependent branches with Molnar's method~\cite{molnar:2005-program}, and finally, folding the secret-dependent branches with \libra{}.
The benchmarks show that the overhead is small compared to state-of-the-art linearized (constant-time) code.
In certain cases, the folded program can even be expressed more succinctly due to the characteristics of folded regions; after the last slice, the next instruction will be executed regardless of which branch was taken, making additional jump instructions, such as in Listing~\ref{running-vulnerable}, unnecessary.

\begin{table}[t]
  \footnotesize
  \caption{Overhead factors: execution time (cycles) / binary size (bytes).}
  \vspace{-.8em}
  \label{tab:size-overhead}
  \resizebox*{\linewidth}{!}{
    \begin{tabular}{l r c c c}
      \textbf{Benchmark} & \textbf{Baseline} & \textbf{Balanced} & \textbf{Linearized} & \textbf{Folded} \\

      \midrule

      fork           & 110 c   / 136 B  & 1.00x / 1.00x              & 1.11x / 1.12x               & 1.00x / 0.94x    \\
      triangle       & 116 c   / 132 B  & 1.03x / 1.06x              & 1.05x / 1.15x               & 0.98x / 1.00x    \\
      bsl            & 1415 c  / 336 B  & 1.20x / 1.04x              & 1.54x / 1.08x               & 1.24x / 1.01x    \\
      diamond        & 186 c   / 192 B  & 1.07x / 1.10x              & 1.18x / 1.23x               & 1.06x / 1.04x    \\
      kruskal        & 1573 c  / 452 B  & 1.09x / 1.05x              & 1.21x / 1.16x               & 1.16x / 1.04x    \\
      ifthenloop     & 407 c   / 200 B  & 1.35x / 1.20x              & 1.28x / 1.20x               & 1.56x / 1.16x    \\
      switch         & 1402 c  / 500 B  & 2.11x / 1.41x              & 2.70x / 1.92x               & 1.90x / 1.15x    \\
      sharevalue     & 1410 c  / 500 B  & 1.38x / 1.02x              & 1.76x / 1.15x               & 1.77x / 1.01x    \\
      mulmod16       & 339 c   / 276 B  & 1.23x / 1.01x              & 1.47x / 1.16x               & 1.32x / 0.96x    \\
      keypad         & 3490 c  / 416 B  & 2.86x / 1.08x              & 3.48x / 1.12x               & 3.61x / 1.06x    \\
      modexp2        & 11716 c / 324 B  & 1.72x / 1.02x              & 1.79x / 1.09x               & 1.78x / 1.01x    \\

      \midrule

      \textbf{mean}  &               & \textbf{1.38x / 1.09x}   & \textbf{1.57x / 1.20x}    & \textbf{1.46x / 1.03x}  \\

    \end{tabular}
  }
\vspace{-2.0em}
\end{table}

\paragraph{Execution time.}

We evaluate the execution time overhead using the same extended benchmark suite, shown in Table~\ref{tab:size-overhead}.
Even though our prototype implementation is not optimal, the benchmarks clearly show an advantage of \libra{} over linearized code.
The mean performance overhead of \libra{} is 46\% compared to 57\% of the linearized code (a relative overhead reduction of 19.3\%), and for certain benchmarks it not only performs much better than linearized code, but also outperforms insecure balanced code.
For example, the \texttt{switch} and \texttt{triangle} benchmarks clearly show the power of \libra{} over alternative approaches.

\paragraph{Hardware cost.}

We evaluate the hardware cost of implementing \libra{} on Proteus by synthesizing the design to the Xilinx  XC7A35TICSG324-1L FPGA in Xilinx Vivado 2022.2.
According to our measurements, the Libra additions increase the number of look-up tables by 11.4\% (from 16,531 to 18,414), the number of registers by 9.5\% (from 13,566 to 14,850), while keeping the critical path unchanged (37.4 ns).

\paragraph{Security.}

To evaluate the security of our implementation, we adopt a methodology from related work~\cite{bognar:2023-microprofiler,winderix:2021-compiler,winderix:2024-architectural}: noninterference-based testing in a cycle-accurate Verilog simulator.
For each benchmark, we manually ensure that all possible code paths are explored, which is feasible due to the relatively small size of the benchmarks.
We verify that, for executions with identical public inputs but varying secret inputs, the processor's internal signals associated with side channels remain consistent.
Any variation would indicate a leak of secret information.
The signals we focus on include the state of the branch predictor, addresses in the instruction and data caches, the state of the instruction prefetcher, and the occupancy of the execution units.
Each simulation is run independently, starting from a cold microarchitectural state.

Interestingly, our security evaluation revealed that the hardened \texttt{kruskal} benchmark (originally introduced in~\cite{mantel15transforming}) contains a recursive function with a secret-dependent number of iterations, as hypothesized by the original authors.
As a linearized implementation of Kruskal's algorithm is not a trivial effort and out of scope for our paper, we only transformed the secret-dependent branch in the main function of the benchmark.

\section{Discussion}
\label{sec:future}

\paragraph{Intra-cache-line attacks}

\hrslice{} requires that the \ac{PC} does not leak at a finer granularity than a level slice.
This implies that executing folded regions is secure only if an attacker is unable to observe intra-cache-line instruction memory accesses in folded regions.
To the best of our knowledge, only two published attacks expose intra-cache-line accesses: cache-bank conflicts~\cite{moghimi:2019-memjam} and false dependencies~\cite{yarom:2017-cachebleed}.
To comply with \hrslice{}, the optimizations exploited by these attacks must be disabled in folded regions. However, we do not consider them to be performance-critical.
In more recent microarchitectures these leakages have been closed, confirming our assumption that \hrslice{} will not significantly affect the performance. %

\paragraph{Future work}

There are some open questions that should be addressed in future work.
First, a limitation with the implementation of our prototype is that the pipeline stalls after fetching a \lob{} (until its condition is resolved).
As described in Section~\ref{sec:libra-overview}, the linearity of folded regions removes the uncertainty of what instructions to \emph{fetch} after a \lob{}.
However, the uncertainty of what instruction to \emph{execute} (i.e., what is the level offset of the next instruction in the next prefetched slice?) still remains.
For security reasons, the CPU cannot proceed based on a prediction of the direction of the \lob{} as this would induce a timing signal exposing the control flow.
On the RISC-V processor we used for our implementation, the \lob{} penalty is generally only a few cycles.
However, the penalty on superscalar CPUs with deeper pipelines (capable of fetching and executing multiple instructions in a single cycle) is much higher.
How to deal with the \lob{} penalty on this class of CPUs (up to 10-15 cycles on some CPUs~\cite{hennessy:2011-computer}) remains an open design question.
We believe that exploiting the regularity and the linearity of folded regions is key to solving this challenge.

Second, we informally argue (\cref{tab:cfl}) that many optimizations either comply with \hrslice{}, or can be adapted to do so.
Our empirical evaluation on an implementation featuring instruction and data caches, branch predictor and prefetcher, supports this argument.
Formally verifying that these optimizations comply with \hrslice{} would be an interesting avenue for future work.

Third, there is no compiler support for \libra{}. We manually identify, balance and fold secret-dependent regions at assembly level, restricting the size of our benchmark programs.
Compiler support for \libra{} to be able to conduct more extensive performance measurements on real-world programs is future work.

Fourth, we put software-based fault attacks~\cite{murdock:2020-plundervolt} and software-based power attacks~\cite{lipp:2021-platypus} out of scope. It is interesting future work to extend the leakage contract to cover these attacks.

Fifth, we only considered fixed-length instructions.
Variable-length instructions, as found in the popular x86 ISA, might require some changes to the current \libra{} design.
To increase the chances of adoption on architectures with variable-length instructions, it would be interesting to investigate how to support them.

Finally, creating the leakage contract is a manual effort.
A very interesting avenue for future work is to investigate how to generate the leakage contract from the hardware description (RTL level), as done in recent work~\cite{dinesh:2024-conjunct, hsiao:2022-scalable, mohr:2024-synthesizing}, and how to express contracts such that they can be consumed by a compiler~\cite{DBLP:journals/corr/abs-2012-14205}.

\section{Related Work}

\paragraph{CFL Hardening}

The literature contains a vast amount of prior work on software-only countermeasures against \ac{CFL} attacks.
Almost 25 years ago, Agat~\cite{agat:2000-transforming} already proposed a transformational security type system to balance conditional branches, later refined by Köpf and Mantel~\cite{kopf:2007-transformational}.
Non-transformational type systems to detect unbalanced branches have been implemented for the AVR~\cite{dewald:2017-avr} and MSP430~\cite{pouyanrad:2020-scf} architectures.
Winderix et al.~\cite{winderix:2021-compiler} proposed an algorithm for control-flow balancing during compilation and implemented and evaluated it using the LLVM compiler infrastructure~\cite{lattner:2004-llvm}.
Prior work considers secret-dependent control flow inherently insecure and strongly discourages control-flow balancing as a defense against \ac{CFL} attacks.
This view is incorporated in the well-established constant-time programming discipline~\cite{almeida:2016-verifying}, which disallows programmers from writing secret-dependent branches.
There is a rich literature to automatically detect~\cite{geimer:2023-systematic, jancar:2021-they} and eliminate~\cite{borello:2021-constantine,molnar:2005-program,rane:2015-raccoon,soares:2023-side,vanoverloop:2024-compiler,wu:2018-eliminating} secret-dependent control flow.

\paragraph{Architectural Support}

Many existing software-only countermeasures
leverage hardware primitives designed with performance in mind.
The resulting security guarantees are brittle, as these countermeasures rely on undocumented behavior that is not guaranteed in future versions of the hardware.
For instance, conditional execution (\aka predicated execution or predication) is supported in some form by the x86, Arm and RISC-V ISAs to accelerate some hard-to-predict branches, yet is sometimes used to eliminate secret-dependent control flow~\cite{borello:2021-constantine,coppens:2009-practical,wu:2018-eliminating}, critically relying on the (current) data-oblivious behavior.
Another example is Intel TSX, used as a primitive for a countermeasure proposed by Gruss et al.~\cite{gruss:2017-strong}.
In contrast, \libra{} provides principled support by augmenting the ISA with a security contract representing its security guarantees.
Many modern CPUs provide safe instructions, making explicit security guarantees part of the ISA.
An example is constant-time support for AES to improve speed and security of applications relying on it (e.g.,~\cite{akdemir:2010-breakthrough}).
As another example, x86, Arm  and RISC-V ISAs have extended their ISAs with facilities to turn unsafe instructions into safe instruction via a feature called Data (Operand) Independent Timing~\cite{arm:2021-dit, intel:2022-doit}.
To the best of our knowledge, architectural support to securely execute balanced code on high-end processors has not been proposed before.
Recently, Winderix et al.~\cite{winderix:2024-architectural} proposed architectural support for control-flow balancing and control-flow linearization to efficiently counter \ac{CFL} attacks.
Unfortunately, their solution for control-flow balancing is only targeted towards processors with a microcontroller profile featuring simple processor pipelines.
In contrast, our proposal for control-flow balancing is designed to securely execute balanced regions on high-end systems.

\paragraph{Hardware-Software Leakage Contracts}

Recent work on hardware-software leakage contracts~\cite{guarnieri:2021-hardware, heiser:2018-safety, lowe:2018-position, mosier:2022-axiomatic, ponce:2021-cats} proposes to augment the ISA with a specification of how the hardware leaks information.
\libra{} also specifies such a hardware-software leakage contract and partitions the instruction set into leakage classes.
Instructions of the same leakage class leak the same information and thus are indistinguishable to an attacker.
The first idea for classifying the instruction set this way was proposed by Winderix et al.~\cite{winderix:2021-compiler} under the form of \emph{latency classes}, a concept that was later generalized into \emph{leakage classes} by Bognar et al.~\cite{bognar:2023-microprofiler}.
Yu et al.~\cite{yu:2019-data}, propose ISA design principles for data Oblivious ISAs (OISAs) to perform side-channel resistant and high-performance computations.
The authors proposes an ISA-level data oblivious abstraction, which partitions the instruction set into safe and unsafe instructions.
In contrast to \libra{}, their work does not include ISA-level principles to make control-transfer instructions data oblivious, and hence is complementary to ours.

\paragraph{Secure Compilation for Side-Channel Defenses}
Our secure compilation proof is inspired by existing proof techniques for preservation of side channel defenses by compilers~\cite{DBLP:journals/pacmpl/BartheBGHLPT20, barthe:2018-secure,DBLP:conf/csfw/BartheBHP21} with some adaptations to account for important differences. Compared to the constant-time policy~\cite{DBLP:journals/pacmpl/BartheBGHLPT20}, balancing allows program counters of low-equivalent executions to diverge, and compared to constant-resource transformation~\cite{DBLP:conf/csfw/BartheBHP21}, our transformation is not entirely leakage preserving.
In this work, we assume a non-canceling leakage model (\ie{} \(\obs_{1} \cdot \obs_{2} =  \obs_{1}' \cdot \obs_{2}' \implies \obs_{1} = \obs_{1}' \wedge \obs_{2} = \obs_{2}'\)).
Secure compilation for relaxed policies based on canceling leakage (\eg{} program cost in terms of clock cycles) have been proposed~\cite{DBLP:conf/csfw/BartheBHP21}. However, it remains unclear whether there exist a concrete threat model (attacker model and microarchitecture) to which these policies securely apply.
In particular, such relaxed policies are insecure against the strong attacker that we consider in this paper~\cite{vanbulck:2018-nemesis}.

\section{Conclusion}

In this paper, we challenged the widely-held belief that control-flow balancing is either insecure or inefficient on modern out-of-order CPUs.
We proposed \libra, a novel hardware-software codesign for principled, secure and efficient balanced execution.
We gave evidence that it is possible with minimal hardware support to securely balance secret-dependent control flow while keeping performance-critical hardware optimizations enabled.
A key feature of \libra{} is the specification of a hardware-software security contract that software can rely on to harden applications in a principled way, similar to how software relies on an ISA specification for the functional correctness of programs.
\libra{} minimally extends the instruction set to make balanced execution secure and efficient on high-end systems, mainly by virtue of folding.
We formalized the \libra{} semantics and the folding transformation, which we proved correct and secure.
We also presented a characterization of how microarchitectural optimizations can leak a program's control flow, the basis for our recommendations for hardware designers wanting to adopt \libra{} to their designs.
Our implementation and evaluation show significant performance benefits compared to state-of-the-art control-flow linearization at low hardware cost.

\begin{acks}
  We thank the anonymous reviewers for their valuable feedback.
  This research was partially funded by the ORSHIN project (Horizon Europe grant agreement \#101070008), the Research Foundation – Flanders (FWO) via grants \#G081322N and \#12B2A24N, and the Flemish Research Programme Cybersecurity.
\end{acks}

\bibliographystyle{ACM-Reference-Format}
\bibliography{bibliography}


\begin{thebibliography}{101}


\ifx \showCODEN    \undefined \def \showCODEN     #1{\unskip}     \fi
\ifx \showDOI      \undefined \def \showDOI       #1{#1}\fi
\ifx \showISBNx    \undefined \def \showISBNx     #1{\unskip}     \fi
\ifx \showISBNxiii \undefined \def \showISBNxiii  #1{\unskip}     \fi
\ifx \showISSN     \undefined \def \showISSN      #1{\unskip}     \fi
\ifx \showLCCN     \undefined \def \showLCCN      #1{\unskip}     \fi
\ifx \shownote     \undefined \def \shownote      #1{#1}          \fi
\ifx \showarticletitle \undefined \def \showarticletitle #1{#1}   \fi
\ifx \showURL      \undefined \def \showURL       {\relax}        \fi
\providecommand\bibfield[2]{#2}
\providecommand\bibinfo[2]{#2}
\providecommand\natexlab[1]{#1}
\providecommand\showeprint[2][]{arXiv:#2}

\bibitem[Aciicmez(2007)]%
        {aciicmez:2007-yet}
\bibfield{author}{\bibinfo{person}{Onur Aciicmez}.}
  \bibinfo{year}{2007}\natexlab{}.
\newblock \showarticletitle{Yet Another MicroArchitectural Attack: Exploiting
  I-Cache}. In \bibinfo{booktitle}{\emph{ACM Workshop on Computer Security
  Architecture}}.
\newblock


\bibitem[Aciicmez et~al\mbox{.}(2006)]%
        {aciicmez:2006-predicting}
\bibfield{author}{\bibinfo{person}{Onur Aciicmez},
  \bibinfo{person}{{\c{C}}etin~Kaya Ko{\c{c}}}, {and}
  \bibinfo{person}{Jean-Pierre Seifert}.} \bibinfo{year}{2006}\natexlab{}.
\newblock \showarticletitle{Predicting Secret Keys Via Branch Prediction}. In
  \bibinfo{booktitle}{\emph{Topics in Cryptology -- CT-RSA 2007}}.
\newblock


\bibitem[Aciicmez et~al\mbox{.}(2007)]%
        {aciicmez:2007-on}
\bibfield{author}{\bibinfo{person}{Onur Aciicmez},
  \bibinfo{person}{\c{C}etin~Kaya Ko\c{c}}, {and} \bibinfo{person}{Jean-Pierre
  Seifert}.} \bibinfo{year}{2007}\natexlab{}.
\newblock \showarticletitle{On the Power of Simple Branch Prediction Analysis}.
  In \bibinfo{booktitle}{\emph{ASIACCS}}.
\newblock


\bibitem[Agat(2000)]%
        {agat:2000-transforming}
\bibfield{author}{\bibinfo{person}{Johan Agat}.}
  \bibinfo{year}{2000}\natexlab{}.
\newblock \showarticletitle{Transforming out Timing Leaks}. In
  \bibinfo{booktitle}{\emph{POPL}}.
\newblock


\bibitem[Aimoniotis et~al\mbox{.}(2021)]%
        {aimoniotis:2021-reorder}
\bibfield{author}{\bibinfo{person}{Pavlos Aimoniotis},
  \bibinfo{person}{Christos Sakalis}, \bibinfo{person}{Magnus Själander},
  {and} \bibinfo{person}{Stefanos Kaxiras}.} \bibinfo{year}{2021}\natexlab{}.
\newblock \showarticletitle{Reorder Buffer Contention: A Forward Speculative
  Interference Attack for Speculation Invariant Instructions}.
\newblock \bibinfo{journal}{\emph{IEEE Computer Architecture Letters}}
  (\bibinfo{year}{2021}).
\newblock


\bibitem[Akdemir et~al\mbox{.}(2010)]%
        {akdemir:2010-breakthrough}
\bibfield{author}{\bibinfo{person}{Kahraman Akdemir}, \bibinfo{person}{Martin
  Dixon}, \bibinfo{person}{Wajdi Feghali}, \bibinfo{person}{Patrick Fay},
  \bibinfo{person}{Vinodh Gopal}, \bibinfo{person}{Jim Guilford},
  \bibinfo{person}{Erdinc Ozturk}, \bibinfo{person}{Gil Wolrich}, {and}
  \bibinfo{person}{Ronen Zohar}.} \bibinfo{year}{2010}\natexlab{}.
\newblock \showarticletitle{Breakthrough AES performance with intel AES new
  instructions}.
\newblock \bibinfo{journal}{\emph{White paper 12}} (\bibinfo{year}{2010}).
\newblock


\bibitem[Aldaya et~al\mbox{.}(2019)]%
        {aldaya:2019-port}
\bibfield{author}{\bibinfo{person}{Alejandro~Cabrera Aldaya},
  \bibinfo{person}{Billy~Bob Brumley}, \bibinfo{person}{Sohaib ul Hassan},
  \bibinfo{person}{Cesar Pereida~García}, {and} \bibinfo{person}{Nicola
  Tuveri}.} \bibinfo{year}{2019}\natexlab{}.
\newblock \showarticletitle{Port Contention for Fun and Profit}. In
  \bibinfo{booktitle}{\emph{S\&P}}.
\newblock


\bibitem[Almeida et~al\mbox{.}(2016)]%
        {almeida:2016-verifying}
\bibfield{author}{\bibinfo{person}{Jose~Bacelar Almeida},
  \bibinfo{person}{Manuel Barbosa}, \bibinfo{person}{Gilles Barthe},
  \bibinfo{person}{Fran{\c c}ois Dupressoir}, {and} \bibinfo{person}{Michael
  Emmi}.} \bibinfo{year}{2016}\natexlab{}.
\newblock \showarticletitle{Verifying {Constant-Time} Implementations}. In
  \bibinfo{booktitle}{\emph{USENIX Security}}.
\newblock


\bibitem[Andrysco et~al\mbox{.}(2015)]%
        {andrysco:2015-on}
\bibfield{author}{\bibinfo{person}{Marc Andrysco}, \bibinfo{person}{David
  Kohlbrenner}, \bibinfo{person}{Keaton Mowery}, \bibinfo{person}{Ranjit
  Jhala}, \bibinfo{person}{Sorin Lerner}, {and} \bibinfo{person}{Hovav
  Shacham}.} \bibinfo{year}{2015}\natexlab{}.
\newblock \showarticletitle{On Subnormal Floating Point and Abnormal Timing}.
  In \bibinfo{booktitle}{\emph{S\&P}}.
\newblock


\bibitem[Arm(2021)]%
        {arm:2021-dit}
\bibfield{author}{\bibinfo{person}{Arm}.} \bibinfo{year}{2021}\natexlab{}.
\newblock \bibinfo{title}{{Arm Armv8-A Architecture Registers: DIT, Data
  Independent Timing}}.
\newblock
  \bibinfo{howpublished}{\url{https://developer.arm.com/documentation/ddi0595/2021-06/AArch64-Registers/DIT--Data-Independent-Timing}}.
\newblock


\bibitem[Barthe et~al\mbox{.}(2020)]%
        {DBLP:journals/pacmpl/BartheBGHLPT20}
\bibfield{author}{\bibinfo{person}{Gilles Barthe}, \bibinfo{person}{Sandrine
  Blazy}, \bibinfo{person}{Benjamin Gr{\'{e}}goire},
  \bibinfo{person}{R{\'{e}}mi Hutin}, \bibinfo{person}{Vincent Laporte},
  \bibinfo{person}{David Pichardie}, {and} \bibinfo{person}{Alix Trieu}.}
  \bibinfo{year}{2020}\natexlab{}.
\newblock \showarticletitle{Formal verification of a constant-time preserving
  {C} compiler}.
\newblock \bibinfo{journal}{\emph{Proc. {ACM} Program. Lang.}}
  \bibinfo{volume}{4}, \bibinfo{number}{{POPL}} (\bibinfo{year}{2020}),
  \bibinfo{pages}{7:1--7:30}.
\newblock


\bibitem[Barthe et~al\mbox{.}(2021)]%
        {DBLP:conf/csfw/BartheBHP21}
\bibfield{author}{\bibinfo{person}{Gilles Barthe}, \bibinfo{person}{Sandrine
  Blazy}, \bibinfo{person}{R{\'{e}}mi Hutin}, {and} \bibinfo{person}{David
  Pichardie}.} \bibinfo{year}{2021}\natexlab{}.
\newblock \showarticletitle{Secure Compilation of Constant-Resource Programs}.
  In \bibinfo{booktitle}{\emph{{CSF}}}. \bibinfo{publisher}{{IEEE}},
  \bibinfo{pages}{1--12}.
\newblock


\bibitem[Barthe et~al\mbox{.}(2018)]%
        {barthe:2018-secure}
\bibfield{author}{\bibinfo{person}{Gilles Barthe}, \bibinfo{person}{Benjamin
  Grégoire}, {and} \bibinfo{person}{Vincent Laporte}.}
  \bibinfo{year}{2018}\natexlab{}.
\newblock \showarticletitle{Secure Compilation of Side-Channel Countermeasures:
  The Case of Cryptographic “Constant-Time”}. In
  \bibinfo{booktitle}{\emph{CSF}}.
\newblock


\bibitem[Bhattacharya et~al\mbox{.}(2012)]%
        {bhattacharya:2012-hardware}
\bibfield{author}{\bibinfo{person}{Sarani Bhattacharya},
  \bibinfo{person}{Chester Rebeiro}, {and} \bibinfo{person}{Debdeep
  Mukhopadhyay}.} \bibinfo{year}{2012}\natexlab{}.
\newblock \showarticletitle{Hardware Prefetchers Leak: A Revisit of SVF for
  Cache-Timing Attacks}. In \bibinfo{booktitle}{\emph{IEEE/ACM Intern. Symp. on
  Microarch. Workshops}}.
\newblock


\bibitem[Bhattacharyya et~al\mbox{.}(2019)]%
        {bhattacharyya:2019-smotherspectre}
\bibfield{author}{\bibinfo{person}{Atri Bhattacharyya},
  \bibinfo{person}{Alexandra Sandulescu}, \bibinfo{person}{Matthias
  Neugschwandtner}, \bibinfo{person}{Alessandro Sorniotti},
  \bibinfo{person}{Babak Falsafi}, \bibinfo{person}{Mathias Payer}, {and}
  \bibinfo{person}{Anil Kurmus}.} \bibinfo{year}{2019}\natexlab{}.
\newblock \showarticletitle{SMoTherSpectre: Exploiting Speculative Execution
  through Port Contention}. In \bibinfo{booktitle}{\emph{CCS}}.
\newblock


\bibitem[Bognar et~al\mbox{.}(2023a)]%
        {bognar:2023-proteus}
\bibfield{author}{\bibinfo{person}{Marton Bognar}, \bibinfo{person}{Job
  Noorman}, {and} \bibinfo{person}{Frank Piessens}.}
  \bibinfo{year}{2023}\natexlab{a}.
\newblock \showarticletitle{Proteus: An Extensible RISC-V Core for Hardware
  Extensions}. In \bibinfo{booktitle}{\emph{RISC-V Summit Europe '23}}.
\newblock


\bibitem[Bognar et~al\mbox{.}(2022)]%
        {bognar:2022-mind}
\bibfield{author}{\bibinfo{person}{Marton Bognar}, \bibinfo{person}{Jo
  Van~Bulck}, {and} \bibinfo{person}{Frank Piessens}.}
  \bibinfo{year}{2022}\natexlab{}.
\newblock \showarticletitle{Mind the Gap: Studying the Insecurity of Provably
  Secure Embedded Trusted Execution Architectures}. In
  \bibinfo{booktitle}{\emph{S\&P}}.
\newblock


\bibitem[Bognar et~al\mbox{.}(2023b)]%
        {bognar:2023-microprofiler}
\bibfield{author}{\bibinfo{person}{Marton Bognar}, \bibinfo{person}{Hans
  Winderix}, \bibinfo{person}{Jo {Van Bulck}}, {and} \bibinfo{person}{Frank
  Piessens}.} \bibinfo{year}{2023}\natexlab{b}.
\newblock \showarticletitle{MicroProfiler: Principled Side-Channel Mitigation
  through Microarchitectural Profiling}. In
  \bibinfo{booktitle}{\emph{EuroS\&P}}.
\newblock


\bibitem[Borrello et~al\mbox{.}(2021)]%
        {borello:2021-constantine}
\bibfield{author}{\bibinfo{person}{Pietro Borrello},
  \bibinfo{person}{Daniele~Cono D'Elia}, \bibinfo{person}{Leonardo Querzoni},
  {and} \bibinfo{person}{Cristiano Giuffrida}.}
  \bibinfo{year}{2021}\natexlab{}.
\newblock \showarticletitle{Constantine: Automatic Side-Channel Resistance
  Using Efficient Control and Data Flow Linearization}. In
  \bibinfo{booktitle}{\emph{CCS}}.
\newblock


\bibitem[Bulck et~al\mbox{.}(2017)]%
        {vanbulck:2017-telling}
\bibfield{author}{\bibinfo{person}{Jo~Van Bulck}, \bibinfo{person}{Nico
  Weichbrodt}, \bibinfo{person}{R{\"u}diger Kapitza}, \bibinfo{person}{Frank
  Piessens}, {and} \bibinfo{person}{Raoul Strackx}.}
  \bibinfo{year}{2017}\natexlab{}.
\newblock \showarticletitle{Telling Your Secrets without Page Faults: Stealthy
  Page {Table-Based} Attacks on Enclaved Execution}. In
  \bibinfo{booktitle}{\emph{USENIX Security}}.
\newblock


\bibitem[Canella et~al\mbox{.}(2019)]%
        {canella:2019-systematic}
\bibfield{author}{\bibinfo{person}{Claudio Canella}, \bibinfo{person}{Jo
  Van~Bulck}, \bibinfo{person}{Michael Schwarz}, \bibinfo{person}{Moritz Lipp},
  \bibinfo{person}{Benjamin von Berg}, \bibinfo{person}{Philipp Ortner},
  \bibinfo{person}{Frank Piessens}, \bibinfo{person}{Dmitry Evtyushkin}, {and}
  \bibinfo{person}{Daniel Gruss}.} \bibinfo{year}{2019}\natexlab{}.
\newblock \showarticletitle{A systematic evaluation of transient execution
  attacks and defenses}. In \bibinfo{booktitle}{\emph{USENIX Security}}.
\newblock


\bibitem[Chen et~al\mbox{.}(2024b)]%
        {chen:2024-gofetch}
\bibfield{author}{\bibinfo{person}{Boru Chen}, \bibinfo{person}{Yingchen Wang},
  \bibinfo{person}{Pradyumna Shome}, \bibinfo{person}{Christopher~W Fletcher},
  \bibinfo{person}{David Kohlbrenner}, \bibinfo{person}{Riccardo Paccagnella},
  {and} \bibinfo{person}{Daniel Genkin}.} \bibinfo{year}{2024}\natexlab{b}.
\newblock \showarticletitle{GoFetch: Breaking constant-time cryptographic
  implementations using data memory-dependent prefetchers}. In
  \bibinfo{booktitle}{\emph{USENIX Security}}.
\newblock


\bibitem[Chen et~al\mbox{.}(2024a)]%
        {chen:2024-prefetchx}
\bibfield{author}{\bibinfo{person}{Yun Chen}, \bibinfo{person}{Ali Hajiabadi},
  \bibinfo{person}{Lingfeng Pei}, {and} \bibinfo{person}{Trevor~E Carlson}.}
  \bibinfo{year}{2024}\natexlab{a}.
\newblock \showarticletitle{PREFETCHX: Cross-Core Cache-Agnostic
  Prefetcher-Based Side-Channel Attacks}. In \bibinfo{booktitle}{\emph{HPCA}}.
\newblock


\bibitem[Chen et~al\mbox{.}(2023)]%
        {chen:2023-afterimage}
\bibfield{author}{\bibinfo{person}{Yun Chen}, \bibinfo{person}{Lingfeng Pei},
  {and} \bibinfo{person}{Trevor~E. Carlson}.} \bibinfo{year}{2023}\natexlab{}.
\newblock \showarticletitle{AfterImage: Leaking Control Flow Data and Tracking
  Load Operations via the Hardware Prefetcher}. In
  \bibinfo{booktitle}{\emph{ASPLOS}}.
\newblock


\bibitem[Coppens et~al\mbox{.}(2009)]%
        {coppens:2009-practical}
\bibfield{author}{\bibinfo{person}{Bart Coppens}, \bibinfo{person}{Ingrid
  Verbauwhede}, \bibinfo{person}{Koen De~Bosschere}, {and}
  \bibinfo{person}{Bjorn De~Sutter}.} \bibinfo{year}{2009}\natexlab{}.
\newblock \showarticletitle{Practical Mitigations for Timing-Based Side-Channel
  Attacks on Modern x86 Processors}. In \bibinfo{booktitle}{\emph{S\&P}}.
\newblock


\bibitem[Deng et~al\mbox{.}(2022)]%
        {deng:2022-leaky}
\bibfield{author}{\bibinfo{person}{Shuwen Deng}, \bibinfo{person}{Bowen Huang},
  {and} \bibinfo{person}{Jakub Szefer}.} \bibinfo{year}{2022}\natexlab{}.
\newblock \showarticletitle{Leaky Frontends: Security Vulnerabilities in
  Processor Frontends}. In \bibinfo{booktitle}{\emph{HPCA}}.
\newblock


\bibitem[Deng and Szefer(2021)]%
        {deng:2021-new}
\bibfield{author}{\bibinfo{person}{Shuwen Deng} {and} \bibinfo{person}{Jakub
  Szefer}.} \bibinfo{year}{2021}\natexlab{}.
\newblock \showarticletitle{New Predictor-Based Attacks in Processors}. In
  \bibinfo{booktitle}{\emph{DAC}}.
\newblock


\bibitem[Dewald et~al\mbox{.}(2017)]%
        {dewald:2017-avr}
\bibfield{author}{\bibinfo{person}{Florian Dewald}, \bibinfo{person}{Heiko
  Mantel}, {and} \bibinfo{person}{Alexandra Weber}.}
  \bibinfo{year}{2017}\natexlab{}.
\newblock \showarticletitle{{AVR} Processors as a Platform for Language-Based
  Security}. In \bibinfo{booktitle}{\emph{ESORICS}}.
\newblock


\bibitem[Dinesh et~al\mbox{.}(2024)]%
        {dinesh:2024-conjunct}
\bibfield{author}{\bibinfo{person}{Sushant Dinesh}, \bibinfo{person}{Madhusudan
  Parthasarathy}, {and} \bibinfo{person}{Christopher Fletcher}.}
  \bibinfo{year}{2024}\natexlab{}.
\newblock \showarticletitle{CONJUNCT: Learning Inductive Invariants to Prove
  Unbounded Instruction Safety Against Microarchitectural Timing Attacks}. In
  \bibinfo{booktitle}{\emph{S\&P}}.
\newblock


\bibitem[Evtyushkin et~al\mbox{.}(2016)]%
        {evtyushkin:2016-jump}
\bibfield{author}{\bibinfo{person}{Dmitry Evtyushkin}, \bibinfo{person}{Dmitry
  Ponomarev}, {and} \bibinfo{person}{Nael Abu-Ghazaleh}.}
  \bibinfo{year}{2016}\natexlab{}.
\newblock \showarticletitle{Jump over ASLR: Attacking branch predictors to
  bypass ASLR}. In \bibinfo{booktitle}{\emph{MICRO}}.
\newblock


\bibitem[Evtyushkin et~al\mbox{.}(2018)]%
        {evtyushkin:2018-branchscope}
\bibfield{author}{\bibinfo{person}{Dmitry Evtyushkin}, \bibinfo{person}{Ryan
  Riley}, \bibinfo{person}{Nael~CSE Abu-Ghazaleh}, \bibinfo{person}{ECE}, {and}
  \bibinfo{person}{Dmitry Ponomarev}.} \bibinfo{year}{2018}\natexlab{}.
\newblock \showarticletitle{BranchScope: A New Side-Channel Attack on
  Directional Branch Predictor}. In \bibinfo{booktitle}{\emph{ASPLOS}}.
\newblock


\bibitem[Gast et~al\mbox{.}(2024)]%
        {gast:2023-remote}
\bibfield{author}{\bibinfo{person}{Stefan Gast}, \bibinfo{person}{Jonas
  Juffinger}, \bibinfo{person}{Lukas Maar}, \bibinfo{person}{Christoph Royer},
  \bibinfo{person}{Andreas Kogler}, {and} \bibinfo{person}{Daniel Gruss}.}
  \bibinfo{year}{2024}\natexlab{}.
\newblock \showarticletitle{Remote Scheduler Contention Attacks}. In
  \bibinfo{booktitle}{\emph{FC}}.
\newblock


\bibitem[Gast et~al\mbox{.}(2023)]%
        {gast:2023-squip}
\bibfield{author}{\bibinfo{person}{Stefan Gast}, \bibinfo{person}{Jonas
  Juffinger}, \bibinfo{person}{Martin Schwarzl}, \bibinfo{person}{Gururaj
  Saileshwar}, \bibinfo{person}{Andreas Kogler}, \bibinfo{person}{Simone
  Franza}, \bibinfo{person}{Markus Köstl}, {and} \bibinfo{person}{Daniel
  Gruss}.} \bibinfo{year}{2023}\natexlab{}.
\newblock \showarticletitle{SQUIP: Exploiting the Scheduler Queue Contention
  Side Channel}. In \bibinfo{booktitle}{\emph{S\&P}}.
\newblock


\bibitem[Ge et~al\mbox{.}(2018)]%
        {ge:2018-survey}
\bibfield{author}{\bibinfo{person}{Qian Ge}, \bibinfo{person}{Yuval Yarom},
  \bibinfo{person}{David Cock}, {and} \bibinfo{person}{Gernot Heiser}.}
  \bibinfo{year}{2018}\natexlab{}.
\newblock \showarticletitle{A survey of microarchitectural timing attacks and
  countermeasures on contemporary hardware}.
\newblock \bibinfo{journal}{\emph{Journal of Cryptographic Engineering}}
  (\bibinfo{year}{2018}).
\newblock


\bibitem[Geimer et~al\mbox{.}(2023)]%
        {geimer:2023-systematic}
\bibfield{author}{\bibinfo{person}{Antoine Geimer}, \bibinfo{person}{Math{\'e}o
  Vergnolle}, \bibinfo{person}{Fr{\'e}d{\'e}ric Recoules},
  \bibinfo{person}{Lesly-Ann Daniel}, \bibinfo{person}{S{\'e}bastien Bardin},
  {and} \bibinfo{person}{Cl{\'e}mentine Maurice}.}
  \bibinfo{year}{2023}\natexlab{}.
\newblock \showarticletitle{A Systematic Evaluation of Automated Tools for
  Side-Channel Vulnerabilities Detection in Cryptographic Libraries}. In
  \bibinfo{booktitle}{\emph{CCS}}.
\newblock


\bibitem[Goguen and Meseguer(1982)]%
        {DBLP:conf/sp/GoguenM82a}
\bibfield{author}{\bibinfo{person}{Joseph~A. Goguen} {and}
  \bibinfo{person}{Jos{\'{e}} Meseguer}.} \bibinfo{year}{1982}\natexlab{}.
\newblock \showarticletitle{Security Policies and Security Models}. In
  \bibinfo{booktitle}{\emph{S{\&}P}}. \bibinfo{publisher}{{IEEE} Computer
  Society}, \bibinfo{pages}{11--20}.
\newblock


\bibitem[Gras et~al\mbox{.}(2018)]%
        {gras:2018-translation}
\bibfield{author}{\bibinfo{person}{Ben Gras}, \bibinfo{person}{Kaveh Razavi},
  \bibinfo{person}{Herbert Bos}, {and} \bibinfo{person}{Cristiano Giuffrida}.}
  \bibinfo{year}{2018}\natexlab{}.
\newblock \showarticletitle{Translation Leak-aside Buffer: Defeating Cache
  Side-channel Protections with {TLB} Attacks}. In
  \bibinfo{booktitle}{\emph{USENIX Security}}.
\newblock


\bibitem[Gruss et~al\mbox{.}(2017)]%
        {gruss:2017-strong}
\bibfield{author}{\bibinfo{person}{Daniel Gruss}, \bibinfo{person}{Julian
  Lettner}, \bibinfo{person}{Felix Schuster}, \bibinfo{person}{Olya Ohrimenko},
  \bibinfo{person}{Istvan Haller}, {and} \bibinfo{person}{Manuel Costa}.}
  \bibinfo{year}{2017}\natexlab{}.
\newblock \showarticletitle{Strong and Efficient Cache {Side-Channel}
  Protection using Hardware Transactional Memory}. In
  \bibinfo{booktitle}{\emph{USENIX Security}}.
\newblock


\bibitem[Gruss et~al\mbox{.}(2016a)]%
        {gruss:2016-prefetch}
\bibfield{author}{\bibinfo{person}{Daniel Gruss},
  \bibinfo{person}{Cl\'{e}mentine Maurice}, \bibinfo{person}{Anders Fogh},
  \bibinfo{person}{Moritz Lipp}, {and} \bibinfo{person}{Stefan Mangard}.}
  \bibinfo{year}{2016}\natexlab{a}.
\newblock \showarticletitle{Prefetch Side-Channel Attacks: Bypassing SMAP and
  Kernel ASLR}. In \bibinfo{booktitle}{\emph{CCS}}.
\newblock


\bibitem[Gruss et~al\mbox{.}(2016b)]%
        {gruss:2016-flush}
\bibfield{author}{\bibinfo{person}{Daniel Gruss},
  \bibinfo{person}{Cl{\'e}mentine Maurice}, \bibinfo{person}{Klaus Wagner},
  {and} \bibinfo{person}{Stefan Mangard}.} \bibinfo{year}{2016}\natexlab{b}.
\newblock \showarticletitle{Flush+Flush: A Fast and Stealthy Cache Attack}. In
  \bibinfo{booktitle}{\emph{DIMVA}}.
\newblock


\bibitem[Guarnieri et~al\mbox{.}(2021)]%
        {guarnieri:2021-hardware}
\bibfield{author}{\bibinfo{person}{Marco Guarnieri}, \bibinfo{person}{Boris
  Köpf}, \bibinfo{person}{Jan Reineke}, {and} \bibinfo{person}{Pepe Vila}.}
  \bibinfo{year}{2021}\natexlab{}.
\newblock \showarticletitle{Hardware-Software Contracts for Secure
  Speculation}. In \bibinfo{booktitle}{\emph{S\&P}}.
\newblock


\bibitem[Guarnieri and Patrignani(2020)]%
        {DBLP:journals/corr/abs-2012-14205}
\bibfield{author}{\bibinfo{person}{Marco Guarnieri} {and}
  \bibinfo{person}{Marco Patrignani}.} \bibinfo{year}{2020}\natexlab{}.
\newblock \showarticletitle{Contract-Aware Secure Compilation}.
\newblock \bibinfo{journal}{\emph{CoRR}}  \bibinfo{volume}{abs/2012.14205}
  (\bibinfo{year}{2020}).
\newblock


\bibitem[H{\"a}hnel et~al\mbox{.}(2017)]%
        {hahnel:2017-high}
\bibfield{author}{\bibinfo{person}{Marcus H{\"a}hnel}, \bibinfo{person}{Weidong
  Cui}, {and} \bibinfo{person}{Marcus Peinado}.}
  \bibinfo{year}{2017}\natexlab{}.
\newblock \showarticletitle{{High-Resolution} Side Channels for Untrusted
  Operating Systems}. In \bibinfo{booktitle}{\emph{USENIX ATC}}.
\newblock


\bibitem[Heiser(2018)]%
        {heiser:2018-safety}
\bibfield{author}{\bibinfo{person}{Gernot Heiser}.}
  \bibinfo{year}{2018}\natexlab{}.
\newblock \showarticletitle{For Safety's Sake: We Need a New Hardware-Software
  Contract!}
\newblock \bibinfo{journal}{\emph{IEEE Des. Test 35.2}} (\bibinfo{year}{2018}).
\newblock


\bibitem[Hennessy and Patterson(2019)]%
        {hennessy:2011-computer}
\bibfield{author}{\bibinfo{person}{John~L Hennessy} {and}
  \bibinfo{person}{David~A Patterson}.} \bibinfo{year}{2019}\natexlab{}.
\newblock \bibinfo{booktitle}{\emph{Computer architecture: a quantitative
  approach}}.
\newblock


\bibitem[Hsiao et~al\mbox{.}(2022)]%
        {hsiao:2022-scalable}
\bibfield{author}{\bibinfo{person}{Yao Hsiao}, \bibinfo{person}{Dominic~P
  Mulligan}, \bibinfo{person}{Nikos Nikoleris}, \bibinfo{person}{Gustavo
  Petri}, {and} \bibinfo{person}{Caroline Trippel}.}
  \bibinfo{year}{2022}\natexlab{}.
\newblock \showarticletitle{Scalable assurance via verifiable hardware-software
  contracts}. In \bibinfo{booktitle}{\emph{OSCAR}}.
\newblock


\bibitem[Huo et~al\mbox{.}(2019)]%
        {huo:2019-bluethunder}
\bibfield{author}{\bibinfo{person}{Tianlin Huo}, \bibinfo{person}{Xiaoni Meng},
  \bibinfo{person}{Wenhao Wang}, \bibinfo{person}{Chunliang Hao},
  \bibinfo{person}{Pei Zhao}, \bibinfo{person}{Jian Zhai}, {and}
  \bibinfo{person}{Mingshu Li}.} \bibinfo{year}{2019}\natexlab{}.
\newblock \showarticletitle{Bluethunder: A 2-level Directional Predictor Based
  Side-Channel Attack against SGX}.
\newblock \bibinfo{journal}{\emph{CHES}} (\bibinfo{year}{2019}).
\newblock


\bibitem[Intel(2017)]%
        {intel:2017-performance}
\bibfield{author}{\bibinfo{person}{Intel}.} \bibinfo{year}{2017}\natexlab{}.
\newblock \bibinfo{booktitle}{\emph{Intel® 64 and IA32 Architectures
  Performance Monitoring Events}}.
\newblock


\bibitem[Intel(2022)]%
        {intel:2022-doit}
\bibfield{author}{\bibinfo{person}{Intel}.} \bibinfo{year}{2022}\natexlab{}.
\newblock \bibinfo{title}{{Data Operand Independent Timing Instruction Set
  Architecture (ISA) Guidance}}.
\newblock
\newblock


\bibitem[Jancar et~al\mbox{.}(2022)]%
        {jancar:2021-they}
\bibfield{author}{\bibinfo{person}{Jan Jancar}, \bibinfo{person}{Marcel
  Fourné}, \bibinfo{person}{Daniel De~Almeida Braga}, \bibinfo{person}{Mohamed
  Sabt}, \bibinfo{person}{Peter Schwabe}, \bibinfo{person}{Gilles Barthe},
  \bibinfo{person}{Pierre-Alain Fouque}, {and} \bibinfo{person}{Yasemin Acar}.}
  \bibinfo{year}{2022}\natexlab{}.
\newblock \showarticletitle{“They’re not that hard to mitigate”: What
  Cryptographic Library Developers Think About Timing Attacks}. In
  \bibinfo{booktitle}{\emph{S\&P}}.
\newblock


\bibitem[Kim et~al\mbox{.}(2021)]%
        {kim:2021-uc-check}
\bibfield{author}{\bibinfo{person}{Joonsung Kim}, \bibinfo{person}{Hamin Jang},
  \bibinfo{person}{Hunjun Lee}, \bibinfo{person}{Seungho Lee}, {and}
  \bibinfo{person}{Jangwoo Kim}.} \bibinfo{year}{2021}\natexlab{}.
\newblock \showarticletitle{UC-Check: Characterizing Micro-Operation Caches in
  X86 Processors and Implications in Security and Performance}. In
  \bibinfo{booktitle}{\emph{MICRO}}.
\newblock


\bibitem[Kocher et~al\mbox{.}(1999)]%
        {kocher:1999-differential}
\bibfield{author}{\bibinfo{person}{Paul Kocher}, \bibinfo{person}{Joshua
  Jaffe}, {and} \bibinfo{person}{Benjamin Jun}.}
  \bibinfo{year}{1999}\natexlab{}.
\newblock \showarticletitle{Differential Power Analysis}. In
  \bibinfo{booktitle}{\emph{CRYPTO}}.
\newblock


\bibitem[K{\"o}pf and Mantel(2007)]%
        {kopf:2007-transformational}
\bibfield{author}{\bibinfo{person}{Boris K{\"o}pf} {and} \bibinfo{person}{Heiko
  Mantel}.} \bibinfo{year}{2007}\natexlab{}.
\newblock \showarticletitle{Transformational typing and unification for
  automatically correcting insecure programs}.
\newblock \bibinfo{journal}{\emph{Int. J. of Inf. Security 6.2}}
  (\bibinfo{year}{2007}).
\newblock


\bibitem[Lattner and Adve(2004)]%
        {lattner:2004-llvm}
\bibfield{author}{\bibinfo{person}{C. Lattner} {and} \bibinfo{person}{V.
  Adve}.} \bibinfo{year}{2004}\natexlab{}.
\newblock \showarticletitle{LLVM: A Compilation Framework for Lifelong Program
  Analysis \& Transformation}. In \bibinfo{booktitle}{\emph{CGO}}.
\newblock


\bibitem[Lee et~al\mbox{.}(2017)]%
        {lee:2017-inferring}
\bibfield{author}{\bibinfo{person}{Sangho Lee}, \bibinfo{person}{Ming-Wei
  Shih}, \bibinfo{person}{Prasun Gera}, \bibinfo{person}{Taesoo Kim},
  \bibinfo{person}{Hyesoon Kim}, {and} \bibinfo{person}{Marcus Peinado}.}
  \bibinfo{year}{2017}\natexlab{}.
\newblock \showarticletitle{Inferring Fine-grained Control Flow Inside {SGX}
  Enclaves with Branch Shadowing}. In \bibinfo{booktitle}{\emph{USENIX
  Security}}.
\newblock


\bibitem[Lepak and Lipasti(2000)]%
        {lepak:2000-value}
\bibfield{author}{\bibinfo{person}{Kevin~M Lepak} {and}
  \bibinfo{person}{Mikko~H Lipasti}.} \bibinfo{year}{2000}\natexlab{}.
\newblock \showarticletitle{On the value locality of store instructions}. In
  \bibinfo{booktitle}{\emph{ISCA}}.
\newblock


\bibitem[Lipp et~al\mbox{.}(2022)]%
        {lipp:2022-amd}
\bibfield{author}{\bibinfo{person}{Moritz Lipp}, \bibinfo{person}{Daniel
  Gruss}, {and} \bibinfo{person}{Michael Schwarz}.}
  \bibinfo{year}{2022}\natexlab{}.
\newblock \showarticletitle{{AMD} Prefetch Attacks through Power and Time}. In
  \bibinfo{booktitle}{\emph{USENIX Security}}.
\newblock


\bibitem[Lipp et~al\mbox{.}(2021)]%
        {lipp:2021-platypus}
\bibfield{author}{\bibinfo{person}{Moritz Lipp}, \bibinfo{person}{Andreas
  Kogler}, \bibinfo{person}{David Oswald}, \bibinfo{person}{Michael Schwarz},
  \bibinfo{person}{Catherine Easdon}, \bibinfo{person}{Claudio Canella}, {and}
  \bibinfo{person}{Daniel Gruss}.} \bibinfo{year}{2021}\natexlab{}.
\newblock \showarticletitle{PLATYPUS: Software-based Power Side-Channel Attacks
  on x86}. In \bibinfo{booktitle}{\emph{S\&P}}.
\newblock


\bibitem[Liu et~al\mbox{.}(2015)]%
        {liu:2015-last}
\bibfield{author}{\bibinfo{person}{Fangfei Liu}, \bibinfo{person}{Yuval Yarom},
  \bibinfo{person}{Qian Ge}, \bibinfo{person}{Gernot Heiser}, {and}
  \bibinfo{person}{Ruby~B Lee}.} \bibinfo{year}{2015}\natexlab{}.
\newblock \showarticletitle{Last-level cache side-channel attacks are
  practical}. In \bibinfo{booktitle}{\emph{S\&P}}.
\newblock


\bibitem[Lou et~al\mbox{.}(2021)]%
        {lou:2021-a}
\bibfield{author}{\bibinfo{person}{Xiaoxuan Lou}, \bibinfo{person}{Tianwei
  Zhang}, \bibinfo{person}{Jun Jiang}, {and} \bibinfo{person}{Yinqian Zhang}.}
  \bibinfo{year}{2021}\natexlab{}.
\newblock \showarticletitle{A Survey of Microarchitectural Side-channel
  Vulnerabilities, Attacks, and Defenses in Cryptography}.
\newblock \bibinfo{journal}{\emph{ACM Comput. Surv.}} (\bibinfo{year}{2021}).
\newblock


\bibitem[Lowe-Power et~al\mbox{.}(2018)]%
        {lowe:2018-position}
\bibfield{author}{\bibinfo{person}{Jason Lowe-Power},
  \bibinfo{person}{Venkatesh Akella}, \bibinfo{person}{Matthew~K. Farrens},
  \bibinfo{person}{Samuel~T. King}, {and} \bibinfo{person}{Christopher~J.
  Nitta}.} \bibinfo{year}{2018}\natexlab{}.
\newblock \showarticletitle{Position Paper: A Case for Exposing
  Extra-Architectural State in the ISA}. In \bibinfo{booktitle}{\emph{HASP}}.
\newblock


\bibitem[Mantel and Starostin(2015)]%
        {mantel15transforming}
\bibfield{author}{\bibinfo{person}{Heiko Mantel} {and} \bibinfo{person}{Artem
  Starostin}.} \bibinfo{year}{2015}\natexlab{}.
\newblock \showarticletitle{Transforming Out Timing Leaks, More or Less}. In
  \bibinfo{booktitle}{\emph{Computer Security -- ESORICS 2015}}.
  \bibinfo{pages}{447--467}.
\newblock
\showISBNx{978-3-319-24174-6}


\bibitem[Moghimi et~al\mbox{.}(2019)]%
        {moghimi:2019-memjam}
\bibfield{author}{\bibinfo{person}{Ahmad Moghimi}, \bibinfo{person}{Jan
  Wichelmann}, \bibinfo{person}{Thomas Eisenbarth}, {and} \bibinfo{person}{Berk
  Sunar}.} \bibinfo{year}{2019}\natexlab{}.
\newblock \showarticletitle{Memjam: A false dependency attack against
  constant-time crypto implementations}.
\newblock \bibinfo{journal}{\emph{International Journal of Parallel
  Programming}} (\bibinfo{year}{2019}).
\newblock


\bibitem[Moghimi et~al\mbox{.}(2020)]%
        {moghimi:2020-copycat}
\bibfield{author}{\bibinfo{person}{Daniel Moghimi}, \bibinfo{person}{Jo~Van
  Bulck}, \bibinfo{person}{Nadia Heninger}, \bibinfo{person}{Frank Piessens},
  {and} \bibinfo{person}{Berk Sunar}.} \bibinfo{year}{2020}\natexlab{}.
\newblock \showarticletitle{{CopyCat}: Controlled {Instruction-Level} Attacks
  on Enclaves}. In \bibinfo{booktitle}{\emph{29th USENIX Security Symposium
  (USENIX Security 20)}}.
\newblock


\bibitem[Mohr et~al\mbox{.}(2024)]%
        {mohr:2024-synthesizing}
\bibfield{author}{\bibinfo{person}{Gideon Mohr}, \bibinfo{person}{Marco
  Guarnieri}, {and} \bibinfo{person}{Jan Reineke}.}
  \bibinfo{year}{2024}\natexlab{}.
\newblock \showarticletitle{Synthesizing Hardware-Software Leakage Contracts
  for RISC-V Open-Source Processors}.
\newblock \bibinfo{journal}{\emph{arXiv}} (\bibinfo{year}{2024}).
\newblock


\bibitem[Molnar et~al\mbox{.}(2005)]%
        {molnar:2005-program}
\bibfield{author}{\bibinfo{person}{David Molnar}, \bibinfo{person}{Matt
  Piotrowski}, \bibinfo{person}{David Schultz}, {and} \bibinfo{person}{David
  Wagner}.} \bibinfo{year}{2005}\natexlab{}.
\newblock \showarticletitle{The Program Counter Security Model: Automatic
  Detection and Removal of Control-Flow Side Channel Attacks}. In
  \bibinfo{booktitle}{\emph{ICISC}}.
\newblock


\bibitem[Mosier et~al\mbox{.}(2022)]%
        {mosier:2022-axiomatic}
\bibfield{author}{\bibinfo{person}{Nicholas Mosier}, \bibinfo{person}{Hanna
  Lachnitt}, \bibinfo{person}{Hamed Nemati}, {and} \bibinfo{person}{Caroline
  Trippel}.} \bibinfo{year}{2022}\natexlab{}.
\newblock \showarticletitle{Axiomatic Hardware-Software Contracts for
  Security}. In \bibinfo{booktitle}{\emph{ISCA}}.
\newblock


\bibitem[Murdock et~al\mbox{.}(2020)]%
        {murdock:2020-plundervolt}
\bibfield{author}{\bibinfo{person}{Kit Murdock}, \bibinfo{person}{David
  Oswald}, \bibinfo{person}{Flavio~D. Garcia}, \bibinfo{person}{Jo Van~Bulck},
  \bibinfo{person}{Daniel Gruss}, {and} \bibinfo{person}{Frank Piessens}.}
  \bibinfo{year}{2020}\natexlab{}.
\newblock \showarticletitle{Plundervolt: Software-based Fault Injection Attacks
  against Intel SGX}. In \bibinfo{booktitle}{\emph{S\&P}}.
\newblock


\bibitem[Osvik et~al\mbox{.}(2006)]%
        {osvik:2006-cache}
\bibfield{author}{\bibinfo{person}{Dag~Arne Osvik}, \bibinfo{person}{Adi
  Shamir}, {and} \bibinfo{person}{Eran Tromer}.}
  \bibinfo{year}{2006}\natexlab{}.
\newblock \showarticletitle{Cache Attacks and Countermeasures: The Case of
  AES}. In \bibinfo{booktitle}{\emph{Topics in Cryptology -- CT-RSA}}.
\newblock


\bibitem[Paccagnella et~al\mbox{.}(2021)]%
        {paccagnella:2021-lord}
\bibfield{author}{\bibinfo{person}{Riccardo Paccagnella},
  \bibinfo{person}{Licheng Luo}, {and} \bibinfo{person}{Christopher~W.
  Fletcher}.} \bibinfo{year}{2021}\natexlab{}.
\newblock \showarticletitle{Lord of the Ring(s): Side Channel Attacks on the
  {CPU} {On-Chip} Ring Interconnect Are Practical}. In
  \bibinfo{booktitle}{\emph{USENIX Security}}.
\newblock


\bibitem[Patrignani(2020)]%
        {DBLP:journals/corr/abs-2001-11334}
\bibfield{author}{\bibinfo{person}{Marco Patrignani}.}
  \bibinfo{year}{2020}\natexlab{}.
\newblock \showarticletitle{Why Should Anyone use Colours? or, Syntax
  Highlighting Beyond Code Snippets}.
\newblock \bibinfo{journal}{\emph{CoRR}}  \bibinfo{volume}{abs/2001.11334}
  (\bibinfo{year}{2020}).
\newblock


\bibitem[Percival(2005)]%
        {percival:2005-cache}
\bibfield{author}{\bibinfo{person}{Colin Percival}.}
  \bibinfo{year}{2005}\natexlab{}.
\newblock \bibinfo{title}{Cache missing for fun and profit}.
\newblock
\newblock


\bibitem[Pessl et~al\mbox{.}(2016)]%
        {pessl:2016-drama}
\bibfield{author}{\bibinfo{person}{Peter Pessl}, \bibinfo{person}{Daniel
  Gruss}, \bibinfo{person}{Cl{\'e}mentine Maurice}, \bibinfo{person}{Michael
  Schwarz}, {and} \bibinfo{person}{Stefan Mangard}.}
  \bibinfo{year}{2016}\natexlab{}.
\newblock \showarticletitle{$\{$DRAMA$\}$: Exploiting $\{$DRAM$\}$ Addressing
  for $\{$Cross-CPU$\}$ Attacks}. In \bibinfo{booktitle}{\emph{USENIX
  security}}.
\newblock


\bibitem[Ponce-de Le{\'o}n and Kinder(2022)]%
        {ponce:2021-cats}
\bibfield{author}{\bibinfo{person}{Hern{\'a}n Ponce-de Le{\'o}n} {and}
  \bibinfo{person}{Johannes Kinder}.} \bibinfo{year}{2022}\natexlab{}.
\newblock \showarticletitle{Cats vs. Spectre: An axiomatic approach to modeling
  speculative execution attacks}. In \bibinfo{booktitle}{\emph{S\&P}}.
\newblock


\bibitem[Pouyanrad et~al\mbox{.}(2020)]%
        {pouyanrad:2020-scf}
\bibfield{author}{\bibinfo{person}{Sepideh Pouyanrad},
  \bibinfo{person}{Jan~Tobias M\"{u}hlberg}, {and} \bibinfo{person}{Wouter
  Joosen}.} \bibinfo{year}{2020}\natexlab{}.
\newblock \showarticletitle{{SCF-MSP}: Static Detection of Side Channels in
  {MSP430} Programs}. In \bibinfo{booktitle}{\emph{ARES}}.
\newblock


\bibitem[Puddu et~al\mbox{.}(2021)]%
        {puddu:2021-frontal}
\bibfield{author}{\bibinfo{person}{Ivan Puddu}, \bibinfo{person}{Moritz
  Schneider}, \bibinfo{person}{Miro Haller}, {and} \bibinfo{person}{Srdjan
  Capkun}.} \bibinfo{year}{2021}\natexlab{}.
\newblock \showarticletitle{Frontal Attack: Leaking {Control-Flow} in {SGX} via
  the {CPU} Frontend}. In \bibinfo{booktitle}{\emph{USENIX Security}}.
\newblock


\bibitem[Quisquater and Samyde(2001)]%
        {quisquater:2001-smart}
\bibfield{author}{\bibinfo{person}{Jean-Jacques Quisquater} {and}
  \bibinfo{person}{David Samyde}.} \bibinfo{year}{2001}\natexlab{}.
\newblock \showarticletitle{ElectroMagnetic Analysis (EMA): Measures and
  Counter-measures for Smart Cards}. In \bibinfo{booktitle}{\emph{Smart Card
  Programming and Security}}.
\newblock


\bibitem[Rane et~al\mbox{.}(2015)]%
        {rane:2015-raccoon}
\bibfield{author}{\bibinfo{person}{Ashay Rane}, \bibinfo{person}{Calvin Lin},
  {and} \bibinfo{person}{Mohit Tiwari}.} \bibinfo{year}{2015}\natexlab{}.
\newblock \showarticletitle{Raccoon: Closing digital $\{$Side-Channels$\}$
  through obfuscated execution}. In \bibinfo{booktitle}{\emph{USENIX
  Security}}.
\newblock


\bibitem[Ren et~al\mbox{.}(2021)]%
        {ren:2021-i}
\bibfield{author}{\bibinfo{person}{Xida Ren}, \bibinfo{person}{Logan Moody},
  \bibinfo{person}{Mohammadkazem Taram}, \bibinfo{person}{Matthew Jordan},
  \bibinfo{person}{Dean~M. Tullsen}, {and} \bibinfo{person}{Ashish Venkat}.}
  \bibinfo{year}{2021}\natexlab{}.
\newblock \showarticletitle{I See Dead µops: Leaking Secrets via Intel/AMD
  Micro-Op Caches}. In \bibinfo{booktitle}{\emph{ISCA}}.
\newblock


\bibitem[Rokicki et~al\mbox{.}(2022)]%
        {rokicki:2022-port}
\bibfield{author}{\bibinfo{person}{Thomas Rokicki},
  \bibinfo{person}{Cl\'{e}mentine Maurice}, \bibinfo{person}{Marina Botvinnik},
  {and} \bibinfo{person}{Yossi Oren}.} \bibinfo{year}{2022}\natexlab{}.
\newblock \showarticletitle{Port Contention Goes Portable: Port Contention Side
  Channels in Web Browsers}. In \bibinfo{booktitle}{\emph{Asia CCS}}.
\newblock


\bibitem[Sanchez~Vicarte et~al\mbox{.}(2021)]%
        {sanchez:2021-opening}
\bibfield{author}{\bibinfo{person}{Jose~Rodrigo Sanchez~Vicarte},
  \bibinfo{person}{Pradyumna Shome}, \bibinfo{person}{Nandeeka Nayak},
  \bibinfo{person}{Caroline Trippel}, \bibinfo{person}{Adam Morrison},
  \bibinfo{person}{David Kohlbrenner}, {and} \bibinfo{person}{Christopher~W.
  Fletcher}.} \bibinfo{year}{2021}\natexlab{}.
\newblock \showarticletitle{Opening Pandora’s Box: A Systematic Study of New
  Ways Microarchitecture Can Leak Private Data}. In
  \bibinfo{booktitle}{\emph{ISCA}}.
\newblock


\bibitem[Shin et~al\mbox{.}(2018)]%
        {shin:2018-unveiling}
\bibfield{author}{\bibinfo{person}{Youngjoo Shin}, \bibinfo{person}{Hyung~Chan
  Kim}, \bibinfo{person}{Dokeun Kwon}, \bibinfo{person}{Ji~Hoon Jeong}, {and}
  \bibinfo{person}{Junbeom Hur}.} \bibinfo{year}{2018}\natexlab{}.
\newblock \showarticletitle{Unveiling Hardware-Based Data Prefetcher, a Hidden
  Source of Information Leakage}. In \bibinfo{booktitle}{\emph{CCS}}.
\newblock


\bibitem[Soares et~al\mbox{.}(2023)]%
        {soares:2023-side}
\bibfield{author}{\bibinfo{person}{Luigi Soares}, \bibinfo{person}{Michael
  Canesche}, {and} \bibinfo{person}{Fernando Magno Quint\~{a}o Pereira}.}
  \bibinfo{year}{2023}\natexlab{}.
\newblock \showarticletitle{Side-channel Elimination via Partial Control-flow
  Linearization}.
\newblock \bibinfo{journal}{\emph{ACM Trans. Program. Lang. Syst.}}
  (\bibinfo{year}{2023}).
\newblock


\bibitem[Sullivan et~al\mbox{.}(2018)]%
        {sullivan:2018-microarchitectural}
\bibfield{author}{\bibinfo{person}{Dean Sullivan}, \bibinfo{person}{Orlando
  Arias}, \bibinfo{person}{Travis Meade}, {and} \bibinfo{person}{Yier Jin}.}
  \bibinfo{year}{2018}\natexlab{}.
\newblock \showarticletitle{Microarchitectural Minefields: 4K-Aliasing Covert
  Channel and Multi-Tenant Detection in Iaas Clouds.}. In
  \bibinfo{booktitle}{\emph{NDSS}}.
\newblock


\bibitem[Tsoupidi et~al\mbox{.}(2023)]%
        {tsoupidi:2023-thwarting}
\bibfield{author}{\bibinfo{person}{Rodothea~Myrsini Tsoupidi},
  \bibinfo{person}{Elena Troubitsyna}, {and} \bibinfo{person}{Panagiotis
  Papadimitratos}.} \bibinfo{year}{2023}\natexlab{}.
\newblock \showarticletitle{Thwarting code-reuse and side-channel attacks in
  embedded systems}.
\newblock \bibinfo{journal}{\emph{arXiv}} (\bibinfo{year}{2023}).
\newblock


\bibitem[Van~Bulck et~al\mbox{.}(2018)]%
        {vanbulck:2018-nemesis}
\bibfield{author}{\bibinfo{person}{Jo Van~Bulck}, \bibinfo{person}{Frank
  Piessens}, {and} \bibinfo{person}{Raoul Strackx}.}
  \bibinfo{year}{2018}\natexlab{}.
\newblock \showarticletitle{Nemesis: Studying Microarchitectural Timing Leaks
  in Rudimentary CPU Interrupt Logic}. In \bibinfo{booktitle}{\emph{CCS}}.
\newblock


\bibitem[Vanoverloop et~al\mbox{.}(2024)]%
        {vanoverloop:2024-compiler}
\bibfield{author}{\bibinfo{person}{Daan Vanoverloop}, \bibinfo{person}{Hans
  Winderix}, \bibinfo{person}{Lesly-Ann Daniel}, {and} \bibinfo{person}{Frank
  Piessens}.} \bibinfo{year}{2024}\natexlab{}.
\newblock \showarticletitle{Compiler Support for Control-Flow Linearization
  Using Architectural Mimicry}.
\newblock  (\bibinfo{year}{2024}).
\newblock


\bibitem[Wang et~al\mbox{.}(2017)]%
        {wang:2017-leaky}
\bibfield{author}{\bibinfo{person}{Wenhao Wang}, \bibinfo{person}{Guoxing
  Chen}, \bibinfo{person}{Xiaorui Pan}, \bibinfo{person}{Yinqian Zhang},
  \bibinfo{person}{XiaoFeng Wang}, \bibinfo{person}{Vincent Bindschaedler},
  \bibinfo{person}{Haixu Tang}, {and} \bibinfo{person}{Carl~A. Gunter}.}
  \bibinfo{year}{2017}\natexlab{}.
\newblock \showarticletitle{Leaky Cauldron on the Dark Land: Understanding
  Memory Side-Channel Hazards in SGX}. In \bibinfo{booktitle}{\emph{CCS}}.
\newblock


\bibitem[Wang et~al\mbox{.}(2014)]%
        {wang:2014-timing}
\bibfield{author}{\bibinfo{person}{Yao Wang}, \bibinfo{person}{Andrew
  Ferraiuolo}, {and} \bibinfo{person}{G.~Edward Suh}.}
  \bibinfo{year}{2014}\natexlab{}.
\newblock \showarticletitle{Timing channel protection for a shared memory
  controller}. In \bibinfo{booktitle}{\emph{HPCA}}.
\newblock


\bibitem[Wang and Lee(2006)]%
        {wang:2006-covert}
\bibfield{author}{\bibinfo{person}{Zhenghong Wang} {and}
  \bibinfo{person}{Ruby~B Lee}.} \bibinfo{year}{2006}\natexlab{}.
\newblock \showarticletitle{Covert and side channels due to processor
  architecture}. In \bibinfo{booktitle}{\emph{ACSAC}}.
\newblock


\bibitem[Winderix et~al\mbox{.}(2024a)]%
        {libra-zenodo}
\bibfield{author}{\bibinfo{person}{Hans Winderix}, \bibinfo{person}{Marton
  Bognar}, \bibinfo{person}{Lesly-Ann Daniel}, {and} \bibinfo{person}{Frank
  Piessens}.} \bibinfo{year}{2024}\natexlab{a}.
\newblock \bibinfo{booktitle}{\emph{{Libra: Architectural Support For
  Principled, Secure And Efficient Balanced Execution On High- End
  Processors}}}.
\newblock
\urldef\tempurl%
\url{https://doi.org/10.5281/zenodo.12786159}
\showDOI{\tempurl}


\bibitem[Winderix et~al\mbox{.}(2024b)]%
        {winderix:2024-architectural}
\bibfield{author}{\bibinfo{person}{Hans Winderix}, \bibinfo{person}{Marton
  Bognar}, \bibinfo{person}{Job Noorman}, \bibinfo{person}{Lesly-Ann Daniel},
  {and} \bibinfo{person}{Frank Piessens}.} \bibinfo{year}{2024}\natexlab{b}.
\newblock \showarticletitle{Architectural Mimicry: Innovative Instructions to
  Efficiently Address Control-Flow Leakage in Data-Oblivious Programs}. In
  \bibinfo{booktitle}{\emph{S\&P}}.
\newblock


\bibitem[Winderix et~al\mbox{.}(2021)]%
        {winderix:2021-compiler}
\bibfield{author}{\bibinfo{person}{Hans Winderix}, \bibinfo{person}{Jan~Tobias
  Mühlberg}, {and} \bibinfo{person}{Frank Piessens}.}
  \bibinfo{year}{2021}\natexlab{}.
\newblock \showarticletitle{Compiler-Assisted Hardening of Embedded Software
  Against Interrupt Latency Side-Channel Attacks}. In
  \bibinfo{booktitle}{\emph{EuroS\&P}}.
\newblock


\bibitem[Wu et~al\mbox{.}(2018)]%
        {wu:2018-eliminating}
\bibfield{author}{\bibinfo{person}{Meng Wu}, \bibinfo{person}{Shengjian Guo},
  \bibinfo{person}{Patrick Schaumont}, {and} \bibinfo{person}{Chao Wang}.}
  \bibinfo{year}{2018}\natexlab{}.
\newblock \showarticletitle{Eliminating Timing Side-Channel Leaks Using Program
  Repair}. In \bibinfo{booktitle}{\emph{ISSTA}}.
\newblock


\bibitem[Wu et~al\mbox{.}(2014)]%
        {wu:2014-whispers}
\bibfield{author}{\bibinfo{person}{Zhenyu Wu}, \bibinfo{person}{Zhang Xu},
  {and} \bibinfo{person}{Haining Wang}.} \bibinfo{year}{2014}\natexlab{}.
\newblock \showarticletitle{Whispers in the hyper-space: high-bandwidth and
  reliable covert channel attacks inside the cloud}.
\newblock \bibinfo{journal}{\emph{IEEE/ACM Transactions on Networking}}
  \bibinfo{volume}{23}, \bibinfo{number}{2} (\bibinfo{year}{2014}),
  \bibinfo{pages}{603--615}.
\newblock


\bibitem[Xu et~al\mbox{.}(2015)]%
        {xu:2015-controlled}
\bibfield{author}{\bibinfo{person}{Yuanzhong Xu}, \bibinfo{person}{Weidong
  Cui}, {and} \bibinfo{person}{Marcus Peinado}.}
  \bibinfo{year}{2015}\natexlab{}.
\newblock \showarticletitle{Controlled-Channel Attacks: Deterministic Side
  Channels for Untrusted Operating Systems}. In \bibinfo{booktitle}{\emph{2015
  IEEE Symposium on Security and Privacy}}.
\newblock


\bibitem[Yarom and Falkner(2014)]%
        {yarom:2014-flush}
\bibfield{author}{\bibinfo{person}{Yuval Yarom} {and} \bibinfo{person}{Katrina
  Falkner}.} \bibinfo{year}{2014}\natexlab{}.
\newblock \showarticletitle{{FLUSH+RELOAD}: A High Resolution, Low Noise, L3
  Cache {Side-Channel} Attack}. In \bibinfo{booktitle}{\emph{USENIX Security}}.
\newblock


\bibitem[Yarom et~al\mbox{.}(2017)]%
        {yarom:2017-cachebleed}
\bibfield{author}{\bibinfo{person}{Yuval Yarom}, \bibinfo{person}{Daniel
  Genkin}, {and} \bibinfo{person}{Nadia Heninger}.}
  \bibinfo{year}{2017}\natexlab{}.
\newblock \showarticletitle{CacheBleed: a timing attack on OpenSSL
  constant-time RSA}.
\newblock \bibinfo{journal}{\emph{J. of Cryptographic Engineering}}
  (\bibinfo{year}{2017}).
\newblock


\bibitem[Yu et~al\mbox{.}(2019)]%
        {yu:2019-data}
\bibfield{author}{\bibinfo{person}{Jiyong Yu}, \bibinfo{person}{Lucas Hsiung},
  \bibinfo{person}{Mohamad~El Hajj}, {and} \bibinfo{person}{Christopher~W.
  Fletcher}.} \bibinfo{year}{2019}\natexlab{}.
\newblock \showarticletitle{Data Oblivious {ISA} Extensions for Side
  Channel-Resistant and High Performance Computing}. In
  \bibinfo{booktitle}{\emph{NDSS}}.
\newblock


\bibitem[Yu et~al\mbox{.}(2023)]%
        {yu:2023-all}
\bibfield{author}{\bibinfo{person}{Jiyong Yu}, \bibinfo{person}{Trent Jaeger},
  {and} \bibinfo{person}{Christopher~Wardlaw Fletcher}.}
  \bibinfo{year}{2023}\natexlab{}.
\newblock \showarticletitle{All Your PC Are Belong to Us: Exploiting
  Non-Control-Transfer Instruction BTB Updates for Dynamic PC Extraction}. In
  \bibinfo{booktitle}{\emph{ISCA}}.
\newblock


\bibitem[Zhang et~al\mbox{.}(2023)]%
        {zhang:2023-bunnyhop}
\bibfield{author}{\bibinfo{person}{Zhiyuan Zhang}, \bibinfo{person}{Mingtian
  Tao}, \bibinfo{person}{Sioli O'Connell}, \bibinfo{person}{Chitchanok
  Chuengsatiansup}, \bibinfo{person}{Daniel Genkin}, {and}
  \bibinfo{person}{Yuval Yarom}.} \bibinfo{year}{2023}\natexlab{}.
\newblock \showarticletitle{$\{$BunnyHop$\}$: Exploiting the Instruction
  Prefetcher}. In \bibinfo{booktitle}{\emph{USENIX Security}}.
\newblock


\end{thebibliography}

\appendix

\iftechreport

\section{Folding Transformation}\label{app:transformation}
This section details our folding transformation \(\transform\) from \(\sasm\) programs to \(\tasm\) programs (\cref{app:algo}). We also discuss the hypotheses we make on source programs (\cref{app:more-properties}).

\paragraph{Notations}
In the following, we let \(B[i]\) denote the \(i\)\textsuperscript{th} instruction in basic block \(B\). We also let \(B_{\varepsilon}\) denote the empty basic block.
By definition of a control-flow graph, all program labels point to the beginning of a basic block and we write \(B_{\loc}\) to denote the basic block corresponding to label \(\loc\).
In a level \(L\), we assume that basic blocks are indexed, and we let \(\bbidx{L, B}\) denote the index of basic block \(B\) in level \(L\).

We let \(S = (B_{entry}, B_{exit}, \{B_{1}, \dots, B_{n}\})\) denote a secret dependent region \(\{B_{1}, \dots, B_{n}\}\), with entry block \(B_{entry}\) and exit block \(B_{exit}\). We let \(\entry{S}\) and \(\exit{S}\) return the entry and exit basic block of the region, respectively.
We let \(F = (\loc_{f}, \{B_{1}, \dots, B_{n}\})\) denote a function, defined by its entry label \(\loc_{f}\) and its sequence of basic blocks \(\{B_{1}, \dots, B_{n}\}\).
Finally, we let \(\func{level\_struct}(S)\) return the level structure of a secret dependent region $S$ (excluding \(B_{exit}\) and \(B_{exit}\)) and \(\func{level\_struct}(F)\) return the level structure of a function.

\subsection{Folding Algorithm}\label{app:algo}
We first define a function \FInsertLob{$L_{i}$, $L_{i+1}$}, which rewrites the terminating instructions from a level \(L_{i}\) (\cref{alg:libra-insert-lob}).
It replace all branches in the level with a level offset branch
\qinst{\qlo}{br}{{\(e\ \offt\ \offf\ \bbc\)}} where \(\bbc\) is the basic block count of the next level (\ie{} \(|L_{i+1}|\)), and \(\offt\) (resp. \(\offf\)) is the basic block number corresponding to \(\ell_{t}\) (resp. \(\ell_{f}\)) in the level \(L_{i+1}\).

\begin{algorithm}
  \protect\caption{Rewriting of terminating instructions.}\label{alg:libra-insert-lob}
\Fn{\FInsertLob{$L_{i}$, $L_{i+1}$}}{
\KwData{Level to modify \(L_{i} = \{B_{0} \dots B_{n}\}\)}
\KwResult{Modified level \(L_{i}' = \{B_{0}' \dots B_{n}'\}\)}
\(\bbc \gets |L_{i+1}|\)\; %
$len \gets |B_{0}|$\;
$\{B_{0}' \dots B_{n}'\} \gets \{B_{0} \dots B_{n}\}$\;
\For{$B_{i} \in L_{i}$}{
  \Assert{} \(|B_{i}| = len\)\;
  \Switch{\(B_{i}[len-1]\)}{
    \Case{\qinst{\(\{\varepsilon,\qs\}\)}{br}{{\(e\ \ell_{t}\ \ell_{f}\)}}}{
  \Assert{} \(B_{\ell_{t}}, B_{\ell_{f}} \in L_{i+1}\)\;
  $\offt \gets \bbidx{L_{i+1}, B_{\ell_{t}}}$\;
  $\offf \gets \bbidx{L_{i+1}, B_{\ell_{f}}}$\;
  $B_{i}'[len-1] \gets %
  \qinst{\qlo}{br}{{$e\ \bbc\ \offt\ \offf$}}
  $\;
    }
    \Case{\inst{ret}{}}{
      $B_{i}'[len-1] \gets \inst{ret}{}$\;
    }
    \Other{\Assert{False}}
  }
}
\Return{\(L_{i}' = \{B_{0}' \dots B_{n}'\}\)}
}
\end{algorithm}

Next, we define a function \FFoldLevel{$L$}, which takes a level \(L = \{B_{0} \dots B_{n}\}\) and returns a single folded basic block \(B'\) interleaving the instructions of \(B_{0} \dots B_{n}\) (\cref{alg:libra-fold-level}).
\begin{algorithm}
  \protect\caption{Folding of the basic blocks of a level $L$.}\label{alg:libra-fold-level}
\Fn{\FFoldLevel{$L$}}{
\KwData{Level to fold \(L = \{B_{0} \dots B_{n}\}\)}
\KwResult{Basic block \(B'\) (folded level)}
$\bbc{} \gets |L|$\;
$len \gets |B_{0}|$\;
$B' \gets B_{\varepsilon}$\;
\For{$B_{i} \in L$}{
  \Assert{} \(|B_{i}| = len\)\;
  \For{$j \in \ocinterval{0,len}$}{
    $B'[j \times \bbc + i] \gets B_{i}[j] $\;
  }
}
\Return{\(B'\)}
}
\end{algorithm}

To fold secret-dependent branches, we define a function \FFoldRegion{$S$} that returns the folded version of a secret dependent region \(S\) (\cref{alg:libra-transformation-branches}). The function first computes the level structure of \(S\), rewrite the terminators, and fold the level structure.

\begin{algorithm}
  \protect\caption{Folding of a secret dependent region $S$.}\label{alg:libra-transformation-branches}
\Fn{\FFoldRegion{$S$}}{
\KwData{Secret-dependent region $S$}
\KwResult{Folded secret-dependent region $S'$}
$L_{1}\dots L_{n} \gets \func{level\_struct}(S)$\;
$L_{0} \gets \entry{S}$\;
$L_{n+1} \gets \exit{S}$\;
\For{$L_{i} \in L_{0} \dots L_{n}$}{
  $L_{i}' \gets \FInsertLob(L_{i}, L_{i+1}, \textit{False})$\;
  $B_{i}' \gets \FFoldLevel(L_{i}')$\;
}
$S' \gets (B_{0}', \exit{S}, \{ B_{1}' \dots B_{n}' \})$\;
\Return{\(S'\)}
}
\end{algorithm}

To fold a function \(F\) with its dummy function \(F'\), we define a function \FFoldFunct{$F, F', \loc_{\textit{ff'}}$} (\cref{alg:fold-functions}) that returns a folded function with entry label $\loc_{\textit{ff'}}$.
First, the algorithm computes the union of the level structure of \(F\) and \(F'\). Then, it replaces branches with level-offset branches. Finally, it folds basic blocks according to the level structure and returns the final function, defined by the set of folded basic blocks and entry label \(\loc_{\textit{ff'}}\).

\begin{algorithm}
  \protect\caption{Folding of a function \(F\) with its dummy version \(F'\). The union of level structures is defined as the componentwise union of basic blocks.}\label{alg:fold-functions}
\Fn{\FFoldFunct{$F,F', \loc_{\textit{ff'}}$}}{
\KwData{Functions $F$, $F$}
\KwResult{Folded function $F''$}
$L_{0}\dots L_{n} \gets \func{level\_struct}(F) \cup \func{level\_struct}(F')$\;
$L_{n+1} \gets B_{\varepsilon}$\;
\For{$L_{i} \in L_{0} \dots L_{n}$}{
  $L_{i}'' \gets \FInsertLob(L_{i}, L_{i+1}, \text{True})$\;
  $B_{i}'' \gets \FFoldLevel(L_{i}'')$\;
}
\Return{\(F'' = (\loc_{\textit{ff'}}, \{B_{0}'' \dots B_{n}'')\}\)}
}
\end{algorithm}

Finally, we define a last function \FInsertLocall{$B$}, which replaces secret dependent calls with their corresponding level-offset call (\cref{alg:rewrite-calls}).

\begin{algorithm}
  \protect\caption{Rewriting of secret dependent calls, where \(\loc_{\textit{ff'}}\) is the label of the folded function corresponding to functions \(\loc_{\textit{f}}\) and \(\loc_{\textit{f'}}\).}\label{alg:rewrite-calls}
\Fn{\FInsertLocall{$B$}}{
\KwData{Basic block to rewrite \(B\)}
\KwResult{Modified basic block \(B'\)}
$B' \gets B$\;
\For{$i \in 0 \dots |B| - 1$}{
  \If{$B[i] = \qinst{\qs}{call}{{\(b\ \ell_{\textit{f}}\ \ell_{\textit{f'}}\)}}$}{
  $B[i] \gets \qinst{\qlo}{call}{{\(b\ \loc_{\textit{ff'}}\)}}$\;
  }
}
\Return{B'}
}
\end{algorithm}

\paragraph{Final folding transformation}
The final folding transformation \(\tprog{} = \transform(\sprog{})\) performs the following steps:
\begin{enumerate}
  \item For each pair of function/dummy $F_{\ell_{f}}/F'_{\ell_{f'}}$ that can
        be called from a secret-dependent region (annotated in \sasm{} with a secret dependent call), \(\transform\) computes the folded function
        \(F_{\loc_{\textit{ff'}}} =\) \FFoldFunct{$F_{\ell_{f}}, F'_{\ell_{f'}}, \loc_{\textit{ff'}}$} and places it at location \(\loc_{\textit{ff'}}\) in \tprog{}. Functions $F_{\ell_{f}}, F'_{\ell_{f'}}$ are also included in \tprog{} if they can be called with ``normal'' calls.
  \item For each (outermost) secret-dependent region \(\srcu{S}\) (annotated in
        \(\sasm{}\) programs by a secret-dependent branch), \(\transform{}\) computes the folded region
        \(\trgb{S} = \) \FFoldRegion{$\srcu{S}$}, and replaces the
        original region $\srcu{S}$ with \(\trgb{S}\). Note that for a given secret-dependent region, our algorithm folds its entire level structure (which includes nested branches). Hence, for nested secret-dependent branches, only the outermost branch need to be considered and the nested branches will be automatically folded.
  \item All other basic blocks are directly copied from \(\sprog{}\) to \(\tprog{}\);
  \item Secret dependent calls in \tprog{} are replaced by \(\FInsertLocall{$B$}\);
  \item Finally, in the final code memory layout of \(\tprog{}\), the folded levels
        are placed adjacent to each other, in level order. In other words the basic block corresponding to \FFoldLevel{$L_{i+1}$} directly follows the basic block corresponding to \FFoldLevel{$L_{i}$}. The exit block of a secret dependent region is also placed just after the last folded level.
\end{enumerate}
Note that from \(\transform{}\), we can naturally define a correspondence from locations in \(\sprog{}\) to locations in \(\tprog{}\), which we capture with the relation \(\sloc{} \corrloc{\sprog{}} \tloc{}\). For folded functions, a source location \(\sloc{}\) in a function $F_{\ell_{f}}$ can be related to two target locations: one in the folded function \(F_{\loc_{\textit{ff'}}}\), and one in the original function $F_{\ell_{f}}$.

\subsection{Hypotheses on Source Programs}\label{app:more-properties}

Our folding transformation (in particular the function \FInsertLob{}) is only defined for source programs satisfying the following requirements:
\begin{enumerate}
  \item In a function, we assume that the terminating instruction of basic blocks in the last level is a return.
  \item In a secret-dependent region or function, all the basic blocks of a given level have the same number of instructions.
  \item In a secret-dependent region or function, all the successors of basic blocks in a level $L_{i}$ are in level $L_{i+1}$. %
\end{enumerate}

Additionally, for correctness and security, we assume that the following hypotheses hold on source programs:
\begin{hypothesis}\label{hyp:safe}
  \(\sprog{}\) is safe. In particular, it implies that starting from an initial configuration, \(\seval{}\) does not get stuck.
\end{hypothesis}

\begin{hypothesis}\label{hyp:cfi}
  Functions in \(\sprog{}\) can only be entered through their entrypoint and always return to their return site.
\end{hypothesis}

\begin{hypothesis}\label{hyp:sese}
  Secret-dependent regions in \(\sprog{}\) are single-entry single-exit: they can only be
  entered through their entry block and are always exited through their exit
  block.
\end{hypothesis}

\section{Proofs}\label{app:proofs}
\newcommand\corr{\corrloc{\sprog{}}}
\newcommand\simulates{\libraeqstrong{\sprog{}}}
\newcommand\corrctx{\CorrectStack{\tprog{}}}

The following lemma establishes a correspondence between instructions of source and target programs when their locations are related by \(\corr\):
\begin{lemma}\label{lemma:t-corr}
  For a location \(\sloc\) in an \(\sasm\) program \(\sprog{}\),
  and location \(\tloc\) in \(\tprog{} = \transform(\sprog{})\) such that
  \(\sloc \corr \tloc\), the following hold:
  \begin{itemize}
    \item iff \(\sprog{\sloc} = \inst{br}{}\ e\ \sloc_{t}\ \sloc_{f}\) and \(\sloc\) is not in a secret-dependent region or function,
    then \(\tprog{\tloc} = \inst{br}{}\ e\ \tloc_{t}\ \tloc_{f}\), with \(\sloc_{t} \corr \tloc_{t}\) and \(\sloc_{f} \corr \tloc_{f}\);
    \item iff \(\sprog{\sloc} = \inst{call}{}\ \sloc_{\mathtt{foo}}\)
    then \(\tprog{\tloc} = \inst{call}{}\ \tloc_{\mathtt{foo}}\), with \(\sloc_{\mathtt{foo}} \corr \tloc_{\mathtt{foo}}\);
    \item if \(\sprog{\sloc} = \qinst{\qs}{br}{}\ e\ \sloc_{t}\ \sloc_{f}\)
    then \(\tprog{\tloc} = \qinst{\qlo}{br}{}\ e\ \bbc\ \offt\ \offf\) where:
    \begin{itemize}
      \item \(\bbc\) is the size of the level \(\lvl\) (let \(\tloc_{\lvl}\) denote its location) containing \(\sloc_{t}\) and \(\sloc_{f}\),
      \item \(\offt\) is an offset such that \(\offt + \tloc_{\lvl} \corr \sloc_{t}\), and
      \item \(\offf\) is an offset such that \(\offf + \tloc_{\lvl} \corr \sloc_{f}\).
    \end{itemize}
    \item if \(\sprog{\sloc} = \inst{br}{}\ e\ \sloc_{t}\ \sloc_{f}\) and \(\sloc\) is in a secret dependent region, then \(\tprog{\tloc} = \qinst{\qlo}{br}{}\ e\ \bbc\ \offt\ \offf\) with the same conditions as above.
    \item iff \(\sprog{\sloc} = \qinst{\qs}{call}{}\ b\ \sloc_{\mathtt{foo}}\ \sloc_{\mathtt{foo'}}\)
    then \(\qinst{\qlo}{call}{}\ b\ \tloc_{\mathtt{foo|foo'}}\)
    with \(\sloc_{\mathtt{foo}} \corr \tloc_{\mathtt{foo|foo'}}\) and \(\sloc_{\mathtt{foo'}} \corr \tloc_{\mathtt{foo|foo'}} + 1\);
    \item for any other instruction, \(\sprog{\sloc} = \tprog{\tloc}\).
  \end{itemize}
\end{lemma}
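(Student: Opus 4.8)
The plan is to prove the lemma by unfolding the definition of the folding transformation $\transform$ (\cref{app:algo}) together with the location correspondence $\corr$ it induces, and then case-splitting on where the source location $\sloc$ sits in $\sprog{}$. By construction of $\transform$, $\sloc$ falls into exactly one of three situations: (i) it lies in an ordinary basic block copied verbatim into $\tprog{}$; (ii) it lies in (the level structure of) an outermost secret-dependent region processed by $\func{FoldRegion}$ (Algorithm~\ref{alg:libra-transformation-branches}); or (iii) it lies in a function $F_{\ell_{f}}$, or its dummy $F'_{\ell_{f'}}$, reachable from a secret-dependent region and processed by $\func{FoldFunction}$ (Algorithm~\ref{alg:fold-functions}). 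In situation (iii) the correspondence $\corr$ relates $\sloc$ to two target locations---one inside the folded function, one inside the retained unfolded copy (when the latter is still reachable by a normal call)---so every sub-case is checked against both images, and for the unfolded copy the argument is literally that of situation (i).

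For situation (i), the transformation leaves the instruction untouched and only relocates targets, so $\sprog{\sloc} = \tprog{\tloc}$ for every non-control-transfer instruction and for $\inst{ret}{}$; $\inst{br}{}$ stays $\inst{br}{}$ with targets that are $\corr$-related by definition of the memory-layout step, and $\inst{call}{}$ to an ordinary function stays $\inst{call}{}$ with a $\corr$-related callee. The $\Leftarrow$ (the ``iff'') direction of the $\inst{br}{}$, $\inst{call}{}$ and $\qinst{\qs}{call}{}$ clauses then follows from the fact that $\transform$ never introduces a fresh $\inst{br}{}$ or ordinary $\inst{call}{}$: inside folded code every terminating branch is rewritten to a $\lob{}$ by $\func{RewriteTerminator}$ (Algorithm~\ref{alg:libra-insert-lob}), every $\qinst{\qs}{call}{}$ becomes a level-offset call via $\func{RewriteCall}$ (Algorithm~\ref{alg:rewrite-calls}), and non-terminating instructions are merely permuted by $\func{FoldLevel}$ (Algorithm~\ref{alg:libra-fold-level}). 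Hence a target $\inst{br}{}$ (resp.\ ordinary $\inst{call}{}$) can only have arisen from a source $\inst{br}{}$ outside any folded region (resp.\ a source $\inst{call}{}$), which yields both implications.

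Situations (ii) and (iii) are handled together because $\func{FoldRegion}$ and $\func{FoldFunction}$ share the same skeleton: compute the level structure, rewrite the terminating instruction of each level with $\func{RewriteTerminator}$, fold each level with $\func{FoldLevel}$, and place the folded levels adjacently in level order. A non-terminating instruction is copied verbatim by $\func{FoldLevel}$ (it only moves $B_i[j]$ to intra-block offset $j\times\bbc + i$) and $\inst{ret}{}$ is preserved by $\func{RewriteTerminator}$, so $\sprog{\sloc} = \tprog{\tloc}$ as required. A terminating branch---whether an annotated $\qinst{\qs}{br}{}$ or, for a branch nested inside a secret-dependent region, a plain $\inst{br}{}$---is rewritten to $\qinst{\qlo}{br}{}\ e\ \bbc\ \offt\ \offf$ with $\bbc = |L_{i+1}|$, $\offt = \bbidx{L_{i+1}, B_{\sloc_t}}$ and $\offf = \bbidx{L_{i+1}, B_{\sloc_f}}$ (writing $B_{\sloc_t}, B_{\sloc_f}$ for the blocks headed by the targets $\sloc_t, \sloc_f$), which gives the ``if'' clause for $\qinst{\qs}{br}{}$, the analogous one for nested $\inst{br}{}$, and thus the disjunctive reading of the $\lob{}$ case. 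The level-offset-call clause is obtained the same way from $\func{RewriteCall}$, the two target images of the callee matching $\corr$ by the convention (in $\func{FoldFunction}$) that the dummy function occupies basic-block index $1$ in every level, i.e.\ is folded immediately after the original at index $0$.

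The main obstacle is the arithmetic bookkeeping linking the outputs of $\func{RewriteTerminator}$ to $\corr$: I must show that, if $\tloc_{\lvl}$ is the address at which the folded next level $L_{i+1}$ begins in $\tprog{}$, then $\offt + \tloc_{\lvl} \corr \sloc_t$ and $\offf + \tloc_{\lvl} \corr \sloc_f$. This reduces to two layout facts. First, the folded block of $L_{i+1}$ is placed immediately after that of $L_i$, so $\tloc_{\lvl}$ is determined. Second, $\func{FoldLevel}$ writes the first instruction of the basic block with index $k$ in $L_{i+1}$ at intra-block offset $k$, so that block's entry point is exactly $k$ instructions past $\tloc_{\lvl}$; taking $k = \offt = \bbidx{L_{i+1}, B_{\sloc_t}}$ lands precisely on the entry of $B_{\sloc_t}$, to which $\sloc_t$ is related by $\corr$ (and symmetrically for $\offf$). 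Making this precise requires being explicit about how $\corr$ is defined on the interior of folded regions---it tracks the basic-block index within a level together with the position within the basic block---and about the structural hypotheses of \cref{app:more-properties} (single-entry/single-exit regions, uniform instruction count per level, successors of $L_i$ lying in $L_{i+1}$), which are exactly what make the index arithmetic and the assertions inside $\func{RewriteTerminator}$ and $\func{FoldLevel}$ well-defined. Everything else is routine unfolding of the five algorithms.
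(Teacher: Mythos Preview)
Your proposal is correct and follows exactly the approach the paper takes: the paper's proof is the single sentence ``The proof follows from the definition of \(\transform(\sprog{})\),'' and what you have written is precisely a careful unfolding of that definition along the case analysis induced by where \(\sloc\) lies. Your treatment is in fact more detailed than the paper's own argument, and the arithmetic bookkeeping you flag (linking \(\func{RewriteTerminator}\)'s offsets to \(\corr\) via the layout step and the structural hypotheses of \cref{app:more-properties}) is exactly the content that the paper leaves implicit.
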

\begin{proof}
  The proof follows from the definition of \(\transform(\sprog{})\).
\end{proof}

\subsection{Correctness}\label{app:correctness}
This section sketches the proof of \cref{thm:correctness}:
\correctness*

First, we define a relation \(\ctxstack \corrctx{} \retstack\), relating stack of program locations \(\retstack\) in a program \(\tprog{}\) to a stack of \libra{} contexts \(\ctxstack\). Intuitively, \(\corrctx{}\) holds iff contexts on the context stack \(\ctxstack\) are valid for program locations on the return stack \(\retstack\).
\begin{definition}[\(\protect\corrctx{}\)]\label{def:corrctx} For a \libra{} program \(\tprog{}\), \(\corrctx{}\) is defined as follows:
\begin{align*}
  (\bbc,\off) &\corrctx{} \loc &&
    \text{if in \tprog{}, } \loc \text{ points to offset } \off \text{ in a slice of size } \bbc\\
  \ctxstack \cdot (\bbc,\off)  &\corrctx{} \retstack \cdot \loc &&
    \text{if } (\bbc,\off) \corrctx{} \loc \text{ and } \ctxstack \corrctx{} \retstack
\end{align*}
\end{definition}

Second, we define an strong correspondence relation (\(\simulates\)) between source and target configurations. In addition to \(\libraeq{\sprog{}}\) (\cref{def:libraeq}), this stronger relation also requires correspondence between return stacks, and that the \libra{} stack is correct \wrt{} the return stack:
\begin{definition}[\(\sconf{} \protect\simulates{} \tconf{}\)]\label{def:libraeqstrong}
  A source configuration \(\sconf{} = \conf{\srcu{\mem},\srcu{\reg}, \srcu{\pc}, \srcu{\retstack}}\) for a program \(\sprog{}\) is strongly related to a target configuration \(\tconf{} = \conf{\trgb{\mem}, \trgb{\reg}, \trgb{\pc}, \trgb{\retstack}, \trgb{\ctxstack}}\) for a program \(\tprog{}\), denoted \(\sconf{} \simulates{} \tconf{}\), if and only if the following holds:
  \begin{itemize}
    \item \(\sconf{} \libraeq{\sprog{}} \tconf{}\)
    \item \(\srcu{\retstack} \corrloc{\sprog{}} \trgb{\retstack}\),
    \item \(\trgb{\ctxstack} \corrctx{} \tretstack \cdot \tpc\)
  \end{itemize}
  where (as defined before) \(\corrloc{\sprog{}}\) relates program locations in the source program to their corresponding location in the target program. Abusing notation, we lift \(\corrloc{\sprog{}}\) to return stacks.
\end{definition}

This strong correspondence relation will be our induction invariant to prove \cref{thm:correctness}.
The proof of \cref{thm:correctness} follows directly from the following lemma:
\begin{lemma}[Correctness-lemma]\label{thm:correctness-lemma}
  For all \(\sasm\) program \(\sprog{}\), number of steps \(n\), and initial source and target
  configurations \(\sconf{}\) and \(\tconf{}\) such that
  \(\tconf{} \simulates{} \sconf{}\):
  \begin{gather*}
    \text{if } %
    \sevalconf{\sconf{}}{\sconf{}'}{}{n}
    \text{ then }
    \tevalconf{\tconf{}}{\tconf{}'}{}{n}
    \text{ and } \sconf{}' \simulates{} \tconf{}'
  \end{gather*}
  where \(\seval{}\) is parameterized by \(\sprog{}\) and \(\teval{}\) is parameterized by \(\transform(\sprog{})\).
\end{lemma}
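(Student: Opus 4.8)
The plan is to prove \cref{thm:correctness-lemma} by induction on the number of steps \(n\), with the strong correspondence \(\simulates\) serving as the induction invariant. The base case \(n = 0\) is immediate, since the source and target configurations are unchanged and the hypothesis \(\tconf{} \simulates \sconf{}\) is the conclusion. For the inductive step I would split an \((n+1)\)-step source run into a single step followed by \(n\) steps (this decomposition is just the definition of the \(n\)-step relation), establish a \emph{one-step simulation lemma} that matches the first source step with a single target step while preserving \(\simulates\), and then apply the induction hypothesis to the remaining \(n\) steps and compose. So the content is the one-step lemma: whenever \(\tconf{} \simulates \sconf{}\) and the source steps once to \(\sconf{}'\), the target steps once to some \(\tconf{}'\) with \(\sconf{}' \simulates \tconf{}'\). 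Since \(\transform\) rewrites each source instruction into exactly one target instruction --- a level-offset branch for a conditional branch inside a folded region, a level-offset call for a secret-dependent call, the instruction unchanged otherwise, with function bodies additionally duplicated inside the folded functions they are merged with --- this is a genuine lockstep simulation, so the target also takes exactly \(n\) steps.

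For the one-step lemma I would do a case analysis on the source instruction \(\sprog{\spc}\), using \cref{lemma:t-corr} to determine the corresponding target instruction from \(\spc \corr \tpc\). For a non--control-transfer instruction, the source uses its standard rule and the target uses \textsc{pc-update}; since such an instruction is never the terminator of a basic block, both stay inside the same level slice, \textsc{pc-update} advances the target program counter by the slice size to the same in-slice offset of the next slice --- precisely where \(\transform\) maps the source successor --- the register and memory updates coincide, and the stacks are untouched, so every clause of \(\simulates\) is preserved (in particular \(\corrctx\), because neither the in-slice offset nor the slice size changes). A normal branch outside a folded region maps to the same branch with \(\corr\)-related targets, takes the same direction since register files agree, and leaves the context at the initial \((1,0)\). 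A secret-dependent branch (and a plain branch inside a folded region) maps to a level-offset branch, matched by \textsc{lob-true}/\textsc{lob-false}; the key fact here, coming from folded levels being laid out consecutively, is that \func{next\_slice} evaluates to the start of the next level, so that, together with \cref{lemma:t-corr}, adding \(\offt\) (resp. \(\offf\)) lands on the target location \(\corr\)-related to the source branch target, while the new context \((\bbc, \offt)\) (resp. \((\bbc, \offf)\)) is \(\corrctx\)-consistent with the new program counter by construction. A normal call maps to a normal call, with the source return address corresponding to \(\tpc\) plus the slice size and the target additionally pushing the caller's context and a fresh \((1,0)\); a secret-dependent call maps to a level-offset call, which \textsc{lo-call} routes to in-slice offset \(0\) or \(1\) of the folded function according to the boolean immediate, matching the source's choice between the real and the dummy function by \cref{lemma:t-corr}, and whose pushed context is consistent because a folded function's first level has slice size \(2\). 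Finally, a return is matched by \textsc{ret}: the popped return address is \(\corr\)-related to the source's by the return-stack clause of \(\simulates\), and --- the cleanest case --- the pre-state instance of \(\corrctx\) already decomposes so that dropping its top entry yields exactly the \(\corrctx\) obligation of the post-state.

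The step I expect to be the main obstacle is maintaining the \libra{}-context-stack invariant \(\corrctx\) through calls into and returns out of folded functions. One must show that the context stack faithfully records, for the live program counter and for every pending return address, the size and in-slice offset of the slice it inhabits; keeping this true across a folded function requires tracking how \textsc{lob-true}/\textsc{lob-false} rewrite the top context as control walks the function level by level, so that the context popped at the terminating return slice matches that slice, and it relies on structural properties of \(\transform\)'s output (a folded function's entry level has slice size \(2\), each level's terminators form its last slice, and folded levels are laid out contiguously so that \func{next\_slice} literally computes the start address of the next level). A related subtlety is that \(\corr\) is one-to-many on foldable functions --- a source location inside such a function relates both to a location in the folded copy and to one in the stand-alone copy --- so the invariant must additionally record which copy is currently executing and verify it stays consistent with the stacks. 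The source-program hypotheses enter where expected: \cref{hyp:cfi} keeps the return-stack and context-stack correspondence from being broken by a call returning to the wrong site, \cref{hyp:sese} guarantees that control enters and leaves a folded region only through its (folded) entry and exit blocks --- which is what makes the slice bookkeeping well defined at region boundaries --- and \cref{hyp:safe} provides the overall well-formedness that keeps the two executions in lockstep without getting stuck.
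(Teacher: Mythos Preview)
Your proposal is correct and mirrors the paper's proof: both argue by induction on the number of steps, reduce to a one-step simulation preserving \(\simulates\), and discharge that by case analysis on the source instruction via \cref{lemma:t-corr}, handling \textsc{pc-update}, ordinary branches, \textsc{lob-true}/\textsc{lob-false}, ordinary and level-offset calls, and \textsc{ret} in the same way. Your discussion of the \(\corrctx\) invariant, the contiguity of folded levels making \func{next\_slice} land on the next level, and the roles of \cref{hyp:cfi,hyp:sese,hyp:safe} matches the paper's reasoning (and is in places more explicit than the paper's sketch).
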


\begin{proof}
  The proof goes by induction on the evaluation relation in the source configuration. The base case trivially follows from the definition of initial configurations and \(\tconf{} \simulates \sconf{}\).

  For the inductive case, we assume that the source configuration makes a step and, as an induction hypothesis (IH), that configurations at step \(n-1\) (\(\sconf{}\) and \(\tconf{}\)) are related by \(\simulates{}\):
  \begin{gather}
    \sevalconf{\sconf{}}{\sconf{}'}{}{} \tag{Hstep}\label{corr:Hstep} \\
    \sconf{} \simulates{} \tconf{} \tag{IH}\label{corr:IH}
  \end{gather}
  We then show that the target configuration also makes a step such that the final configurations are still related by \(\simulates{}\):
  \begin{equation}\tag{Goal}\label{corr:G}
     \tevalconf{\tconf{}}{\tconf{}'}{}{} \text{ and }\sconf{}' \simulates{} \tconf{}'
  \end{equation}

  The proof goes by case analysis on the evaluated instruction. %
  In particular, for each rule in the source semantics, \cref{lemma:t-corr} gives us the instruction evaluated in the target, which, in turn, defines the rule that can be applied in the target semantics. For each rule, \(\tevalconf{\tconf{}}{\tconf{}'}{}{}\) always trivially follows from \ref{corr:Hstep}. Hence, so to show \ref{corr:G}, we only detail \(\sconf{}' \simulates{} \tconf{}'\).

  \proofcase{Rule \textsc{pc-update}.} In this rule, the memory, register map and program counters are modified.

  (Memory \& Registers) By \ref{corr:IH} and \cref{lemma:t-corr}, we know that both source and target configurations evaluate the same instruction.
  Additionally, from \ref{corr:IH}, the register and memory and the same in both configurations.
  Therefore, the final memories and register maps are also equal in the source and target.

  (Program counter) Let \(\bbc\), \(\off\) be the \libra{} context on top of \(\trgb{\ctxstack}\). From the semantics, we get \(\spc' = \spc + 1\) in the source, and \(\tpc' = \tpc + \bbc\) in the target. We need to show that \(\spc' \corr \tpc'\) and that \((\bbc,\off) \corrctx{} \tpc'\).

  From \ref{corr:IH} and \cref{def:corrctx}, we get that
  \((\bbc,\off) \corrctx{} \tpc{}\), meaning that, in \(\tprog{}\), \(\tpc\) points to a slice of size \(\bbc\) at offset \(\off\).
  Note that, because we are in the middle of a basic block, the next slice is adjacent to the current slice and has the same size.
  Hence, \(\tpc'\) point to the next slice (of size \(\bbc\)), at offset \(\off\), which concludes \((\bbc,\off) \corrctx{} \tpc'\).
  Additionally, from \(\transform(\sprog{})\), the next slice at offset \(\off\) also corresponds to \(\spc'\), which concludes \(\spc' \corr \tpc'\).

  \proofcase{Regular branch true.} (Case false is analogous.) In this rule, only the program counter is modified.
  By \ref{corr:IH} and \cref{lemma:t-corr}, we know that both source and target configurations are respectively evaluating \inst{br}{\(e\ \sloc_{t}\ \sloc_{f}\)} and either
  \(\inst{br}{}\ e\ \tloc_{t}\ \tloc_{f}\), with \(\sloc_{t} \corr \tloc_{t}\) and \(\sloc_{f} \corr \tloc_{f}\), or \qinst{\qlo}{br}{\(e\ \bbc'\ \offt\ \offf\)}.
  We focus here on the case \(\inst{br}{}\ e\ \tloc_{t}\ \tloc_{f}\) (see next proof case for \qinst{\qlo}{br}{}).
  We also know from \ref{corr:IH}, that \(e\) evaluates to the same value in source and target.

  (Program counter) From the semantics, we have that the final program counters are \(\spc' = \sloc_{t}\) and \(\tpc' = \tloc_{t}\). From \(\sloc_{t} \corr \tloc_{t}\), we can conclude \(\spc' \corr{} \tpc'\).

  Additionally, we need to show that \((\bbc, \off) \corrctx{} \tpc'\). From \cref{lemma:t-corr}, we know that \(\spc\) is not in a secret-dependent region, nor in a secret-dependent function.
  By definition of \(\transform\), \(\spc{} \corr \tpc{}\), and \((\bbc, \off) \corrctx{} \tpc\), it follows that \((\bbc, \off) = (1,0)\). From \cref{hyp:sese,hyp:cfi}, we know that \(\spc'\) is also not in a secret-dependent region, nor in a secret-dependent function. By definition of \(\transform{}\), it follows that \(\tpc'\) corresponds to a slice of size 1 (at offset 0), which concludes \((\bbc, \off) \corrctx{} \tpc'\).

  \proofcase{Rule \textsc{lob-true}.} (Case \textsc{lob-false} is analogous.) In this rule, the program counters and libra context stack are modified.
  By \ref{corr:IH} and \cref{lemma:t-corr}, we know that both source and target configurations are respectively evaluating
  \qinst{$\{\varepsilon,\qs\}$}{br}{\(e\ \sloc_{t}\ \sloc_{f}\)} and \qinst{\qlo}{br}{\(e\ \bbc'\ \offt\ \offf\)} instructions.
  We also know from \ref{corr:IH}, that \(e\) evaluates to the same value in source and target.

  (Program counter) From the semantics, we have that the final program counters are \(\spc' = \offt\) and \(\tpc' = \loc + \offt\) with \(\loc = \func{next\_slice}(\tpc, \bbc, \curOff)\) is the address of the next slice. Because \qinst{\qlo}{br}{} is a terminating instruction, this next slice is located in a different basic block \(B\). By definition of \(\transform{}\), we also know that \(B\) is adjacent to the current basic block, hence from the correctness of the \libra{} stack (\ref{corr:IH}), the start address of \(B\) is \(\loc\).
  Finally, from \cref{lemma:t-corr}, we have that \(\offt + \loc \corr \spc'\), which concludes \(\spc' \corr \tpc'\).

  (\libra{} context) The \libra{} context on top of the stack is updated to \((\bbc',\offt)\). We need to show that \((\bbc',\offt) \corrctx{} \tpc'\), \ie{} that in \tprog{}, \(\tpc'\) points to offset \(\offt\) in a slice of size \(\bbc'\).
  This follows directly from \cref{lemma:t-corr}.

  \proofcase{Regular calls.} In this rule, the program counter and return stack are modified.

  (Program counter) By \ref{corr:IH} and \cref{lemma:t-corr}, we know that source and target configurations respectively evaluate
  \(\inst{call}{}\ \sloc_{\mathtt{foo}}\) and \(\inst{call}{}\ \tloc_{\mathtt{foo}}\), with \(\sloc_{\mathtt{foo}} \corr \tloc_{\mathtt{foo}}\). From the semantics, we have \(\spc' = \sloc_{\mathtt{foo'}}\) and \(\tpc' = \tloc_{\mathtt{foo}}\). Hence, \(\spc' \corr \tpc'\).

  (Return stack)
  In the source, the return address \(\spc + 1\) is added to the stack. In the target, \(\tpc + \bbc\) is added to the stack.
  To show that \(\sretstack' \corr \tretstack'\), it suffices to show \(\spc + 1 \corr \tpc + \bbc\). See proof for the rule \textsc{pc-update}, case (program counter).

  (Libra context) The \libra{} context \((1, 0)\) is added on top of the \libra{} context stack. We need to show:
  \begin{equation*}
    \ctxstack \cdot (\bbc,\off) \cdot (1,0) \corrctx{} \tretstack \cdot \tpc + \bbc \cdot \tpc'
  \end{equation*}
  From \cref{def:corrctx}, this amounts to showing:
  \begin{gather}
    (1,0) \corrctx{} \tpc' \label{corr:call:1}\\
    (\bbc,\off) \corrctx{} \tpc + \bbc \label{corr:call:2}\\
    (\bbc,\off) \corrctx{} \tretstack \label{corr:call:3}
  \end{gather}
  \begin{itemize}
    \item[(\ref{corr:call:1})] \(\tpc'\) points to the function \(\mathtt{foo}\), because \(\mathtt{foo}\) is not at folded function (by definition of \(\transform{}\)), we have \((1,0) \corrctx{} \tpc'\).
    \item[(\ref{corr:call:2})] From \ref{corr:IH}, we have \((\bbc,\off) \corrctx{} \tpc\). Moreover, because we are in the middle of a basic block, \(\tpc + \bbc\) still points to a slice (of size \bbc), at offset \off, which concludes \((\bbc,\off) \corrctx{} \tpc + \bbc\).
    \item[(\ref{corr:call:3})] \((\bbc,\off) \corrctx{} \tretstack\) directly follows from \ref{corr:IH}.
  \end{itemize}

  \proofcase{Rule \textsc{lo-call} \(\bot\).} (Case \(\top\) is analogous.) In this rule, the program counters, return stack, and \libra{} contexts are modified.

  (Program counter) By \ref{corr:IH} and \cref{lemma:t-corr}, we know that both source and target configurations evaluate \qinst{\qs}{call}{\(\bot\ \sloc_{\mathtt{foo}}\ \sloc_{\mathtt{foo'}}\)} and \qinst{\qlo}{call}{\(\bot\ \tloc_{\mathtt{foo|foo'}}\)} instructions, respectively, with
  \(\sloc_{\mathtt{foo'}} \corr \tloc_{\mathtt{foo|foo'}} + 1\);
  From the semantics, we have \(\spc' = \sloc_{\mathtt{foo'}}\) and \(\tpc' = \tloc_{\mathtt{foo|foo'}} + 1\). Hence, \(\spc' \corr \tpc'\).

  (Return stack) See the proof for regular calls, case return stack.

  (Libra context) The \libra{} context \((2, 1)\) is added on top of the \libra{} context stack. We need to show:
  \begin{equation*}
    \ctxstack \cdot (\bbc,\off) \cdot (2,1) \corrctx{} \tretstack \cdot \tpc + \bbc \cdot \tpc'
  \end{equation*}
  The proof is similar to the proof for regular calls. The only difference is the subcase \ref{corr:call:1}, which becomes
  \((2,1) \corrctx{} \tpc'\). %
  Here, \(\tpc'\) points to the first slice of the folded function \(\mathtt{foo|foo'}\) at offset \(1\) (dummy part). Additionally (from \(\transform(\sprog{})\)) the first slice of a folded function is always of size 2. Hence \((2,1) \corrctx{} \tpc'\).

  \proofcase{Rule \textsc{ret}.} In this rule, the program counters, return stack, and \libra{} contexts are modified.
  By \ref{corr:IH} and \cref{lemma:t-corr}, we know that both source and target configurations evaluate \inst{}{ret}{}.

  (Program counter) The program counter are updated to point to the address on
  the top of the return stack. From \ref{corr:IH}, these addresses are related by \(\corr\). This concludes \(\spc' \corr \tpc'\).

  (Return stack) The return address is popped from the return stack, in both source and target. Hence, \(\sretstack' \corr \tretstack'\) directly follow from \ref{corr:IH}.

  (Libra context) The previous \libra{} context is restored. We must show that it is a correct context for the next configuration (\ie{} \(\tctxstack' \corrctx{} \tretstack' \cdot \tpc'\)). Notice that under
  \cref{hyp:cfi}, \(\tctxstack'\) cannot be empty. Hence,
  \(\tctxstack' \corrctx{} \tretstack' \cdot \tpc'\) directly follows from \ref{corr:IH}.
\end{proof}

The following lemma follows from the fact that source program are safe (\cref{hyp:safe}), and that source and target semantics are deterministic.
\begin{lemma}\label{lemma:src-step}
  For all \(\sasm\) program \(\sprog{}\), number of steps \(n\), and source and target
  configurations \(\sconf{}\) and \(\tconf{}\) such that
  \(\tconf{} \simulates{} \sconf{}\):
  \begin{gather*}
    \text{if } %
    \tevalconf{\tconf{}}{\tconf{}'}{}{n}
    \text{ then }
    \sevalconf{\sconf{}}{\sconf{}'}{}{n}
    \text{ and } \sconf{}' \simulates{} \tconf{}'
  \end{gather*}
  where \(\seval{}\) is parameterized by \(\sprog{}\) and \(\teval{}\) is parameterized by  \(\transform(\sprog{})\).
\end{lemma}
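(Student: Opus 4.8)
The plan is to obtain \cref{lemma:src-step} as a consequence of the forward simulation already proved as \cref{thm:correctness-lemma}, together with determinism of the two semantics and safety of the source program (\cref{hyp:safe}). Both $\seval{}$ and $\teval{}$ are deterministic: the rule that fires is fixed by $\prog{\pc}$, the rules are mutually exclusive on this instruction, and each rule determines its successor configuration uniquely; hence, for a fixed configuration, the $n$-step evaluation---when it exists---is unique.

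First I would show that the source configuration $\sconf{}$ admits an $n$-step run. Suppose not: let the (unique, by determinism) maximal source evaluation from $\sconf{}$ have length $k < n$ and end in a configuration $\sconf{}''$ from which no source rule applies. By safety (\cref{hyp:safe}), $\sconf{}''$ is not a stuck state but a genuine final configuration (e.g., the program counter has left the domain of $\sprog{}$). Applying \cref{thm:correctness-lemma} to this $k$-step run and the related pair $\tconf{} \simulates \sconf{}$ yields $\tevalconf{\tconf{}}{\tconf{}''}{}{k}$ with $\sconf{}'' \simulates \tconf{}''$. By \cref{lemma:t-corr} and the construction of $\transform$---which preserves the domain of defined locations and maps each control-transfer instruction to one with the same ability to step---the configuration $\tconf{}''$ is also final. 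But by determinism of the target, $\tconf{}''$ is exactly the configuration reached after $k$ steps from $\tconf{}$, while the hypothesis $\tevalconf{\tconf{}}{\tconf{}'}{}{n}$ with $n > k$ requires that configuration to take a further step---a contradiction. Hence $\sevalconf{\sconf{}}{\sconf{}'}{}{n}$ for some $\sconf{}'$.

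It then remains to identify the endpoints. Applying \cref{thm:correctness-lemma} to this $n$-step source run and $\tconf{} \simulates \sconf{}$ gives a target run $\tevalconf{\tconf{}}{\tconf{}'''}{}{n}$ with $\sconf{}' \simulates \tconf{}'''$; by determinism of $\teval{}$ the $n$-step target evaluation from $\tconf{}$ is unique, so $\tconf{}''' = \tconf{}'$, and therefore $\sconf{}' \simulates \tconf{}'$, which is the claim.

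The main obstacle is the first part: pinning down the notion of a final (as opposed to stuck) configuration, and checking---via \cref{lemma:t-corr} and the shape of $\transform$---that $\simulates$ transports finality from source to target, which is exactly what rules out a source run that is ``too short''. Everything else is bookkeeping on top of the forward lemma and determinism. (Strictly, \cref{hyp:safe} is phrased for configurations reachable from an initial state, which is the context in which this lemma is invoked in the proof of \cref{thm:correctness}, so this is not a real restriction.)
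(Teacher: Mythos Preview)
Your proposal is correct and matches the paper's approach: the paper's entire proof is a one-line remark that the lemma follows from source safety (\cref{hyp:safe}) and determinism of both semantics, and your argument is precisely the standard unfolding of that remark---use safety to guarantee the source can take $n$ steps, then flip the forward simulation (\cref{thm:correctness-lemma}) via target determinism. One small slip: in your final parenthetical you say the lemma is invoked in the proof of \cref{thm:correctness}, but it is actually used in the proof of \cref{thm:security}.
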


\subsection{Security}\label{app:security}
We recall our hypothesis about the leakage model:
\begin{hypothesis}\label{hyp:leakage} We assume that:
  \begin{itemize}
    \item Secure branches and level-offset branches do not leak their outcome: \qinst{\(\{\qlo,\qs\}\)}{br}{\(c\ \_\)} \(\in \IuSafe{}\),
    \item Secure calls and level-offset calls do not reveal whether the original function or the dummy function is actually executed: \qinst{\(\{\qlo,\qs\}\)}{call}{\_} \(\in \IuSafe{}\),
    \item Normal branches leak their outcome: \inst{br}{\(c\ \_\)} \(\in \IuUnsafe{}\),
    \item Normal calls leak their target: \inst{call}{\(\loc\)} \(\in \IuUnsafe{}\),
    \item Control-flow-altering instructions belong in a distinct leakage class from each other and from non-control-flow-altering instructions.
  \end{itemize}
\end{hypothesis}

\noindent{}
We now sketch the proof of \cref{thm:security}:
\security*

\begin{proof}

  Let \(\sprog{}\) be a \(\oni{\obsfuncweak}\) program, %
  we need to prove that \(\transform(\sprog{})\) is \(\oni{\obsfuncstrong}\).  %
  To this end, we assume a pair of executions of \(\transform(\sprog{})\):
  \begin{equation*}
    \tconf{}_{0} \teval{\tobs_{0}} \dots \teval{\tobs_{n-1}} \tconf{}_{n} \qquad
    \tconf{}_{0}' \teval{\tobs_{0}'} \dots \teval{\tobs_{n-1}'} \tconf{}_{n}'
  \end{equation*}
  starting from low-equivalent configurations, \ie{} \(\tconf{}_{0} \indist{} \tconf{}_{0}'\). We need to prove that both executions produce the same leakage, \ie{} \(\tobs_{i} = \tobs_{i}'\) for all \(0 \leq i < n\). By \cref{lemma:src-step}, there exist corresponding source executions:
  \begin{equation*}
    \sconf{}_{0} \seval{\sobs_{0}} \dots \seval{\sobs_{n-1}} \sconf{}_{n} \qquad
    \sconf{}_{0}' \seval{\sobs_{0}'} \dots \seval{\sobs_{n-1}'} \sconf{}_{n}'
  \end{equation*}
  such that \(\sconf{}_{0} \indist{} \sconf{}_{0}'\) and for all \(0 \leq i \leq n\), \(\sconf_{i} \simulates \tconf_{i}\) and \(\sconf_{i}' \simulates \tconf_{i}'\).
  From \(\oni{\obsfuncweak}(\sprog{})\), we have \(\sobs_{i} = \sobs_{i}'\) for all \(0 \leq i < n\).

By the definition of source and target leakages, for \(0 \leq i < n\), we have:
  \begin{gather*}
    \tobs_{i} = \func{slice\_addr}(\tpc_{i}, \trgb{\curOff}_{i}) \cdot \obsfuncweak(\conf{\tmem_{i}, \treg_{i}, \tpc_{i}})\\
    \tobs_{i}' = \func{slice\_addr}(\tpc_{i}', \trgb{\curOff}_{i}') \cdot \obsfuncweak(\conf{\tmem_{i}', \treg_{i}', \tpc_{i}'})\\
    \sobs_{i} = \obsfuncweak(\conf{\smem_{i}, \sreg_{i}, \spc_{i}}) =
    \sobs_{i}' = \obsfuncweak(\conf{\smem_{i}', \sreg_{i}', \spc_{i}'})
  \end{gather*}
  To show that the target leakages are equal, we consider the balanceable part of the target leakage (\ie{} \(\obsfuncweak(\conf{\tmem_{i}, \treg_{i}, \tpc_{i}})\) and \(\obsfuncweak(\conf{\tmem_{i}', \treg_{i}', \tpc_{i}'})\)), and the unbalanceable part of the target leakage (\ie{} \(\func{slice\_addr}(\tpc_{i}, \trgb{\curOff}_{i})\) and \(\func{slice\_addr}(\tpc_{i}', \trgb{\curOff}_{i}')\)) separately.

  \proofcase{Balanceable leakage.} This case follows a similar pattern as compilation proofs for constant-resource transformation~\cite{DBLP:conf/csfw/BartheBHP21}.
  In particular, \(\transform{}\) is a leakage preserving
  transformation for the balanceable part of the leakage. This means that the target execution produces exactly the same observation as its corresponding source execution.
  Hence, \(\obsfuncweak(\conf{\tmem_{i}, \treg_{i}, \tpc_{i}}) = \sobs_{i}\) and \(\obsfuncweak(\conf{\tmem_{i}', \treg_{i}', \tpc_{i}'}) = \sobs_{i}'\).
  From \(\sobs_{i} = \sobs_{i}'\), we can conclude that target leakages are identical, which concludes our goal.

  \proofcase{Unbalanceable leakage.} This case follows a similar pattern as compilation proofs for constant-time preservation. In particular, we need to find a lockstep CT-simulation \wrt{} \(\simulates\), which relates source states and target states at each execution step and which yields preservation of ONI. In compilation proofs for constant-time preservation, this CT-simulation is usually simple program counter equivalence. %
  In our case, however, program counters of low-equivalent executions can diverge. Instead, our CT-simulation relates executions pointing to the \emph{same program slice}.

  Indeed, from \cref{hyp:leakage,hyp:sese}, we get that for all \(0 \leq i \leq n\), the control-flows of \(\sconf_{i}\) and \(\sconf_{i}'\) belong to the same slice. More precisely:
  \begin{itemize}
    \item \(\sconf_{i}\) is inside a secret dependent region iff \(\sconf_{i}'\) is inside a secret-dependent region,
    \item when not in a secret-dependent region, \(\spc_{i} = \spc_{i}'\), and
    \item when in a secret-dependent region, \(\spc_{i}\) and \(\spc_{i}'\) belong
    to the same level and are at the same offset in their respective basic
    blocks.
  \end{itemize}
  We remark that \(\transform{}\) folds secret-dependent regions such that instructions from the same level that are at the same offset in their respective basic blocks are placed in the same slice.

  Hence, \(\sconf_{i} \simulates \tconf_{i}\) and \(\sconf_{i}' \simulates \tconf_{i}'\), we get that \(\func{slice\_addr}(\tpc_{i}, \trgb{\curOff}_{i})\) and \(\func{slice\_addr}(\tpc_{i}', \trgb{\curOff}_{i}')\) correctly return the base address of their current slices and that these slice are the same in both target executions, which concludes our goal and the proof.

\end{proof}

\else

\section{Folding Transformation}\label{app:transformation}
This section details our folding transformation \(\transform\) from \(\sasm\) programs to \(\tasm\) programs.

\medskip\noindent\textbf{Notations.} %
In the following, we let \(B[i]\) denote the \(i\)\textsuperscript{th} instruction in basic block \(B\). We also let \(B_{\varepsilon}\) denote the empty basic block.
By definition of a CFG, all program labels point to the beginning of a basic block and we write \(B_{\loc}\) to denote the basic block corresponding to label \(\loc\).
In a level \(L\), we assume that basic blocks are indexed, and we let \(\bbidx{L, B}\) denote the index of basic block \(B\) in level \(L\).

We let \(S = (B_{entry}, B_{exit}, \{B_{1}, \dots, B_{n}\})\) denote a secret dependent region \(\{B_{1}, \dots, B_{n}\}\), with entry block \(B_{entry}\) and exit block \(B_{exit}\). We let \(\entry{S}\) and \(\exit{S}\) return the entry and exit basic block of the region, respectively.
We let \(F = (\loc_{f}, \{B_{1}, \dots, B_{n}\})\) denote a function, defined by its entry label \(\loc_{f}\) and its sequence of basic blocks \(\{B_{1}, \dots, B_{n}\}\).
Finally, we let \(\func{level\_struct}(S)\) return the level structure of a secret dependent region $S$ (excluding \(B_{exit}\) and \(B_{exit}\)) and \(\func{level\_struct}(F)\) return the level structure of a function.

\medskip\noindent\textbf{Secret-dependent branches.} %
We define a function \FInsertLob{$L_{i}$, $L_{i+1}$} that rewrites the terminating instructions from a level \(L_{i}\) (\cref{alg:libra-insert-lob}).
It replace all branches in the level with a level offset branch
\qinst{\qlo}{br}{{\(e\ \offt\ \offf\ \bbc\)}} where \(\bbc\) is the basic block count of the next level (\(|L_{i+1}|\)), and \(\offt\) (resp. \(\offf\)) is the basic block number corresponding to \(\ell_{t}\) (resp. \(\ell_{f}\)) in \(L_{i+1}\).

\begin{algorithm}
  \protect\caption{Rewriting of terminating instructions.}\label{alg:libra-insert-lob}
\Fn{\FInsertLob{$L_{i}$, $L_{i+1}$}}{
\KwData{Level to modify \(L_{i} = \{B_{0} \dots B_{n}\}\)}
\KwResult{Modified level \(L_{i}' = \{B_{0}' \dots B_{n}'\}\)}
\(\bbc \gets |L_{i+1}|\); %
$len \gets |B_{0}|$;
$\{B_{0}' \dots B_{n}'\} \gets \{B_{0} \dots B_{n}\}$\;
\For{$B_{i} \in L_{i}$}{
  \Assert{} \(|B_{i}| = len\)\;
  \Switch{\(B_{i}[len-1]\)}{
    \Case{\qinst{\(\{\varepsilon,\qs\}\)}{br}{{\(e\ \ell_{t}\ \ell_{f}\)}}}{
  \Assert{} \(B_{\ell_{t}}, B_{\ell_{f}} \in L_{i+1}\)\;
  $\offt, \offf \gets \bbidx{L_{i+1}, B_{\ell_{t}}}, \bbidx{L_{i+1}, B_{\ell_{f}}}$\;
  $B_{i}'[len-1] \gets %
  \qinst{\qlo}{br}{{$e\ \bbc\ \offt\ \offf$}}
  $\;
    }
    \lCase{\inst{ret}{}}{$B_{i}'[len-1] \gets \inst{ret}{}$}
    \lOther{\Assert{False}}
  }
}
\Return{\(L_{i}' = \{B_{0}' \dots B_{n}'\}\)}
}
\end{algorithm}

Next, we define a function \FFoldLevel{$L$}, which takes a level \(L = \{B_{0} \dots B_{n}\}\) and returns a single folded basic block \(B'\) interleaving the instructions of \(B_{0} \dots B_{n}\) (\cref{alg:libra-fold-level}).
\begin{algorithm}
  \protect\caption{Folding of the basic blocks of a level $L$.}\label{alg:libra-fold-level}
\Fn{\FFoldLevel{$L$}}{
\KwData{Level to fold \(L = \{B_{0} \dots B_{n}\}\)}
\KwResult{Basic block \(B'\) (folded level)}
$\bbc{} \gets |L|$;
$len \gets |B_{0}|$;
$B' \gets B_{\varepsilon}$\;
\For{$B_{i} \in L$}{
  \Assert{} \(|B_{i}| = len\)\;
  \lFor{$j \in \ocinterval{0,len}$}{
    $B'[j \times \bbc + i] \gets B_{i}[j] $
  }
}
\Return{\(B'\)}
}
\end{algorithm}

To fold secret-dependent branches, we define a function \FFoldRegion{$S$} that returns the folded version of a secret dependent region \(S\) (\cref{alg:libra-transformation-branches}). The function first computes the level structure of \(S\), rewrite the terminators, and fold the level structure.

\begin{algorithm}
  \protect\caption{Folding of a secret dependent region $S$.}\label{alg:libra-transformation-branches}
\Fn{\FFoldRegion{$S$}}{
\KwData{Secret-dependent region $S$}
\KwResult{Folded secret-dependent region $S'$}
$L_{1}\dots L_{n} \gets \func{lvl\_struct}(S)$;
$L_{0} \gets \entry{S}$;
$L_{n+1} \gets \exit{S}$\;
\For{$L_{i} \in L_{0} \dots L_{n}$}{
  $L_{i}' \gets \FInsertLob(L_{i}, L_{i+1}, \textit{False})$\;
  $B_{i}' \gets \FFoldLevel(L_{i}')$\;
}
$S' \gets (B_{0}', \exit{S}, \{ B_{1}' \dots B_{n}' \})$\;
\Return{\(S'\)}
}
\end{algorithm}

\medskip\noindent\textbf{Function calls.} To fold a function \(F\) with its dummy function \(F'\), we define a function \FFoldFunct{$F, F', \loc_{\textit{ff'}}$} (\cref{alg:fold-functions}) that returns a folded function with entry label $\loc_{\textit{ff'}}$.
First, the algorithm computes the union of the level structure of \(F\) and \(F'\). Then, it replaces branches with level-offset branches. Finally, it folds basic blocks according to the level structure and returns the final function, defined by the set of folded basic blocks and entry label \(\loc_{\textit{ff'}}\).

\begin{algorithm}
  \protect\caption{Folding of a function \(F\) with its dummy version \(F'\). The union of level structures is defined as the componentwise union of basic blocks.}\label{alg:fold-functions}
\Fn{\FFoldFunct{$F,F', \loc_{\textit{ff'}}$}}{
\KwData{Functions $F$, $F$}
\KwResult{Folded function $F''$}
$L_{0}\dots L_{n} \gets \func{level\_struct}(F) \cup \func{level\_struct}(F')$\;
$L_{n+1} \gets B_{\varepsilon}$\;
\For{$L_{i} \in L_{0} \dots L_{n}$}{
  $L_{i}'' \gets \FInsertLob(L_{i}, L_{i+1}, \text{True})$\;
  $B_{i}'' \gets \FFoldLevel(L_{i}'')$\;
}
\Return{\(F'' = (\loc_{\textit{ff'}}, \{B_{0}'' \dots B_{n}'')\}\)}
}
\end{algorithm}

Finally, we define a function \FInsertLocall{$B$}, which replaces all secret dependent calls $\qinst{\qs}{call}{{\(b\ \ell_{\textit{f}}\ \ell_{\textit{f'}}\)}}$ with a level-offset call $\qinst{\qlo}{call}{{\(b\ \loc_{\textit{ff'}}\)}}$ where \(\loc_{\textit{ff'}}\) is the label of the folded function.

\medskip\noindent\textbf{Final folding transformation.}
The final folding transformation \(\tprog{} = \transform(\sprog{})\) performs the following steps:
\begin{enumerate*}
  \item For each pair of function/dummy $F_{\ell_{f}}/F'_{\ell_{f'}}$ that can
        be called from a secret-dependent region (annotated in \sasm{} with a secret dependent call), \(\transform\) computes the folded function
        \(F_{\loc_{\textit{ff'}}} =\) \FFoldFunct{$F_{\ell_{f}}, F'_{\ell_{f'}}, \loc_{\textit{ff'}}$} and places it at location \(\loc_{\textit{ff'}}\) in \tprog{}. Functions $F_{\ell_{f}}, F'_{\ell_{f'}}$ are also included in \tprog{} if they can be called with ``normal'' calls.
  \item For each (outermost) secret-dependent region \(\srcu{S}\) (annotated in
        \(\sasm{}\) programs by a secret-dependent branch), \(\transform{}\) computes the folded region
        \(\trgb{S} = \) \FFoldRegion{$\srcu{S}$}, and replaces the
        original region $\srcu{S}$ with \(\trgb{S}\). Note that for a given secret-dependent region, our algorithm folds its entire level structure (which includes nested branches). Hence, for nested secret-dependent branches, only the outermost branch need to be considered and the nested branches will be automatically folded.
  \item All other basic blocks are directly copied from \(\sprog{}\) to \(\tprog{}\);
  \item Secret dependent calls in \tprog{} are replaced by \(\FInsertLocall{$B$}\);
  \item In the final code memory layout of \(\tprog{}\), the folded levels
        are placed adjacent to each other, in level order: the basic block corresponding to \FFoldLevel{$L_{i+1}$} directly follows the basic block corresponding to \FFoldLevel{$L_{i}$}. The exit block of a secret dependent region is also placed just after the last folded level.
\end{enumerate*}

\fi

\end{document}